\documentclass[11pt]{article}

\usepackage[utf8]{inputenc}
\usepackage{amsmath,amssymb,amsthm}
\usepackage{geometry}
\usepackage{hyperref}
\usepackage{graphicx}
\usepackage{xcolor}
\usepackage{setspace}
\usepackage{enumitem}
\usepackage{booktabs}
\usepackage{natbib}
\usepackage{microtype}
\usepackage{tikz,pgfplots}
\pgfplotsset{compat=1.18}

\geometry{a4paper, margin=1in}
\onehalfspacing

\title{A Model of Artificial Jagged Intelligence}
\author{Joshua S. Gans\thanks{Rotman School of Management, University of Toronto and NBER. Research assistance from Refine.ink and ChatGPT 5.2 and funding from the SSHRC are acknowledged. Thanks to Tom Cunningham for helpful comments. Responsibility for all errors remains my own. All correspondence to joshua.gans@utoronto.ca.}}
\date{\today}

\newtheorem{theorem}{Theorem}
\newtheorem{proposition}{Proposition}
\newtheorem{lemma}{Lemma}
\newtheorem{definition}{Definition}
\newtheorem{assumption}{Assumption}
\newtheorem{corollary}{Corollary}

\newcommand{\E}{\mathbb{E}}
\newcommand{\Var}{\mathrm{Var}}
\newcommand{\R}{\mathbb{R}}
\newcommand{\GP}{\mathcal{GP}}
\newcommand{\1}{\mathbf{1}}
\newcommand{\CV}{\mathrm{CV}}

\begin{document}

\maketitle

\begin{abstract}
\noindent Generative AI systems often display highly uneven performance across tasks that appear ``nearby'': they can be excellent on one prompt and confidently wrong on another with only small changes in wording or context. We call this phenomenon \emph{Artificial Jagged Intelligence} (AJI). This paper develops a tractable economic model of AJI that treats adoption as an information problem: users care about \emph{local} reliability, but typically observe only coarse, global quality signals. In a baseline one-dimensional landscape, truth is a rough Brownian process, and the model ``knows'' scattered points drawn from a Poisson process. The model interpolates optimally, and the local error is measured by posterior variance. We derive an adoption threshold for a blind user, show that experienced errors are amplified by the inspection paradox, and interpret scaling laws as denser coverage that improves average quality without eliminating jaggedness. We then study mastery and calibration: a calibrated user who can condition on local uncertainty enjoys positive expected value even in domains that fail the blind adoption test. Modelling mastery as learning a reliability map via Gaussian process regression yields a learning-rate bound driven by information gain, clarifying when discovering ``where the model works'' is slow. Finally, we study how scaling interacts with discoverability: when calibrated signals and user mastery accelerate the harvesting of scale improvements, and when opacity can make gains from scaling effectively invisible. \textit{Journal of Economic Literature} Classification Numbers: D83, D81, O33, L86.
\medskip

\noindent \textit{Keywords}: generative AI, adoption, calibration, learning, knowledge density, scaling.
\end{abstract}

\newpage

\section{Introduction}\label{sec:intro}

Generative AI is being integrated into knowledge work with unusual speed. The novelty is not that the technology is imperfect; most technologies (and, indeed, humans) are. The novelty is that the imperfections are often \emph{local} and \emph{opaque}. Users routinely encounter a pattern that is hard to reconcile with an ``average accuracy'' mindset: a model can produce a crisp, correct answer on one prompt and then produce a plausible, confidently incorrect answer on a nearby-looking prompt. Small changes in phrasing, context length, or formatting can flip performance in ways that are difficult to predict ex ante. This unevenness is increasingly salient because it is precisely in high-value settings (legal, medical, policy, finance, research) that users care most about avoiding rare-but-costly failures.

Empirical evidence is consistent with this local, uneven pattern. In an online experiment on professional writing tasks, access to ChatGPT increased average productivity and quality \citep{noyzhang2023experimental}. In a field setting with customer-support agents, a generative-AI assistant raised average productivity with substantial heterogeneity across workers and problem types \citep{brynjolfsson2025generative}. Most directly, a field experiment with management consultants characterises a ``jagged technological frontier'' in which AI substantially improves performance on some realistic tasks but degrades performance on others that appear similar \citep{dellacqua2023jagged}. These findings motivate a theory in which adoption depends not only on mean quality but on the discoverability of local reliability.

This paper offers a simple model for that phenomenon, which has been termed \emph{Artificial Jagged Intelligence} (AJI).\footnote{Andrej Karpathy used the term ``jagged intelligence'' to describe uneven performance across tasks; see \url{https://x.com/karpathy/status/1816531576228053133}. We use the term in that spirit, focusing on how uneven local reliability shapes adoption, calibration, and learning.} The modelling objective is not to reproduce engineering details of any particular system. It is to supply a clean economic environment in which three facts can coexist:
\begin{enumerate}[label=(\roman*)]
\item \textbf{Local heterogeneity:} performance is uneven across a task space, with ``pockets'' of high competence and ``holes'' of high error;
\item \textbf{Opacity:} users do not initially observe where those pockets and holes are; and
\item \textbf{Discoverability matters:} adoption and productivity depend not only on average quality, but on whether users can locate the reliable regions and avoid the unreliable ones.
\end{enumerate}
In this sense, AJI is best viewed as an \emph{information problem} rather than an engineering bug. A model can be very good on average and still be difficult to use productively if local reliability is jagged and hard to infer.

\subsection{Why economists should care}

Economists have long studied technology adoption when quality is uncertain and revealed through use. In classic work on experience goods, the key friction is that buyers learn quality by consuming the good \citep{nelson1970consumer}. In classic work on adverse selection, the key friction is that quality is privately observed, producing market unravelling \citep{akerlof1970lemons}. Generative AI introduces a distinct but related friction: quality uncertainty is not only across products or firms but \emph{within} a product, \emph{across tasks}. The relevant object for a user is not ``the model's accuracy'' but the model's reliability at the task the user happens to face.

This within-tool heterogeneity has immediate economic implications. First, adoption can be excessively conservative when a tool's local reliability is hard to discover: even if the tool is valuable on a nontrivial subset of tasks, users may avoid it if they fear rare catastrophic failures. Second, once adopted, users may misallocate tasks to the model, delegating where the model is unreliable and failing to delegate where it is reliable, which creates productivity losses and trust dynamics. Third, because mastery and interface design affect discoverability, investments in calibration, workflow design, and user training are economically complementary to raw model improvements. This resonates with a broader view of major technological change as requiring complementary innovations and organisational adaptation \citep{bresnahan1995gpt}.

\subsection{A modelling strategy}

The modelling challenge is to generate jagged local performance without introducing unnecessary complexity. Our strategy has two ingredients.

\smallskip
\noindent\textbf{Knowledge points.} We represent what the model ``knows'' as a set of tasks at which it can answer correctly. This is obviously stylised, but it captures a core idea: training, retrieval, and finetuning produce dense competence in some regions and sparse competence in others. We model knowledge points as a Poisson point process with intensity $\lambda$, which is a reduced-form measure of \emph{coverage} and is the natural object to link to AI ``scale''. Importantly, a user does not observe where these points are.

This knowledge-point representation is closely related to the ``spatial'' view of knowledge in \citet{carnehl2025quest}. Our emphasis differs: we treat the primary economic friction as the user's inability to observe local reliability and the resulting role for calibration, mastery, and interface design.

\smallskip
\noindent\textbf{A rough truth landscape.} Between knowledge points, the model must interpolate. We represent the mapping from tasks to correct answers as a rough stochastic process (Brownian motion in the baseline). This makes local extrapolation risky and yields a closed-form posterior distribution between knowledge points. Brownian motion is chosen not because the world is literally Brownian, but because it is the simplest ``rough'' process with a Markov property that makes interpolation analytically transparent.\footnote{Methodologically, the paper imports tools from Gaussian process learning and bandit design \citep{rasmussen2006gaussian,srinivas2010gaussian,vakili2021information}. Thus, we follow the work begun in economics by \cite{callander2011searching}; see \cite{bardhi2026learning} for a review.}

Together, these assumptions define a local mean-squared error at each task. Jaggedness emerges because error is small near knowledge points and large in the middle of gaps, and the gap lengths are random.

\subsection{Key results and intuition}
To fix ideas, consider a simple toy model we call the ``Bridge of Knowledge.'' Imagine the AI's task space as a river, and its knowledge points as pylons supporting a bridge. The model interpolates between pylons using planks that sag under weight: short spans are stiff and safe, while long spans are flexible and dangerous. Suppose the bridge consists of two repeating segments: a short 2-meter span and a long 8-meter span.
An engineer evaluating the bridge might calculate the \textit{average gap} as 5 meters. But a user walking across experiences something different. They spend only 20\% of their time on the safe span and 80\% on the dangerous one, so the \textit{experienced average} is 6.8 meters. This is the \textbf{inspection paradox}: users are statistically overexposed to a model's weaknesses. The holes matter more than their frequency suggests.
This simple arithmetic clarifies the economics of adoption and mastery:
\begin{itemize}[leftmargin=*]
\item \textbf{Blind adoption.} A user who cannot observe local reliability must decide ex ante whether to rely on the tool.
If their safety threshold requires gaps under 6 meters, they will correctly refuse - even though the naive benchmark
average suggests the bridge is safe. Evaluation based on averages misleads because adoption depends on \emph{experienced}
risk, which the inspection paradox amplifies.\footnote{In the model outlined below, a ``maximum tolerated gap'' rule maps into the stakes cutoff $q$: under the Brownian-bridge variance \eqref{eq:bridge-variance}, $\E[\sigma^2\mid X]=X/6$, so blind adoption reduces to $q\ge \E[\sigma^2(x)]=\E[X^\ast]/6$.}
\item \textbf{Scaling.} Doubling the density of pylons shrinks the gaps to 1 and 4 meters. But relative jaggedness remains identical: the user still spends 80\% of their time on the longest segments. Scaling improves the \emph{level} of reliability but not the \emph{shape} of heterogeneity. This helps explain why improvements in headline benchmarks coexist with persistent ``surprising'' failures.
\item \textbf{Calibration.} Now, suppose the user can see where the pylons are. They cross the safe 2-meter segments and abstain from the dangerous 8-meter ones. This transforms a risky gamble into a tool with positive expected value---even when blind adoption would be irrational. Calibration unlocks pockets of competence without paying the full cost of the holes.
\item \textbf{Mastery.} Calibration is a benchmark. In practice, users must \emph{learn} the reliability map through experience. We show that when the task space is high-dimensional, or the reliability landscape is rough, this learning is slow even for sophisticated users. There is a speed limit on mastery.
\item \textbf{Complements and substitutes.} Below the blind-adoption threshold, scaling and calibration are complements: scaling alone does not induce adoption, but calibration unlocks value. Above the threshold, they become substitutes: when the model is already good enough for blind use, the marginal benefit of calibration diminishes.
\end{itemize}
This paper makes four contributions to the nascent economics of foundation models.
First, we provide a \textbf{microfounded model of jagged reliability}. A parsimonious ``knowledge-point'' framework generates local error that varies sharply across tasks. This delivers a concrete object---the posterior variance at each task location---that operationalises the informal notion that AI capabilities are ``jagged'' \citep{dellacqua2023jagged}.
Second, we characterise \textbf{adoption under opacity}. By embedding the inspection paradox in an adoption decision, we derive a simple threshold rule and show formally why average benchmark performance can be a poor guide to adoption when local failures are salient. This contributes to the growing literature on AI evaluation \citep{burnell2023rethink} by highlighting a systematic wedge between measured and experienced reliability.
Third, we derive a \textbf{learning-rate bound on mastery}. Modelling mastery as Gaussian process regression over a latent reliability map, we show that worst-case uncertainty declines at a rate governed by information gain. This formalises when ``learning where the model works'' is slow and complements empirical work on human-AI collaboration \citep{lai2023selective}.\footnote{In this regard, it embeds a feature of technology adoption studied by \cite{rosenberg1982using} called ``learning by using.''}
Fourth, we characterise \textbf{complementarities between scaling and discoverability}. The analysis clarifies when interface and governance designs---similarity cues, uncertainty estimates, abstention mechanisms, provenance---are substitutes for versus complements to improvements in raw model performance. This informs both product design and regulatory strategy.

Motivated by a growing empirical literature on the productivity effects of generative AI in controlled and field settings
\citep{noyzhang2023experimental,brynjolfsson2025generative,dellacqua2023jagged}, the paper contributes at the intersection
of three theoretical literatures. First, it connects generative AI adoption to the economics of quality uncertainty and experience goods \citep{nelson1970consumer,akerlof1970lemons}. The novelty is within-tool heterogeneity: local reliability varies across a task space, so ``average quality'' is not the relevant object for many decisions. Second, it relates to work on learning and selection in technology use \citep{jovanovic1982selection}. In our setting, learning is not about a scalar quality parameter but about a function that maps tasks to reliability. The cost of learning depends on the geometry of the task space and on what similarity structure users can exploit. Third, it fits into the broader perspective that general-purpose technologies require complementary investments and organisational change \citep{bresnahan1995gpt}. AJI emphasises a specific complement: \emph{discoverability} tools (calibration, provenance, similarity cues) and mastery capital (workflows and training) that make selective delegation feasible.

The paper builds on the spatial knowledge framework of \citet{carnehl2025quest} but shifts the object of interest from
``average performance'' to the \emph{distribution of local reliability across tasks} and to the user's induced exposure to the sparsest regions. This generates a distinct set of economic implications: (i) an inspection-paradox wedge between gap-uniform benchmarks and task-uniform experience; (ii) a blind-adoption threshold pinned down by second moments of coverage gaps; and (iii) sharp complementarities between scale, regularity, calibration/abstention interfaces, and mastery investments (workflows, prompt libraries, and exploration).

\begin{table}[t]
\centering
\caption{What AJI adds relative to ``average performance'' views}
\label{tab:positioning_aji}
\small
\begin{tabular}{p{0.33\linewidth}p{0.30\linewidth}p{0.30\linewidth}}
\toprule
 & Benchmark-average view & AJI view \\
\midrule
Object & Mean score on a suite & Reliability as a function of tasks \\
Exposure & Representative benchmark item & Length-biased toward sparse regions \\
Main friction & Average capability & Opacity of where errors concentrate \\
Key wedge & ``Model vs benchmark'' & Benchmark vs \emph{experienced} risk \\
Levers & Scale & Scale + regularity + calibration + mastery \\
\bottomrule
\end{tabular}
\end{table}

Section \ref{sec:model} introduces the baseline AJI model, derives the inspection-paradox wedge in experienced error, and characterises blind adoption. Section \ref{sec:scaling} interprets scaling laws as denser knowledge coverage and explains why mean quality can improve without eliminating jaggedness. Section \ref{sec:complements} studies calibration and regularity, and characterises returns to scale when jaggedness is (partially) discoverable. Section \ref{sec:reasoning} introduces reasoning as an optional reliability-enhancing mode and derives cutoff policies for when to reason. Section \ref{sec:mastery} models mastery as learning a reliability map and derives a learning-rate bound that highlights an abstention trap. Section \ref{sec:extensions} discusses extensions and applications.

\section{The AJI Model and Adoption}\label{sec:model}

To understand the economic implications of Artificial Jagged Intelligence (AJI), we require a model that captures the spatial correlation of knowledge and the resulting variability in performance. We adapt the framework established by \citet{carnehl2025quest} (CS), which provides a tractable model of knowledge representation. We then introduce the critical element defining the AJI context: imperfect information regarding the specific realisation of that knowledge.

This section builds the baseline model and derives two core objects: expected experienced error and a blind-adoption threshold. Along the way, we clarify how the primitives map to the generative-AI context.

\subsection{Task space and similarity}

Let the task space be a one-dimensional domain $\mathcal{Z}\subset \R$. A user draws a task $x\in \mathcal{Z}$ and wants the correct scalar answer $Y(x)\in \R$. In practice, tasks are questions, prompts, documents, and contexts that live in a high-dimensional space. The one-dimensional restriction is for closed-form results.\footnote{The one-dimensional restriction is used for closed-form interpolation risk (Brownian bridge variance). The underlying
inspection-paradox mechanism is not inherently one-dimensional: in $d$ dimensions, Poisson coverage induces a Voronoi
tessellation and task-uniform sampling size-biases cell volumes (Lemma~\ref{lem:voronoi_size_bias}). Translating cell
geometry into $\sigma^2(x)$ depends on the kernel/covariance structure, but the core wedge between benchmark weighting
and user experience persists.} It can be interpreted as either (i) an analysis along a one-dimensional manifold of tasks within a workflow, or (ii) a reduced-form representation of distance induced by an embedding in which we restrict attention to variation along a single salient direction.\footnote{Section \ref{sec:mastery} returns to the role of dimension and smoothness when users learn a reliability map.}

The fundamental assumption is that answers to different questions are related: knowing the answer to one question provides information about the answers to similar questions. We model this relationship with a stochastic process.

\begin{assumption}[Brownian landscape]\label{ass:brownian}
The truth function $Y: \mathcal{Z} \to \R$ is a (driftless) Brownian motion in the coordinate $x$.\footnote{Formally, for $x_1<x_2$, $Y(x_2)-Y(x_1)\sim\mathcal{N}(0, x_2-x_1)$ with independent increments. See \citet{karatzas1991brownian}.}
\end{assumption}

\noindent A Brownian motion (or Wiener process) is a continuous stochastic process where the increments are independent and normally distributed. Specifically, the difference in answers between two questions, $Y(x_2) - Y(x_1)$, is normally distributed with a variance equal to the distance $|x_2 - x_1|$. This captures two essential features: (i) \textit{Local Correlation}: answers to nearby questions are highly correlated; and (ii) \textit{Accumulating Uncertainty}: the potential difference between answers grows as the questions become more dissimilar. Assuming a Brownian motion is deliberately ``rough'': it is continuous but nowhere differentiable. This captures a central AJI intuition: even small ``moves'' in task space can produce large changes in the correct output. The Brownian assumption is also analytically convenient: conditional on endpoints, the process is a Brownian bridge with a simple variance formula. The goal is not literal realism but a tractable benchmark that makes local interpolation risky in a transparent way.

\subsection{Data and AI prediction}

The artificial intelligence (AI) model utilises a set of data (i.e., points in the task and answer spaces) under the following assumptions.

\begin{assumption}[Knowledge points]\label{ass:poisson}
The set of knowledge points (or data) $\{x_i\}\subset \mathcal{Z}$ is distributed as a homogeneous Poisson point process with intensity $\lambda>0$. Conditional on the set of points, the model observes $\{(x_i, Y(x_i))\}$.
\end{assumption}

\noindent When presented with a novel question $x \notin \{x_i\}$, the AI (aka the LLM) generates a conjecture based on the set of knowledge points. If $x$ falls between two adjacent knowledge points $x_i$ and $x_{i+1}$ (a gap of length $X_i = x_{i+1} - x_i$), the AI interpolates. The AI's best prediction is the conditional mean, $\hat Y(x)$, which is a linear interpolation between $Y(x_i)$ and $Y(x_{i+1})$. The reliability of this prediction is captured by the conditional variance $\sigma^2(x)$. The Poisson assumption is a reduced-form representation of irregular coverage. In the AI context, $\lambda$ can be interpreted as the density of training support, memorised exemplars, retrieval hits, or other mechanisms that make the model reliable in some neighbourhoods of the task space. We focus on a stationary benchmark so that a single parameter captures coverage; Section \ref{sec:scaling} interprets scaling as an increase in $\lambda$.

The model produces the posterior mean prediction:
\[
\hat Y(x) \equiv \E\!\left[ Y(x)\mid \{(x_i, Y(x_i))\}_{i} \right].
\]
Under Assumption \ref{ass:brownian}, for a task $x$ lying between two consecutive knowledge points $x_i < x < x_{i+1}$, the conditional distribution of $Y(x)$ is a Brownian bridge. The posterior mean is a linear interpolation, and the posterior variance is
\begin{equation}\label{eq:bridge-variance}
\sigma^2(x) \equiv \Var\!\left(Y(x)\mid \{(x_i, Y(x_i))\}_{i}\right)
= \frac{(x-x_i)(x_{i+1}-x)}{x_{i+1}-x_i}.
\end{equation}
In the baseline, the AI is Bayes-optimal given its information: it outputs the best mean-squared prediction. Then $\sigma^2(x)$ is not an arbitrary ``error rate'' but a well-defined object: the irreducible uncertainty at $x$ given what the model knows. 

\subsection{User payoffs and information}

A user can either (i) use the AI output or (ii) take an outside option. We normalise payoffs so that, conditional on using the AI at task $x$, expected utility equals a benefit term minus a loss proportional to mean-squared error:
\begin{equation}\label{eq:utility}
U(x) = 1 - \frac{\sigma^2(x)}{q},
\end{equation}
and the outside option yields $0$. \cite{carnehl2025quest} provide a microfoundation for this. Let a correct answer be worth $B$, the outside option be worth $B_0$, and let mean-squared error reduce payoff by $c\,\sigma^2(x)$. Then \eqref{eq:utility} is an affine normalisation of $B-B_0-c\sigma^2(x)$ and $q \equiv (B-B_0)/c$ is a \emph{stakes} parameter. Higher $q$ corresponds to low stakes or easy verification; lower $q$ corresponds to high-stakes decisions or costly verification.\footnote{The parameter, $q$, can also be seen as a user-specific proxy for judgment (that is, knowledge of payoffs and the costs of errors); see \cite{agrawal2025economics}.}

The key assumption made here is that AI users do not observe $\sigma^2(x)$ at the task they face. They may observe global benchmarks or anecdotal evidence, but not the local reliability map. This motivates the blind adoption case below. This captures the core challenge of AJI: the user cannot ascertain the quality of the AI's answer to a specific question $x$ because they do not know where the nearby knowledge gaps are. 

\subsection{The inspection paradox and experienced error}

Let $X$ denote the distance between consecutive knowledge points. Recall that, under Assumption \ref{ass:poisson}, $X\sim \text{Exponential}(\lambda)$ with $\E[X]=1/\lambda$. By contrast, a user's task is (approximately) uniformly located in the domain. Conditional on a realised partition into gaps, a randomly drawn task is more likely to land in longer gaps. This is the inspection paradox.

\begin{proposition}[Inspection paradox for gaps]\label{prop:inspection_paradox}
Let $X^*$ be the length of the gap containing a uniformly drawn task location. Then $X^*$ has the length-biased density
\[
f_{X^*}(x) = \lambda^2 x e^{-\lambda x}, \qquad x\ge 0,
\]
i.e.\ $X^*\sim \text{Gamma}(2,\lambda)$ and $\E[X^*]=2/\lambda$.
\end{proposition}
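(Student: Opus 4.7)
The strategy is standard size-biasing: a task drawn uniformly in the domain lands in any given gap with probability proportional to that gap's length, so the length of the gap it hits is the \emph{length-biased} version of a generic gap. Applying size-biasing to an Exponential distribution immediately yields a Gamma.

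First, I would recall the primitive. By Assumption~\ref{ass:poisson}, consecutive inter-point distances $\{X_i\}$ are i.i.d.\ $\text{Exponential}(\lambda)$ with density $f_X(x)=\lambda e^{-\lambda x}$ and $\E[X]=1/\lambda$. Next, I would formalise the sampling scheme by working on a bounded window $[0,L]\subset\mathcal{Z}$ containing $N_L$ knowledge points, where $N_L/L\to\lambda$ almost surely as $L\to\infty$. Conditional on the realised gaps $X_1,\dots,X_{N_L}$, a task drawn uniformly on $[0,L]$ lies in gap $i$ with probability $X_i/L$, so the total probability of landing in a gap of length in $[x,x+dx]$ is proportional to $x\,f_X(x)\,dx$. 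Normalising by the mean gap length and passing to the limit $L\to\infty$ (using the strong law so that the empirical gap distribution converges to $f_X$) gives the length-biased density
\[
f_{X^*}(x)=\frac{x\,f_X(x)}{\E[X]}=\frac{x\cdot \lambda e^{-\lambda x}}{1/\lambda}=\lambda^2 x e^{-\lambda x},
\]
which is the $\text{Gamma}(2,\lambda)$ density. The mean follows from $\E[X^*]=\int_0^\infty x\cdot \lambda^2 x e^{-\lambda x}\,dx=2/\lambda$.

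As a cross-check via Palm calculus, I would fix a query location $x_0$ and use memorylessness of the Poisson process: the forward spacing (distance from $x_0$ to the next knowledge point to the right) is $\text{Exponential}(\lambda)$, and by time-reversal the backward spacing is an independent $\text{Exponential}(\lambda)$. Since $X^*$ is their sum, it is $\text{Gamma}(2,\lambda)$, confirming the density.

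The main obstacle is conceptual rather than computational: making ``uniformly drawn task location'' precise on a non-compact domain. I would handle this either by the windowing argument above, or by modelling queries as an independent stationary point process and invoking the Palm distribution of the knowledge process seen from a typical query, which by Slivnyak's theorem for Poisson processes reduces to the memoryless calculation. Both routes deliver the same length-biased formula, so the choice is essentially one of exposition.
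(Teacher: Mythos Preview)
Your proposal is correct and follows essentially the same approach as the paper: the paper's proof simply notes that gap lengths are $\text{Exponential}(\lambda)$ with mean $1/\lambda$, that a uniform location lands in a gap of length $x$ with probability proportional to $x f_X(x)$, and then normalises to obtain $f_{X^*}(x)=\lambda^2 x e^{-\lambda x}$. Your windowing/limit argument and the Palm-calculus cross-check (forward plus backward exponential spacings) add rigor and an alternative confirmation, but the core length-biasing step is identical to the paper's.
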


\begin{proof}
For a stationary Poisson process, gap lengths have density $f_X(x)=\lambda e^{-\lambda x}$ and mean $\E[X]=1/\lambda$. A uniform location lands in a gap of length $x$ with probability proportional to $x f_X(x)$, yielding
\[
f_{X^*}(x)=\frac{x f_X(x)}{\E[X]}=\frac{x(\lambda e^{-\lambda x})}{1/\lambda}=\lambda^2 x e^{-\lambda x}.
\]
This is Gamma$(2,\lambda)$ with mean $2/\lambda$.
\end{proof}

\noindent The typical gap is $1/\lambda$, but the typical task experiences a gap of $2/\lambda$ because longer gaps occupy more of the space. This is a precise sense in which users are ``more likely to be where the model is extrapolating'' than a naive calculation would suggest. This means that the experienced gap length is double the average gap length. This captures the penalty imposed by jaggedness: irregularity increases the average error experienced by users because they disproportionately encounter the gaps where the LLM performs poorly.

Next, conditional on a gap of length $X$, the location of questions within the gap is uniform. Using \eqref{eq:bridge-variance}, one obtains a clean conditional mean. To see this, let $d$ denote distance from the left endpoint; $d\sim \text{Uniform}(0,X)$. Then $\sigma^2 = d(X-d)/X$. Hence
\[
\E[\sigma^2\mid X] = \frac{1}{X}\E[d(X-d)] = \frac{1}{X}\left(X\E[d]-\E[d^2]\right)
= \frac{1}{X}\left(X\cdot \frac{X}{2}-\frac{X^2}{3}\right)=\frac{X}{6}.
\]
\noindent Combining Proposition \ref{prop:inspection_paradox} with this calculation yields the expected experienced variance.
\begin{equation}\label{eq:variance}
\E[\sigma^2(x)] = \E\!\left[\E[\sigma^2(x)\mid X^*]\right] = \E\!\left[\frac{X^*}{6}\right]
= \frac{1}{6}\cdot \frac{2}{\lambda}=\frac{1}{3\lambda}.
\end{equation}
Expected error falls with coverage $\lambda$, but it is larger than a naive calculation using $\E[X]=1/\lambda$ would imply. Treating the relevant gap as ``typical'' would give $\E[\sigma^2]=1/(6\lambda)$. The inspection paradox doubles exposure to long gaps and doubles expected variance to $1/(3\lambda)$. This distinction matters when adoption decisions are based on the task distribution users actually face.

Many public evaluations report average accuracy on curated benchmark sets. In the baseline model, the relevant average is taken over the \emph{task arrival} distribution. That average weights ``gaps'' in coverage by their length (Proposition \ref{prop:inspection_paradox}), not uniformly. Put differently, the same underlying competence can look better or worse depending on whether evaluation samples tasks uniformly in the space or disproportionately from regions with dense training support. This observation motivates evaluation protocols that report tail-risk measures (e.g.\ worst-case by cluster, or performance conditional on distance from training support) alongside means when the adoption decision is driven by rare costly failures.\footnote{The Poisson/length-bias calculation is exact for a stationary Poisson process on $\mathbb{R}$ when the random task is chosen as one of the tasks that occur, picked ``uniformly'' from among all tasks. On a large bounded domain with tasks arriving uniformly, the same formulas hold approximately for points far from the boundary, and the approximation becomes exact as the domain size increases.}

\subsection{Adoption by a blind user}

A \emph{blind} user must decide whether to rely on the model \emph{ex ante}, without observing $\sigma^2(x)$ at the realised task. This captures subscription adoption, organisational ``use it by default'' policies, or settings where task-level verification is costly.

\begin{assumption}[Blind adoption]\label{ass:blind}
The user adopts the AI if $\E[U(x)]\ge 0$; otherwise, they take the outside option.
\end{assumption}

\noindent With this assumption, we can approve a key result:

\begin{theorem}[Adoption threshold]\label{thm:adoption}
Under Assumptions \ref{ass:brownian}--\ref{ass:blind}, blind adoption occurs if and only if
\[
q \ge \frac{1}{3\lambda}.
\]
Equivalently, defining the \emph{reliability ratio} $R \equiv 3\lambda q$, adoption occurs if and only if $R\ge 1$.
\end{theorem}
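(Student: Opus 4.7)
The plan is to observe that the blind-adoption criterion reduces to a single scalar inequality that, given the preceding subsection, has effectively been computed already. Concretely, the theorem will follow from combining linearity of expectation in \eqref{eq:utility} with the expected experienced variance \eqref{eq:variance}; the main task is to assemble the pieces and confirm that both directions of the ``iff'' are delivered.

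First, I would apply linearity of expectation to the utility in \eqref{eq:utility}. Because $q>0$ is a deterministic stakes parameter, $\E[U(x)] = 1 - \E[\sigma^2(x)]/q$, with the outer expectation taken jointly over the random task location and the random Poisson configuration of knowledge points---by Assumption \ref{ass:blind}, a blind user conditions on neither. The adoption condition $\E[U(x)] \ge 0$ is therefore equivalent to $q \ge \E[\sigma^2(x)]$, and strict monotonicity of $\E[U(x)]$ in $q$ (for fixed $\lambda>0$) secures both directions of the biconditional.

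Second, I would substitute $\E[\sigma^2(x)] = 1/(3\lambda)$ from \eqref{eq:variance}, which has already been derived by combining the Brownian-bridge conditional variance $\E[\sigma^2 \mid X] = X/6$ with the size-biased gap law $X^\ast \sim \text{Gamma}(2,\lambda)$ supplied by Proposition \ref{prop:inspection_paradox}. The threshold $q \ge 1/(3\lambda)$ drops out, and the reliability-ratio reformulation $R \equiv 3\lambda q \ge 1$ is immediate on multiplying through by $3\lambda$.

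There is no substantive obstacle, since all the probabilistic work---the bridge variance \eqref{eq:bridge-variance}, the inspection-paradox size-biasing, and the iterated expectation---has been carried out upstream. The only subtlety worth flagging is that \eqref{eq:variance} invokes stationarity on $\mathbb{R}$; on a bounded task space, boundary gaps introduce a correction that vanishes as $\lambda|\mathcal{Z}| \to \infty$, per the footnote following that equation. Under this standard caveat, the substitution is exact and the threshold statement follows.
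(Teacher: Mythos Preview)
Your proposal is correct and follows essentially the same route as the paper: take expectations in \eqref{eq:utility}, substitute $\E[\sigma^2(x)]=1/(3\lambda)$ from \eqref{eq:variance}, and read off the threshold $q\ge 1/(3\lambda)$ (equivalently $R\ge 1$). The additional remarks on monotonicity and boundary effects are sound but not needed beyond what the paper already records.
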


\begin{proof}
From \eqref{eq:utility} and \eqref{eq:variance},
\[
\E[U(x)] = 1 - \frac{\E[\sigma^2(x)]}{q} = 1 - \frac{1}{3\lambda q}.
\]
This is nonnegative if and only if $q \ge 1/(3\lambda)$, i.e.\ $R\ge 1$.
\end{proof}

\noindent Under blind adoption, the user commits to relying on the model \emph{before} learning where the realised task $x$ sits in the jagged landscape. Hence, conditional on adoption, the per-task (net) rate of return from using the AI is simply $\bar U_B(\lambda,q)\equiv \E[U(x)]=1-\frac{1}{3\lambda q}=1-\frac{1}{R}$. $R (\equiv 3\lambda q)$ will be a key parameter in what follows. Theorem~\ref{thm:adoption} is equivalently the requirement $\bar U_B(\lambda,q)\ge 0$ or $R \ge 1$.

\subsection{Calibration: a perfectly informed benchmark}\label{sec:calibration}

The blind-adoption threshold in Theorem~\ref{thm:adoption} is driven by \emph{average} expected utility.
But AJI is fundamentally a \emph{local} phenomenon: some tasks lie near knowledge points where the model’s
posterior variance $\sigma^2(x)$ is small, while others fall in the middle of long knowledge gaps where
$\sigma^2(x)$ is large. Many real systems and workflows attempt to expose (imperfect) task-level signals of
local reliability (confidence heuristics, abstention, retrieval provenance, cross-checking tools). To isolate the
economic logic, this subsection introduces a benchmark of \emph{perfect calibration}: the user observes the
realised $\sigma^2(x)$ before deciding whether to rely on the AI for the task at hand.

A perfectly calibrated user compares the AI to the outside option \emph{task by task}. With the outside option
normalised to $0$, the user relies on the AI if and only if $U(x)\ge 0$, equivalently $\sigma^2(x)\le q$.
Thus calibrated per-task utility is
\begin{equation}\label{eq:ucal_payoff}
u_C(x)=\max\{U(x),0\}=\left(1-\frac{\sigma^2(x)}{q}\right)_+,
\qquad (a)_+\equiv \max\{a,0\}.
\end{equation}
The corresponding benchmark value of access to the AI is
\begin{equation}\label{eq:ucal_def}
U_C(\lambda,q)\;\equiv\;\E[u_C(x)]
=\E\!\left[\left(1-\frac{\sigma^2(x)}{q}\right)_+\right].
\end{equation}

\noindent A key simplification in the Brownian--Poisson baseline is that calibrated value depends on $(\lambda,q)$ only
through the reliability ratio $R\equiv 3\lambda q$. Thus, the calibrated benchmark can be derived as follows:

\begin{proposition}[Calibrated expected utility]\label{prop:Ucal}
Under Assumptions~\ref{ass:brownian}--\ref{ass:poisson} (Brownian--Poisson baseline), calibrated expected utility depends on $(\lambda,q)$ only through $R$, i.e.\ $U_C(\lambda,q)=U_C(R)$. Moreover, for every $R>0$, $U_C(R)>0$ and $U_C$ is strictly increasing with
\[
U_C'(R)=\frac{1}{R^2}\E\!\left[3\lambda\sigma^2(x)\,\1\{3\lambda\sigma^2(x)<R\}\right]
\in\left(0,\frac{1}{R^2}\right), \qquad R>0.
\]
Finally, in the Brownian--Poisson baseline $U_C$ admits the integral representation
\[
U_C(R)=\int_0^1 \left[\,1-\frac{6t(1-t)}{R}
+\left(1+\frac{6t(1-t)}{R}\right)\exp\!\left(-\frac{R}{3t(1-t)}\right)\right]dt.
\]
\end{proposition}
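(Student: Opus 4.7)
The plan is to reduce $\sigma^2(x)$ to a dimensionless form that makes the $R$-dependence transparent, then differentiate under the expectation for monotonicity, and finally compute the conditional expectation in closed form to obtain the integral representation.

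First I would combine Proposition~\ref{prop:inspection_paradox} with \eqref{eq:bridge-variance} to write $\sigma^2(x) = X^* t(1-t)$, where $X^*\sim\Gamma(2,\lambda)$ is the size-biased gap length and, conditional on $X^*$, the fractional position $t\equiv d/X^*$ is uniform on $[0,1]$ and independent of $X^*$. Setting $V\equiv \lambda X^* \sim \Gamma(2,1)$ gives $\sigma^2(x)/q = 3Vt(1-t)/R$, so the random variable inside $(1-\cdot)_+$ depends on $(\lambda,q)$ only through $R$. This immediately yields $U_C(\lambda,q)=U_C(R)=\E[(1-3Vt(1-t)/R)_+]$, and because $3Vt(1-t)$ has positive density near $0$, the event $\{3Vt(1-t)<R\}$ has positive probability for every $R>0$, so $U_C(R)>0$.

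Next I would establish monotonicity and the derivative formula. Writing $Z\equiv 3\lambda\sigma^2(x)=3Vt(1-t)$ and noting that $\E[Z]=3\lambda\,\E[\sigma^2(x)]=1$ by \eqref{eq:variance}, the map $R\mapsto (1-Z/R)_+$ is differentiable off the null set $\{Z=R\}$ with derivative $(Z/R^2)\,\1\{Z<R\}$. On any neighbourhood of a fixed $R_0>0$ bounded away from $0$, this derivative is dominated by the integrable envelope $4Z/R_0^2$, so dominated convergence allows differentiation under the expectation and produces the stated formula. The lower bound $U_C'(R)>0$ is immediate from $\P(Z<R)>0$, and the strict upper bound $U_C'(R)<1/R^2$ follows from $\E[Z\,\1\{Z<R\}]<\E[Z]=1$.

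Finally, to derive the integral representation I would condition on $t$ and set $a=3t(1-t)/R$, reducing the inner problem to $\E[(1-aV)_+]=\int_0^{1/a}(1-av)\,v\,e^{-v}\,dv$ with $V\sim\Gamma(2,1)$. Routine antiderivatives of $v e^{-v}$ and $v^2 e^{-v}$ evaluated between $0$ and $1/a$ collapse, after cancellation, to $1-2a+(1+2a)e^{-1/a}$; substituting $2a=6t(1-t)/R$ and $1/a=R/(3t(1-t))$ and integrating over $t\in[0,1]$ delivers the displayed formula, where the integrand extends by continuity to $1$ at $t\in\{0,1\}$. Nothing here is conceptually deep: the only real care is in justifying the interchange of derivative and expectation at the kink $Z=R$ (handled by the envelope above once one localises in $R$) and in not mis-signing the boundary terms in the Gamma-tail integration. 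The identity $\E[Z]=1$ is what makes the bound $U_C'(R)<1/R^2$ fall out cleanly.
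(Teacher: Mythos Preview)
Your proposal is correct and follows essentially the same route as the paper's proof: the scale-free representation $\sigma^2(x)=X^*t(1-t)$ with $\lambda X^*\sim\Gamma(2,1)$, the normalised variable $Z=3\lambda\sigma^2(x)$ with $\E[Z]=1$, differentiation under the expectation via dominated convergence, and conditioning on $t$ to evaluate a Gamma-tail integral are all exactly what the paper does. Your treatment is, if anything, slightly more explicit about the integrable envelope justifying the interchange at the kink, and your reparametrisation $a=3t(1-t)/R$ (versus the paper's $a=t(1-t)$) is a harmless notational difference.
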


\begin{proof}
Write calibrated utility as $\E[(1-\sigma^2(x)/q)_+]$ and substitute $R=3\lambda q$. For convenience in what follows, define the \emph{normalised variance}
\begin{equation}\label{eq:Z_def}
Z\;\equiv\;3\lambda \sigma^2(x).
\end{equation}
so that
\[
U_C(\lambda,q)=\E\!\left[\left(1-\frac{Z}{R}\right)_+\right].
\]
By \eqref{eq:variance}, $\E[Z]=3\lambda\E[\sigma^2(x)]=1$. 

Under the Brownian--Poisson baseline, for a uniformly drawn task, let $X^*$ be the length-biased gap length
(Proposition~\ref{prop:inspection_paradox}) and let $t\sim\mathrm{Uniform}(0,1)$ be the relative location within the
gap, independent of $X^*$. Then $\sigma^2(x)=X^*t(1-t)$ by \eqref{eq:bridge-variance}. Since
$X^*\sim\mathrm{Gamma}(2,\lambda)$, we have $\lambda X^*\sim\mathrm{Gamma}(2,1)$, and therefore
\[
Z=3(\lambda X^*)\,t(1-t)
\]
has a distribution that does not depend on $\lambda$. Hence $U_C$ depends on $(\lambda,q)$ only through $R$.

To see that $U_C(R)>0$ for all $R>0$, note that $Z$ has support arbitrarily close to $0$
(e.g.\ when $t$ lies near an endpoint of a gap, or when the realised gap is short), so $\Pr(Z<R)>0$ and thus
$\E[(1-Z/R)_+]>0$.

For each realised $z\ge 0$, the map $R\mapsto (1-z/R)_+=(1-z/R)\1\{z<R\}$ is nondecreasing. Differentiating
$(1-z/R)\1\{z<R\}$ with respect to $R$ gives $\frac{z}{R^2}\1\{z<R\}$. By dominated convergence,
\[
U_C'(R)=\E\!\left[\frac{Z}{R^2}\,\1\{Z<R\}\right]
=\frac{1}{R^2}\E\!\left[Z\,\1\{Z<R\}\right], \qquad R>0,
\]
which is strictly positive because $\Pr(Z<R)>0$.

For the upper bound, note that $\E[Z]=1$. Indeed,
\[
\E[Z]=3\,\E[\lambda X^*]\,\E[t(1-t)]
=3\cdot 2\cdot \frac{1}{6}=1,
\]
so $\E[Z\1\{Z<R\}]<\E[Z]=1$, implying $U_C'(R)<1/R^2$.

Finally, to obtain the integral representation, fix $t$ and write $a\equiv t(1-t)$. Since
$X^*\sim\mathrm{Gamma}(2,\lambda)$ has density $f_{X^*}(x)=\lambda^2 x e^{-\lambda x}$ and $\sigma^2=X^*a$,
\[
\E\!\left[\left(1-\frac{\sigma^2}{q}\right)_+ \Bigm| t\right]
=\int_{0}^{q/a}\left(1-\frac{ax}{q}\right)\lambda^2 x e^{-\lambda x}\,dx
=1-\frac{2a}{\lambda q}+\left(1+\frac{2a}{\lambda q}\right)\exp\!\left(-\frac{\lambda q}{a}\right).
\]
Substituting $\lambda q=R/3$ and $a=t(1-t)$ and integrating over $t\sim\mathrm{Uniform}(0,1)$ yields
\[
U_C(R)=\int_0^1 \left[\,1-\frac{6t(1-t)}{R}
+\left(1+\frac{6t(1-t)}{R}\right)\exp\!\left(-\frac{R}{3t(1-t)}\right)\right]dt,
\]
as claimed.
\end{proof}

\noindent Blind reliance forces the user to accept the AI output even on tasks where local uncertainty is so high that
$U(x)<0$. This is exactly where jaggedness matters: the distribution of $\sigma^2(x)$ has a high-error tail
generated by long gaps (amplified by the inspection paradox) and by within-gap geometry
($t(1-t)$ peaks in the middle of a gap where $t$, distributed uniformly on $[0,1]$ is the relative location within the gap). The economic cost of being blind is therefore pinned down by the
\emph{upper tail} of $\sigma^2(x)$ above the cutoff $q$.

Recalling that $\bar U_B(\lambda,q)\equiv \E[U(x)]=1-1/R$ is expected utility from \emph{blindly} relying on the AI on all tasks, then perfect calibration weakly dominates blind reliance because $(a)_+\ge a$ pointwise.
Define the (expected) \emph{cost of blindness} as the loss from not being able to screen out locally negative-value
tasks:
\[
\Delta_B(\lambda,q)\;\equiv\;U_C(\lambda,q)-\bar U_B(\lambda,q)
=\E\!\left[\left(\frac{\sigma^2(x)}{q}-1\right)_+\right]
=\frac{1}{q}\E\!\left[(\sigma^2(x)-q)_+\right].
\]
Thus, jaggedness matters for welfare not only through mean error (which drives $\bar U_B$), but through the
probability mass and \emph{overshoot} in the high-variance tail (which drives $\Delta_B$). If we subtract $\bar U_B(\lambda,q)$ from the integral expression in Proposition \ref{prop:Ucal} and use $\int_0^1 t(1-t)\,dt=1/6$, then, in the Brownian--Poisson baseline,
\[
\Delta_B(R)
=\int_0^1\left(1+\frac{6t(1-t)}{R}\right)\exp\!\left(-\frac{R}{3t(1-t)}\right)\,dt.
\]
This integral can be evaluated in closed form:
\[
\Delta_B(R)
=\frac{2}{9}e^{-2R/3}\left[
R\,K_0\!\left(\frac{2R}{3}\right)+(3-R)\,K_1\!\left(\frac{2R}{3}\right)
\right],
\]
where $K_\nu(\cdot)$ is the modified Bessel function of the second kind.\footnote{The function $K_\nu(\cdot)$ is the modified Bessel function of the second kind (also called the \emph{Macdonald function}), i.e.\ the exponentially decaying solution to the modified Bessel equation $z^2 y''+z y'-(z^2+\nu^2)y=0$. For $z>0$ it admits the integral representation
\[
K_\nu(z)=\int_0^\infty e^{-z\cosh u}\cosh(\nu u)\,du,
\]
which implies $K_\nu(z)>0$ and that $K_\nu$ is decreasing in $z$. Its standard asymptotics are $K_\nu(z)\sim \sqrt{\frac{\pi}{2z}}\,e^{-z}\quad (z\to\infty)$, $K_0(z)\sim -\log(z/2)-\gamma$, $K_1(z)\sim \frac{1}{z}\quad (z\downarrow 0)$ where $\gamma$ is the Euler--Mascheroni constant. With argument $z=2R/3$ in the closed form for $\Delta_B(R)$, these properties imply $\Delta_B(R)>0$, $\Delta_B(R)\sim 1/R$ as $R\downarrow 0$, and $\Delta_B(R)\sim \bigl(\sqrt{3\pi}/(3\sqrt{R})\bigr)e^{-4R/3}$ as $R\to\infty$.} Calibration, therefore, converts jaggedness from a \emph{tax} on use into an \emph{option value}: even when the average blind payoff is low, the user can concentrate reliance on locally reliable regions and abstain in the holes. This benchmark will be useful later because both $U_C(R)$ and the blindness wedge $\Delta_B(R)$ depend on primitives $(\lambda,q)$ only through the single index $R$.

Perfect calibration is a deliberately stark benchmark. A natural extension is to let the user observe a noisy signal
$s(x)=\sigma^2(x)+\nu$ (e.g.\ $\nu\sim\mathcal{N}(0,\tau^2)$) and delegate when $\E[\sigma^2(x)\mid s(x)]\le q$.
As $\tau^2\to 0$ this approaches calibrated delegation; as $\tau^2\to\infty$ it collapses to blind adoption.
This provides a reduced-form way to study how noisy uncertainty communication attenuates the welfare gains from
calibration/abstention interfaces.

\section{The Impact of AI Scaling Laws}\label{sec:scaling}

A large empirical literature documents approximate ``scaling laws'' for modern AI systems: more data, compute, and
engineering effort tend to produce systematically better models \citep{kaplan2020scaling}. In this paper, we interpret
``better'' as \emph{denser effective coverage} of a task space. Formally, scaling is an increase in the knowledge-point intensity~$\lambda$.

This section records the basic comparative statics of scaling, clarifies why jaggedness survives, and introduces a
single index that governs adoption and welfare comparisons holding gap shape fixed. Section~\ref{sec:complements}
then characterises rates of return to these margins.

\subsection{Scale as denser coverage}

First, let's define scaling in the context of the model. 

\begin{definition}[Scaling]\label{def:scaling}
Scaling is an exogenous increase in knowledge-point intensity from $\lambda_0$ to $\lambda_S=S\lambda_0$, where
$S>1$ is a scaling factor.
\end{definition}

\noindent Using this definition, the impact of scaling is as follows:

\begin{proposition}[Scaling weakly reduces local posterior variance]\label{prop:scaling_law}
Fix $0<\lambda_0<\lambda_S$. Couple the corresponding Poisson knowledge-point sets by writing a $\lambda_S$-process
as the union of an independent $\lambda_0$-process and an independent $(\lambda_S-\lambda_0)$-process. Let
$\sigma_0^2(x)$ and $\sigma_S^2(x)$ denote the posterior variances of $Y(x)$ under the two knowledge sets. Then for
every task $x$,
\[
\sigma_S^2(x)\le \sigma_0^2(x)\qquad \text{almost surely}.
\]
\end{proposition}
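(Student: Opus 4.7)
The plan is to exploit the coupling so that $\Pi_0\subset \Pi_S$ almost surely, and then invoke the standard fact that for jointly Gaussian families, conditioning on more observations weakly reduces conditional variance. Because $Y$ is a centered Gaussian process by Assumption \ref{ass:brownian} and $\sigma_j^2(x)$ is by definition the conditional variance of $Y(x)$ given the values of $Y$ at the knowledge set, the result should fall out almost immediately once the coupling is in place.

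First, I would verify the coupling. By the superposition theorem for independent Poisson processes, the union of an intensity-$\lambda_0$ process and an independent intensity-$(\lambda_S-\lambda_0)$ process is a Poisson process of intensity $\lambda_S$. So defining $\Pi_S\equiv \Pi_0\cup \Pi'$ delivers the correct marginal law for $\Pi_S$ while making the inclusion $\Pi_0\subset \Pi_S$ hold almost surely.

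Second, I would work conditionally on the realizations of $(\Pi_0,\Pi')$. Any finite subfamily of $Y$ is jointly Gaussian; writing $Z_1=\{Y(z):z\in \Pi_0\}$ and $Z_2=\{Y(z):z\in \Pi'\}$, the general Gaussian inequality $\Var(W\mid Z_1,Z_2)\le \Var(W\mid Z_1)$ yields
\[
\sigma_S^2(x)=\Var\!\bigl(Y(x)\mid Z_1,Z_2\bigr)\le \Var\!\bigl(Y(x)\mid Z_1\bigr)=\sigma_0^2(x).
\]
Because this holds for almost every realization of the two point processes, it holds almost surely unconditionally.

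A cleaner alternative uses the Brownian-bridge formula \eqref{eq:bridge-variance} directly. By the Markov property, $\sigma_j^2(x)$ depends on $\Pi_j$ only through the two nearest knowledge points $L_j<x<R_j$, and the coupling gives $[L_S,R_S]\subseteq[L_0,R_0]$. One can then either invoke the Gaussian monotonicity fact above on the added boundary observations at $L_S,R_S$, or verify algebraically that the bridge variance $(x-L)(R-x)/(R-L)$ at $x$ is weakly smaller over a nested interval containing $x$. The main obstacle is only a minor accounting issue: handling the measure-zero events that $x$ coincides with a knowledge point, or that no knowledge point lies on one side of $x$. Both cases are innocuous and can be dispatched with the convention $\sigma_j^2(x)=0$ on the former and a direct half-line computation on the latter.
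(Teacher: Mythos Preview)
Your proposal is correct. Your primary argument uses the general Gaussian fact that enlarging the conditioning set can only weakly reduce conditional variance, which is valid and in fact more general than what the paper does: it would apply verbatim to any Gaussian truth landscape, not only Brownian motion. The paper instead takes your ``cleaner alternative'' route: it works directly with the bracketing knowledge points $u_0\le u_S<x<v_S\le v_0$ and verifies algebraically that the Brownian-bridge variance $(x-u)(v-x)/(v-u)$ weakly decreases when either endpoint moves inward. That approach is more elementary (no appeal to abstract Gaussian-projection monotonicity) and stays closer to the explicit formula \eqref{eq:bridge-variance}, at the cost of being specific to the Brownian case. Your handling of the edge cases ($x$ coinciding with a knowledge point, or no point on one side) is appropriate; the paper does not dwell on them either.
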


\begin{proof}
Fix a realisation of the coupled knowledge sets and a task $x$.
Let $u_0<x<v_0$ be the consecutive knowledge points bracketing $x$ in the baseline set, and
let $u_S<x<v_S$ be the consecutive knowledge points bracketing $x$ in the scaled set.
Under the coupling, the scaled set contains the baseline set, so the bracket can only shrink: $u_0\le u_S<x<v_S\le v_0$.

Under Assumption~\ref{ass:brownian}, the posterior variance at $x$ depends only on the two bracketing anchors and is given by
the Brownian-bridge formula \eqref{eq:bridge-variance}. Hence
\[
\sigma_0^2(x)=\frac{(x-u_0)(v_0-x)}{v_0-u_0},
\qquad
\sigma_S^2(x)=\frac{(x-u_S)(v_S-x)}{v_S-u_S}.
\]
If $(u_S,v_S)=(u_0,v_0)$ there is nothing to show.
Otherwise, the scaled set introduces at least one additional anchor in $(u_0,v_0)$, which replaces one endpoint of the
bracketing interval by a point closer to $x$. A direct comparison shows that the bridge variance weakly decreases when an
endpoint moves inward: for $u<x<v'<v$,
\[
\frac{(x-u)(v'-x)}{v'-u}\le \frac{(x-u)(v-x)}{v-u},
\]
and symmetrically when the left endpoint moves rightward.
Therefore $\sigma_S^2(x)\le \sigma_0^2(x)$ for every $x$ on this realisation, and hence almost surely.
\end{proof}

\noindent In applications, increasing $\lambda$ can stand for more parameters, more training data, improved retrieval,
domain fine-tuning, or representation learning that makes tasks effectively ``closer.'' The model is agnostic about
mechanism: all improvements that densify effective coverage load into~$\lambda$.

\subsection{Benchmarks, adoption, and a sufficient index}

In the Brownian--Poisson baseline, expected posterior variance declines linearly in scale:
\[
\E[\sigma^2(x)] = \frac{1}{3\lambda_S}=\frac{1}{3S\lambda_0}.
\]
This captures the familiar ``smooth'' benchmark improvement from scaling. But users are paid on the task they face, not
on average error: adoption and welfare hinge on \emph{local} reliability relative to stakes. In the model, the relevant
comparison is whether the realised local variance $\sigma^2(x)$ is small relative to the tolerance for error~$q$.

A convenient summary of this tradeoff is the composite index
\begin{equation}\label{eq:R_def_returns}
R \;\equiv\; 3\lambda q.
\end{equation}
Two features motivate $R$. First, it aggregates a \emph{supply-side} object ($\lambda$, effective coverage) and a
\emph{demand-side} object ($q$, stakes/verification difficulty) into a single ``reliability'' index: holding $q$ fixed,
scaling by a factor $S$ multiplies $R$ by $S$, while holding $\lambda$ fixed, moving to a higher-stakes environment
reduces $R$ proportionally. Second, in the baseline
\[
R \;=\; \frac{q}{\E[\sigma^2(x)]},
\]
so $R$ can be read as ``stakes measured in units of typical uncertainty.'' As a result, comparative statics in $R$
translate directly into (i) returns to scaling and (ii) cross-domain differences in stakes.

The reason $R$ is sufficient (holding the \emph{shape} of gaps fixed) is a homogeneity property: scaling shrinks gap
lengths proportionally, so local uncertainty rescales in the same way. The next lemma records this invariance in a form
useful for welfare and use decisions, both of which depend on the distribution of $\sigma^2(x)/q$.

\begin{lemma}[Scale invariance of normalised variance]\label{lem:scale_invariance}
Suppose gap lengths admit a scale representation $X=Y/\lambda$ with $Y$ independent of $\lambda$ (i.e.\ scaling changes
intensity but not the shape of the gap-length distribution). Then the distribution of the \emph{normalised} local
variance $3\lambda\sigma^2(x)$ is independent of $\lambda$ (though it may depend on the shape of $Y$). Consequently,
for a fixed gap shape, any welfare or use object that depends on $(\lambda,q)$ only through the distribution of
$\sigma^2(x)/q$ depends on $(\lambda,q)$ only through $R=3\lambda q$.
\end{lemma}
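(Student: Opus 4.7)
The plan is to decompose the local posterior variance into a $\lambda$-dependent gap-length factor and a scale-free within-gap geometry factor, then show that multiplying by $\lambda$ cancels the only $\lambda$-dependent piece. This extends the computation used inside the proof of Proposition~\ref{prop:Ucal} from the Poisson/Exponential case to a general gap shape $Y$.

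First, I would recall the within-gap decomposition: by the Brownian-bridge formula \eqref{eq:bridge-variance}, for a uniformly drawn task located at relative position $t$ within a gap of length $X^*$ (the \emph{length-biased} gap containing the task), one has $\sigma^2(x) = X^* t(1-t)$ with $t\sim\mathrm{Uniform}(0,1)$ independent of $X^*$. Second, I would show that length-biasing commutes with the scale relation $X=Y/\lambda$: starting from the unbiased density $f_X(x)=\lambda f_Y(\lambda x)$ and applying the length-biasing transform $f_{X^*}(x)=x f_X(x)/\E[X]$, one obtains $\lambda X^* \stackrel{d}{=} Y^*$, where $Y^*$ is the length-biased version of $Y$, whose law is independent of $\lambda$. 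Substituting gives
\[
3\lambda\,\sigma^2(x) \;=\; 3(\lambda X^*)\,t(1-t) \;\stackrel{d}{=}\; 3\,Y^*\,t(1-t),
\]
whose joint distribution depends only on the shape $Y$, not on $\lambda$. This proves the first claim.

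Third, for the sufficient-index claim, I would write $\sigma^2(x)/q = [3\lambda\sigma^2(x)]/(3\lambda q) = Z/R$ with $Z\equiv 3\lambda\sigma^2(x)$. Because the distribution of $Z$ depends on $\lambda$ only through the held-fixed shape $Y$, the distribution of $\sigma^2(x)/q$ depends on $(\lambda,q)$ only through the scalar $R$; any functional of that distribution inherits the same sufficiency. The main obstacle is conceptual rather than computational: one must be careful that the length-biasing operation --- which enters because a task is uniformly located in \emph{space}, not uniformly in the indexing of gaps --- respects the scale transformation $X=Y/\lambda$. Once $\lambda X^* \stackrel{d}{=} Y^*$ is established, everything else is a direct substitution. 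A mild subtlety is that the representation $X=Y/\lambda$ must apply to the stationary gap-length distribution of the point process uniformly across gaps; this is already embedded in the ``fixed gap shape'' hypothesis of the lemma.
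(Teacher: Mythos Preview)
Your proposal is correct and follows essentially the same route as the paper's proof: decompose $\sigma^2(x)=X^*t(1-t)$, observe that $X=Y/\lambda$ implies $\lambda X^*\stackrel{d}{=}Y^*$ (length-biasing commutes with the scale transformation), conclude that $3\lambda\sigma^2(x)=3Y^*t(1-t)$ has $\lambda$-free law, and then read off the $R$-sufficiency from $\sigma^2(x)/q=(3\lambda\sigma^2(x))/R$. Your explicit density computation for the length-biasing step is slightly more detailed than the paper's one-line assertion, but the argument is the same.
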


\begin{proof}
Let $X^*$ be the length-biased gap length and let $t\sim\mathrm{Uniform}(0,1)$ denote the relative location within the
gap. Under the Brownian bridge formula, $\sigma^2(x)=X^*t(1-t)$. If $X=Y/\lambda$ with $Y$ independent of $\lambda$,
then $X^*=Y^*/\lambda$, where $Y^*$ is the length-biased version of $Y$, so $\lambda X^*=Y^*$ is independent of
$\lambda$. Hence
\[
3\lambda\sigma^2(x)=3(\lambda X^*)\,t(1-t)=3Y^*\,t(1-t),
\]
which is independent of $\lambda$ because $Y^*$ and $t$ are independent of $\lambda$.

Finally, note that $\sigma^2(x)/q = (3\lambda\sigma^2(x))/R$. Since the distribution of $3\lambda\sigma^2(x)$ is
$\lambda$-invariant, any expression formed by integrating a function of $\sigma^2(x)/q$ depends on $(\lambda,q)$ only
through $R$.
\end{proof}

\subsection{Why jaggedness survives scaling}

Although scale reduces the \emph{level} of uncertainty, it need not improve the \emph{shape} of coverage. Under the
Poisson benchmark, relative dispersion is scale-invariant:

\begin{proposition}[Jaggedness is scale-invariant]\label{prop:cv}
Under Assumption~\ref{ass:poisson}, the coefficient of variation of gap lengths is constant:
\[
\text{CV}(X) = \frac{\sqrt{\Var(X)}}{\E[X]} = 1,
\]
independent of $\lambda$.
\end{proposition}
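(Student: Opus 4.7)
The plan is to show this via a direct one-line distributional identity: under a homogeneous Poisson process, gap lengths are i.i.d.\ exponential, and the exponential distribution has coefficient of variation exactly $1$ regardless of its rate parameter. So the whole content of the proposition is packaging a standard textbook fact in the notation of the model.

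First, I would invoke Assumption~\ref{ass:poisson} to conclude that the inter-arrival distances $X$ between consecutive points of the Poisson process are $\mathrm{Exponential}(\lambda)$. This is the same fact already used in the statement and proof of Proposition~\ref{prop:inspection_paradox}, so no new machinery is needed; I would just cite it to keep the proof self-contained. Next, I would record the standard moments $\E[X]=1/\lambda$ and $\Var(X)=1/\lambda^2$ (both of which follow immediately from the density $f_X(x)=\lambda e^{-\lambda x}$, either by integration by parts or by recognising the mean and variance of an exponential). Then the coefficient of variation is the ratio $\sqrt{\Var(X)}/\E[X]=(1/\lambda)/(1/\lambda)=1$, and the $\lambda$-dependence cancels exactly.

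There is no real obstacle here; the only thing worth flagging is the interpretive point that the proposition is genuinely about \emph{relative} dispersion. The cancellation of $\lambda$ in the ratio is the mathematical embodiment of the economic claim being made right after the proposition: scaling shrinks gap lengths in absolute terms but leaves their relative heterogeneity untouched. If I wanted to generalise slightly (for use in Lemma~\ref{lem:scale_invariance}), I would note that any gap distribution of the form $X=Y/\lambda$ with $Y$ independent of $\lambda$ has the same CV for all $\lambda$, since both numerator and denominator rescale by $1/\lambda$; the Poisson case is the special instance $Y\sim\mathrm{Exponential}(1)$. That generalisation is not needed for the stated proposition but links this result cleanly to the scale-invariance argument already given.
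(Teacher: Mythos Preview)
Your proof is correct and matches the paper's own argument exactly: invoke that $X\sim\mathrm{Exponential}(\lambda)$ under Assumption~\ref{ass:poisson}, record $\E[X]=1/\lambda$ and $\Var(X)=1/\lambda^2$, and conclude $\mathrm{CV}(X)=1$. The additional commentary on the $X=Y/\lambda$ generalisation is not needed for the proposition but is a helpful bridge to Lemma~\ref{lem:scale_invariance}.
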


\begin{proof}
If $X\sim \mathrm{Exponential}(\lambda)$ then $\E[X]=1/\lambda$ and $\Var(X)=1/\lambda^2$, so $\mathrm{CV}(X)=1$.
\end{proof}

\noindent In particular, the inspection-paradox penalty $\E[X^*]/\E[X]=2$ is scale-invariant. Scaling shifts the
distribution of gaps inward but does not change its relative dispersion, so long gaps remain long relative to the mean, and user exposure to long gaps is not proportionally reduced. This helps rationalise why benchmark improvements may fail to translate one-for-one into reductions in ``surprising failure'' rates in high-stakes settings.

The Poisson benchmark is intentionally stationary. In practice, coverage is heterogeneous; a natural extension is a spatially varying intensity $\lambda(x)$, which delivers persistent hard regions even at a large aggregate scale. The stationary benchmark isolates the jaggedness generated by random coverage and length-biased exposure before adding systematic heterogeneity.

\section{Rates of Return to Scale, Calibration, and Regularity}\label{sec:complements}

This section asks a deliberately practical question: if an AI provider (or a deploying organisation) has scarce
engineering attention, where should it go?  Public discussions of ``AI progress'' often collapse very different
activities into a single narrative of \emph{scale}.  AJI sharpens the distinctions.   

Because payoffs are realised task-by-task, not on benchmark averages, it is useful to separate the three investment
margins that are often conflated in practice:
\begin{enumerate}[label=(\roman*),leftmargin=1.25em]
\item \textbf{Scale} (capability): increase $\lambda$, which compresses posterior uncertainty but does not eliminate
holes.
\item \textbf{Regularity} (coverage shape): reduce the dispersion of knowledge gaps, shrinking exposure to rare but
large gaps (and the inspection-paradox penalty).
\item \textbf{Calibration} (information/interface): expose a signal of local reliability (here, $\sigma^2(x)$) and allow
abstention (Section~\ref{sec:calibration}).
\end{enumerate}
In the model, user welfare is realised locally, on the task actually faced, so the marginal payoff to any improvement depends on (i) the \emph{distribution} of local reliability, (ii) the user's information and decision rule, and (iii) how evaluation benchmarks weight the task space. We, therefore, organise the analysis around three themes:
(i) the drivers of the rate of return to \emph{scaling}, \emph{calibration}, and \emph{regularity};
(ii) when these margins are complements or substitutes; and
(iii) how benchmark-based inference can misstate returns, and how AJI suggests corrected evaluation targets. Throughout, we use the reliability index $R \equiv 3\lambda q,$ which aggregates supply-side capability ($\lambda$, effective coverage density) and demand-side stakes ($q$, verification difficulty).  Holding $q$ fixed, scaling by a factor $S$ multiplies $R$ by $S$; holding $\lambda$ fixed, moving to higher-stakes settings reduces $R$ proportionally.  Under stable gap shape (Section~\ref{sec:scaling}), many welfare and use objects depend on $(\lambda,q)$ only through $R$, allowing clean comparative statics.

\subsection{Returns to scaling}

Scaling increases $\lambda$ and therefore weakly reduces local posterior variance pointwise
(Proposition~\ref{prop:scaling_law}).  The economic return to that statistical improvement depends on two margins:
(i) an \emph{extensive} margin (whether the user adopts at all), and (ii) an \emph{intensive} margin (how much the AI is
used conditional on adoption).

A blind user delegates on all tasks once adopted, and adopts if the expected utility is nonnegative
(Assumption~\ref{ass:blind}).  In the Brownian--Poisson baseline, Theorem~\ref{thm:adoption} gives $U_B(R) \;=\; \max\left\{1-\frac{1}{R},\,0\right\}$. The implied marginal return to scale (measured in $R$-units) is
\begin{equation}\label{eq:MR_scale_blind}
\frac{\partial U_B}{\partial R}
=
\begin{cases}
0, & R<1,\\[4pt]
\frac{1}{R^2}, & R>1,
\end{cases}
\qquad\text{and}\qquad
\frac{\partial U_B}{\partial \lambda}=3q\,\frac{\partial U_B}{\partial R}.
\end{equation}
The kink at $R=1$ is central: below the threshold, marginal scaling has \emph{zero realised return} because the user does not adopt; above the threshold, returns are positive but diminish like $1/R^2$.

Under perfect calibration, the user observes $\sigma^2(x)$ and delegates if and only if $\sigma^2(x)<q$
(Section~\ref{sec:calibration}).  Proposition~\ref{prop:Ucal} implies expected utility depends on $(\lambda,q)$ only through $R$; differentiating (as in Proposition~\ref{prop:Ucal}) yields
\begin{equation}\label{eq:MR_scale_calibrated}
\frac{\partial U_C}{\partial R}
=
\frac{1}{R^2}\E\!\left[3\lambda\sigma^2(x)\,\1\{3\lambda\sigma^2(x)<R\}\right]
\in \left(0,\frac{1}{R^2}\right)
\qquad\text{for all }R>0,
\end{equation}
and $\partial U_C/\partial \lambda = 3q\,\partial U_C/\partial R$.  Economically, calibration turns scaling from a
``kinked'' investment (worthless until adoption) into a smooth investment (positive return for all $R>0$), because the user can exploit low-variance regions even when average reliability is too low for blind adoption.

Calibration also changes \emph{how much} the AI is used.  Define the use share as the probability of delegation on a random task:
\[
s_B(R)=\1\{R\ge 1\}
\qquad\text{(blind: use whenever adopted),}
\]
and, under calibration,
\begin{equation}\label{eq:sC_def}
s_C(R) \equiv \Pr(\sigma^2(x)<q).
\end{equation}
Conditioning on $t$ and using $\lambda X^*\sim\mathrm{Gamma}(2,1)$ gives the one-dimensional representation
\begin{equation}\label{eq:sC_integral}
s_C(R)
=\int_0^1 \left[1-\left(1+\frac{R}{3t(1-t)}\right)\exp\!\left(-\frac{R}{3t(1-t)}\right)\right]dt,
\end{equation}
so $s_C(R)\in(0,1)$ for finite $R$ and is increasing in $R$.  Hence calibration can \emph{increase use} in low-$R$
settings (relative to blind non-adoption) but \emph{reduce use} in high-$R$ settings (relative to blind
``always delegate'') while still raising welfare.

\subsection{Returns to calibration}\label{subsec:returns_calibration}

Calibration is an information/interface margin: it does not (necessarily) change the underlying model's competence, but changes the mapping from capability to welfare by enabling \emph{selective delegation}.  In the model, perfect calibration means the user observes the local posterior variance $\sigma^2(x)$ and can abstain when it is too high. 
In practice, ``investing in calibration'' corresponds to building and deploying a \emph{reliability signal} and a
\emph{policy} that conditions on it. In Section \ref{sec:mastery}, we explore how calibration is achieved by users themselves. Here, we focus on investments the AI provider can make. 

Typical implementation levers include:
\begin{itemize}[leftmargin=1.25em]
\item \textbf{Confidence/risk estimation:} training a separate predictor (or using internal signals such as entropy, log-probabilities, self-consistency, retrieval diagnostics) to estimate the probability of error on a given query.
\item \textbf{Calibration of the signal:} fitting and validating a mapping so that stated confidence aligns with
empirical accuracy (e.g.\ temperature scaling, isotonic regression, Platt scaling, or conformal-style calibration), often stratified by domain and query type.
\item \textbf{Selective prediction and abstention:} setting a threshold and routing low-confidence cases to a human, to a safer tool, to retrieval-first workflows, or to ``ask a clarifying question,'' instead of answering.
\item \textbf{Verification layers:} using a critic/verifier model, tool-based checks, or unit tests to generate a
second opinion that feeds into the risk score and abstention rule.
\item \textbf{Interface and governance:} surfacing uncertainty to users (or enforcing review gates) so that high-stakes actions require higher reliability.
\end{itemize}
These interventions primarily reshape \emph{which tasks are attempted} and \emph{how} the system responds under
uncertainty, rather than uniformly improving raw accuracy.

The natural welfare object is the gain from access to a local reliability signal (here, $\sigma^2(x)$) relative to
blind reliance.  Recall that $\bar U_B(\lambda,q)\equiv \E[U(x)]=1-1/R$ is expected utility from blindly relying on the AI on all tasks (ignoring the adoption truncation) while the \emph{cost of blindness} is:
\begin{equation}\label{eq:blindness_cost}
\Delta_B(\lambda,q)\;\equiv\;U_C(\lambda,q)-\bar U_B(\lambda,q)
=\E\!\left[\left(\frac{\sigma^2(x)}{q}-1\right)_+\right]
=\frac{1}{q}\E\!\left[(\sigma^2(x)-q)_+\right].
\end{equation}
This quantity is the return to perfect calibration: it is exactly the expected loss from being unable to screen out locally negative-value tasks.  It depends not only on the mean of $\sigma^2(x)$ but on the tail probability mass and \emph{overshoot} above the cutoff $q$.  In applied terms, calibration is most valuable when the system sometimes faces ``bad regions'' (high variance) that can be detected and avoided at low cost.

\begin{proposition}[Calibration returns shrink with reliability]\label{prop:calibration_returns}
Fix gap shape (so that $U_C$ depends on $(\lambda,q)$ only through $R=3\lambda q$).  Then $\Delta_B(R)>0$ for all
$R>0$ and $\Delta_B(R)$ is strictly decreasing in $R$ with
\[
\Delta_B'(R)
=
-\frac{1}{R^2}\E\!\left[3\lambda\sigma^2(x)\,\1\{3\lambda\sigma^2(x)>R\}\right]
<0.
\]
\end{proposition}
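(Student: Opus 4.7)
The plan is to exploit the normalised variance $Z \equiv 3\lambda\sigma^2(x)$ introduced in the proof of Proposition~\ref{prop:Ucal}, under which $\sigma^2(x)/q = Z/R$ and therefore $\Delta_B(R) = \E\bigl[(Z/R - 1)_+\bigr]$. By Lemma~\ref{lem:scale_invariance}, the distribution of $Z$ is independent of $(\lambda,q)$, so the entire statement reduces to a one-variable calculus exercise on the scalar function $R \mapsto \E[(Z/R - 1)_+]$, where $Z = 3(\lambda X^*)\,t(1-t)$ with $\lambda X^* \sim \mathrm{Gamma}(2,1)$ and $t \sim \mathrm{Uniform}(0,1)$ independent.

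I would first establish strict positivity. Since $\lambda X^*$ has unbounded support on $(0,\infty)$ and $t(1-t)$ attains values arbitrarily close to $1/4$ with positive probability, $Z$ is unbounded above, so $\Pr(Z > R) > 0$ for every finite $R > 0$. This makes $(Z/R - 1)_+$ strictly positive on a set of positive probability, giving $\Delta_B(R) > 0$.

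For the derivative, I would differentiate pointwise inside the expectation. For each fixed $z > 0$, the map $R \mapsto (z/R - 1)_+$ is differentiable except at $R = z$, with derivative $-\tfrac{z}{R^2}\1\{z > R\}$; the exceptional set is $\P$-null for any given $R$ because $Z$ is absolutely continuous (as a smooth transformation of independent continuous random variables). To swap differentiation and expectation, I would appeal to dominated convergence: on any compact subinterval $[a,b] \subset (0,\infty)$, the relevant difference quotients are bounded in absolute value by $Z/a^2$, and $\E[Z] = 1 < \infty$ is already established in the proof of Proposition~\ref{prop:Ucal}. This delivers
\[
\Delta_B'(R) = -\frac{1}{R^2}\E\bigl[Z\,\1\{Z > R\}\bigr] = -\frac{1}{R^2}\E\bigl[3\lambda\sigma^2(x)\,\1\{3\lambda\sigma^2(x) > R\}\bigr],
\]
which is the stated formula. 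Strict negativity follows from the same tail argument used for positivity: $\E[Z\,\1\{Z > R\}] > 0$ whenever $\Pr(Z > R) > 0$.

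The main technical hurdle is the dominated-convergence step, which is routine given the finite mean of $Z$, but requires care to produce the dominating function uniformly on a neighbourhood of each $R_0 > 0$ rather than merely pointwise. Everything else is either a direct substitution from Proposition~\ref{prop:Ucal} or a support argument on a Gamma-times-bounded random variable.
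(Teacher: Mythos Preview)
Your proposal is correct and follows essentially the same route as the paper's proof: pointwise differentiation of $R\mapsto (Z/R-1)_+$, interchange of derivative and expectation via dominated convergence, and a support argument on $Z$ for strict positivity and strict negativity. You are in fact more careful than the paper, which invokes dominated convergence without naming a dominating function and does not explicitly address the null-set issue at the kink.
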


\begin{proof}
For fixed realised $s\ge 0$, the map $R\mapsto \left(\frac{3\lambda s}{R}-1\right)_+$ is nonincreasing and
differentiable for $R>0$ away from the kink at $R=3\lambda s$, with derivative
$-\frac{3\lambda s}{R^2}\1\{3\lambda s>R\}$.  Dominated convergence yields the derivative formula.  Positivity follows
because $3\lambda\sigma^2(x)$ has support arbitrarily close to $0$ and unbounded above in the baseline, so both
$\Pr(3\lambda\sigma^2(x)<R)$ and $\Pr(3\lambda\sigma^2(x)>R)$ are positive for all $R>0$.
\end{proof}

\noindent Proposition~\ref{prop:calibration_returns} clarifies why calibration is most valuable in marginal domains: when $R$ is small, a nontrivial fraction of tasks are locally negative-value, and screening is valuable; as $R$ grows, those tasks become rarer, and the incremental value of screening falls.  In practice, this predicts that investments in reliable uncertainty estimation, selective abstention, and escalation policies deliver the largest welfare gains in high-stakes or low-coverage environments, whereas in high-$R$ environments, they mainly act as a safety layer with diminishing marginal payoff.

\subsection{Returns to regularity}\label{subsec:returns_regularity}

Scaling raises $\lambda$ and reduces the \emph{mean} gap length between ``anchors'' in task space.  In practice, $\lambda$ can stand for more parameters, more pretraining tokens, more compute, better retrieval, or any mechanism that increases overall effective coverage.  \emph{Regularity} is a distinct provider-side margin: it is about \emph{where} coverage is improved, not just \emph{how much} coverage there is on average.

In the model, jaggedness arises from the uneven spacing of knowledge points: some regions are densely anchored, while others are sparsely anchored.  For LLMs, the analogous unevenness arises because training and post-training pipelines do not allocate effort uniformly across the task space.  Concretely, a provider can increase regularity (reduce the dispersion of effective gaps) through choices such as:
\begin{itemize}[leftmargin=1.25em]
\item \textbf{Data acquisition and curation targeted at sparse regions.}  Collecting or licensing domain data that is
underrepresented (e.g.\ specialised technical writing, enterprise formats, minority languages, niche tooling
workflows), and cleaning it to usable quality.  This does not necessarily increase average tokens much, but it shrinks
the worst holes.
\item \textbf{Reweighting and mixture design.}  Changing training weights so that rare but important domains receive
more gradient mass (or are oversampled) relative to already-well-covered domains.  This reshapes coverage, holding the
mean token budget fixed.
\item \textbf{Active learning / failure-driven data.}  Using evaluation, red-teaming, and product telemetry to identify
high-error clusters and then generating or collecting data in those clusters.  In the model language, this is
``splitting the longest gaps'' rather than uniformly adding points.
\item \textbf{Targeted post-training.}  Domain adapters, specialist experts (e.g.\ mixture-of-experts routing), or
fine-tunes for hard regions can reduce variance dispersion even if average benchmark scores barely move.
\item \textbf{Retrieval and tooling coverage.}  Expanding retrieval indices, tool access, and grounding resources
specifically for sparse regions effectively adds ``anchors'' locally, improving regularity from the user's perspective.
\end{itemize}
These levers differ from scaling because they primarily change the \emph{shape} of local reliability: they reallocate effort from already-dense regions toward the worst holes.

Regularity and calibration address different frictions and correspond to different engineering choices.  \emph{Regularity} is an \emph{ex ante} capability intervention: it changes the underlying reliability landscape by shrinking the worst ``holes'' (high-variance regions) through targeted data, reweighting, specialised components, retrieval coverage, or post-training aimed at specific failure clusters.  By contrast, \emph{calibration} is an \emph{ex post} information and policy intervention: it leaves the underlying landscape largely unchanged, but improves the system's ability to \emph{recognise} when it is likely to be wrong and to condition behaviour on that signal (e.g.\ abstention, escalation to a human, retrieval-first routing, tool-based verification, or asking for clarification). (Section \ref{sec:reasoning} then studies a \emph{costly response} that can be taken conditional on (possibly calibrated) information about baseline unreliability: paying for a ``reasoning'' mode that reduces error further. In our terminology, this is not calibration itself; it is an action enabled by calibration signals.) These margins are, therefore, distinct in several common real-world cases: (i) a provider may keep model weights fixed yet deploy a calibrated risk score and deferral policy, reducing failures by \emph{avoiding} risky queries (calibration without regularity); (ii) a provider may close known holes via targeted fine-tuning or improved domain coverage while leaving user-facing confidence and routing unchanged, reducing failures by making the model \emph{better} in those regions (regularity without calibration); and (iii) retrieval/tooling upgrades can implement either margin depending on whether they primarily add competence in sparse regions (regularity) or primarily generate reliable ``no support found'' signals that trigger deferral (calibration).  In short, regularity changes \emph{where the model is good}, while calibration changes \emph{when the system chooses to act on what it knows}.

Formally, let $X$ denote the gap length between adjacent knowledge points and let $X^*$ denote the \emph{length-biased} gap faced by a uniformly drawn task. Dispersion matters because a randomly encountered task is more likely to fall in a longer gap (the inspection paradox), so the user experiences a systematically different gap distribution than a gap-uniform evaluator.

\begin{proposition}[Experienced mean gap length and variance]\label{prop:experienced_gap_mean} Let $X$ denote the (unbiased) gap length with $\E[X]=1/\lambda$ and coefficient of variation $\CV\equiv \sqrt{\Var(X)}/\E[X]$.  Let $X^*$ denote the length-biased gap length faced by a uniformly drawn task. Then
\[
\E[X^*]=\frac{\E[X^2]}{\E[X]}=\E[X]\,(1+\CV^2)=\frac{1+\CV^2}{\lambda}.
\]
Consequently,
\begin{equation}\label{eq:Esig_regular}
\E[\sigma^2(x)]=\frac{\E[X^*]}{6}=\frac{1+\CV^2}{6\lambda}.
\end{equation}
\end{proposition}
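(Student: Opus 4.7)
The plan is to establish the length-biased identity first, then convert it into the $(1+\CV^2)$ form using the definition of variance, and finally invoke the conditional expectation computation from Section~\ref{sec:model} to get the variance statement.

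First I would derive the length-biased density in a form that does \emph{not} use the Poisson/exponential structure, since the proposition is stated for a general gap distribution $X$ with finite first two moments. The standard length-bias argument is as in the proof of Proposition~\ref{prop:inspection_paradox}: conditional on a realised partition of the line into gaps, a uniformly drawn task location lands in a gap of length $x$ with probability proportional to $x f_X(x)$ (because longer gaps occupy proportionally more of the domain). Normalising yields $f_{X^*}(x)=x f_X(x)/\E[X]$. Then
\[
\E[X^*]=\int_0^\infty x f_{X^*}(x)\,dx=\frac{1}{\E[X]}\int_0^\infty x^2 f_X(x)\,dx=\frac{\E[X^2]}{\E[X]},
\]
which is the first equality.

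Next I would convert this into the $\CV$-form. From $\Var(X)=\E[X^2]-\E[X]^2$, one obtains $\E[X^2]=\E[X]^2(1+\CV^2)$, so $\E[X^*]=\E[X](1+\CV^2)=(1+\CV^2)/\lambda$, using $\E[X]=1/\lambda$. This gives the second and third equalities in \eqref{eq:Esig_regular}. As a sanity check, when $X\sim \mathrm{Exponential}(\lambda)$ we have $\CV=1$ by Proposition~\ref{prop:cv}, recovering $\E[X^*]=2/\lambda$ from Proposition~\ref{prop:inspection_paradox}.

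For the variance statement, I would reuse the within-gap conditional expectation computed in Section~\ref{sec:model}: conditional on a gap of length $X$, the task location $d$ is uniform on $(0,X)$, and the Brownian-bridge formula \eqref{eq:bridge-variance} gives $\sigma^2=d(X-d)/X$, whence $\E[\sigma^2\mid X]=X/6$. This calculation is purely within-gap and does not depend on the Poisson assumption, so it applies verbatim here. Applying the tower property with the gap faced by the user being $X^*$,
\[
\E[\sigma^2(x)]=\E\!\left[\E[\sigma^2(x)\mid X^*]\right]=\E\!\left[\frac{X^*}{6}\right]=\frac{\E[X^*]}{6}=\frac{1+\CV^2}{6\lambda},
\]
which is \eqref{eq:Esig_regular}. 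The only real subtlety is in keeping track of why the within-gap formula $\E[\sigma^2\mid X]=X/6$ survives relaxation of the Poisson assumption (it does, because it relies only on Assumption~\ref{ass:brownian} and on the uniformity of the task location within any fixed gap); once that is isolated, the rest is the length-bias identity plus the definition of $\CV$, so I do not expect any genuine obstacle.
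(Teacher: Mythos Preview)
Your proposal is correct and follows essentially the same approach as the paper: length-bias gives $\E[X^*]=\E[X^2]/\E[X]$, the variance identity rewrites this as $\E[X](1+\CV^2)$, and the within-gap Brownian-bridge average $\E[\sigma^2\mid X]=X/6$ (equivalently $\sigma^2=X^*t(1-t)$ with $\E[t(1-t)]=1/6$) yields $\E[\sigma^2(x)]=\E[X^*]/6$. The only cosmetic difference is that the paper phrases the last step via $\E[t(1-t)]$ rather than the tower property, and omits your sanity check and the remark about why the within-gap formula is Poisson-free.
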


\begin{proof}
Length bias gives $\Pr(X^*\in dx)\propto x\,\Pr(X\in dx)$, hence $\E[X^*]=\E[X^2]/\E[X]$.  Writing
$\E[X^2]=\Var(X)+(\E[X])^2=(\CV^2+1)(\E[X])^2$ yields $\E[X^*]=\E[X](1+\CV^2)$.  Finally,
$\sigma^2(x)=X^*t(1-t)$ with $t\sim\mathrm{Uniform}(0,1)$ implies
$\E[\sigma^2]=\E[X^*]\E[t(1-t)]=\E[X^*]/6$.
\end{proof}

\noindent Equation~\eqref{eq:Esig_regular} separates two provider-relevant primitives:
\begin{itemize}[leftmargin=1.25em]
\item \textbf{Density} ($\lambda$): overall capability/coverage, improved by ``more of everything'' (scale).
\item \textbf{Dispersion} ($\CV$): unevenness of coverage, improved by targeted effort that closes the worst holes
(regularity).
\end{itemize}
Dispersion inflates experienced variance multiplicatively via $(1+\CV^2)$, even when average spacing $1/\lambda$ is the
same.  A convenient implication is the \emph{experienced} reliability ratio,
\begin{equation}\label{eq:R_exp_def}
R^{\mathrm{exp}}(\lambda,q,\CV)\;\equiv\;\frac{q}{\E[\sigma^2(x)]}
=\frac{6\lambda q}{1+\CV^2}
=\frac{2R}{1+\CV^2}.
\end{equation}
For blind reliance (before truncation at $0$),
\begin{equation}\label{eq:UB_regular_general}
\bar U_B(\lambda,q,\CV)=1-\frac{\E[\sigma^2(x)]}{q}=1-\frac{1}{R^{\mathrm{exp}}(\lambda,q,\CV)}.
\end{equation}
Hence, dispersion directly shifts the blind adoption threshold: $\bar U_B\ge 0$ is equivalent to
$R\ge (1+\CV^2)/2$.  Operationally, this says: if a system is ``spiky'' (high dispersion), scaling must run farther
before typical users find blanket delegation worthwhile.

In principle, ``regularity'' is multidimensional: a provider can reshape the entire distribution of coverage gaps by targeting data, retrieval, tools, or post-training toward specific weak regions.  For tractability, we summarise this shape by a single dispersion statistic, the coefficient of variation
\[
\CV \;\equiv\; \frac{\sqrt{\Var(X)}}{\E[X]},
\]
because Proposition~\ref{prop:experienced_gap_mean} implies that \emph{for blind expected utility}, dispersion affects experienced mean variance only through $\CV$:
\[
\E[\sigma^2(x)] \;=\; \frac{1+\CV^2}{6\lambda}.
\]
We then introduce a one-dimensional regularity index $r\in[0,1]$ that monotonically reduces dispersion and is
normalised at two polar cases:
\begin{equation}\label{eq:CV_r_def}
\CV(0)=1 \quad\text{(Poisson-like irregularity)},\qquad \CV(1)=0 \quad\text{(perfectly regular spacing)}.
\end{equation}
To keep expressions transparent, we adopt the linear normalisation
\[
\CV(r)=1-r,
\]
so that $r$ can be read as the ``fraction of Poisson dispersion removed.'' Under this parametrisation,
\eqref{eq:Esig_regular} becomes
\begin{equation}\label{eq:Esig_r}
\E[\sigma^2(x)]=\frac{1+(1-r)^2}{6\lambda}.
\end{equation}
This specification is a simplification that, in effect, views the AI provider as investing in reducing $\CV$ directly and is appropriate for the exercise here.\footnote{It is reasonable for the objects used here because, for blind reliance, welfare depends on coverage shape only through $\E[\sigma^2(x)]$, and
\eqref{eq:Esig_regular} shows that $\CV$ is the relevant summary statistic.  Any differentiable, decreasing map
$r\mapsto \CV(r)$ satisfying \eqref{eq:CV_r_def} would deliver the same qualitative comparative statics; the linear choice fixes units and yields closed-form expressions. What we lose is that  $\CV$ is a second-moment summary.  Two gap distributions can share the same $(\lambda,\CV)$ but have very different
\emph{tails} (e.g.\ rare catastrophic gaps versus more moderate dispersion).  This matters especially under
calibration, where welfare and use depend on the full distribution of $\sigma^2(x)$ through truncation at the cutoff $q$.  Thus $r$ should be interpreted as capturing the component of ``regularity'' that operates through the second moment (equivalently $\E[X^2]$), not as a complete description of tail engineering. Finally, the simplification restricts $r\in[0,1]$ which implies $\CV\in[0,1]$, i.e.\ coverage is no more dispersed than the Poisson benchmark.  This is appropriate if we view Poisson-like spacing as a natural high-dispersion baseline and interpret regularity investments as reducing dispersion.  If some environments feature heavier-tailed gaps ($\CV>1$), the analysis extends by allowing $r<0$ or by using an alternative monotone map (e.g.\ $\CV(r)=1/(1+r)$ for $r>-1$) without changing the economic logic.}

\begin{definition}[Provider investment technology]\label{def:investment_tech}
A provider chooses coverage intensity $\lambda>0$ and regularity $r\in[0,1]$ at cost $C(\lambda,r)$, with
$C_\lambda>0$, $C_r>0$, and $C$ convex in each argument.
\end{definition}

\noindent Convexity is economically natural: the cheapest regularity gains come from fixing obvious, repeatedly observed holes (e.g.\ high-frequency failure clusters); subsequent gains require progressively harder data acquisition, curation, tooling, or specialised training.  Many real interventions move both margins at once (e.g.\ retrieval can increase effective $\lambda$ while also improving regularity by disproportionately helping sparse regions); we separate $(\lambda,r)$ to isolate the comparative statics of ``more coverage'' versus ``more even coverage.''

Regularity and scale are substitutes at the level of the mean in \eqref{eq:Esig_r} (both act through $\lambda/(1+(1-r)^2)$), but their marginal rates differ.

\begin{proposition}[Relative returns: scale vs.\ regularity]\label{prop:MRS_scale_reg}
In the blind model (conditional on adoption), let $\CV(r)$ be differentiable with $\CV'(r)\le 0$.  Then the marginal
rate of substitution between scale and regularity is
\[
\mathrm{MRS}_{\lambda,r}
\;\equiv\;
\frac{\partial \bar U_B/\partial r}{\partial \bar U_B/\partial \lambda}
=
\lambda\,
\frac{-\frac{d}{dr}\bigl(1+\CV(r)^2\bigr)}{1+\CV(r)^2}
=
\lambda\,\frac{-2\CV(r)\CV'(r)}{1+\CV(r)^2}.
\]
Under the linear normalisation $\CV(r)=1-r$, this reduces to
\[
\mathrm{MRS}_{\lambda,r}
=
\frac{2(1-r)\lambda}{1+(1-r)^2},
\]
which is increasing in $\lambda$ and tends to $0$ as $r\to 1$.
\end{proposition}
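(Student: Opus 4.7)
The plan is to unpack $\bar U_B$ explicitly in terms of $(\lambda,r)$ via \eqref{eq:UB_regular_general} and \eqref{eq:Esig_r}, compute the two partial derivatives, and then take their ratio. First I would write
\[
\bar U_B(\lambda,q,r) \;=\; 1 - \frac{1+\CV(r)^2}{6\lambda q},
\]
so that the conditional-on-adoption expression depends on $r$ only through $\CV(r)^2$ and on $\lambda$ only through the reciprocal $1/\lambda$. This makes both partial derivatives elementary.

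Next I would compute the two partials separately. Differentiating in $\lambda$ (holding $r,q$ fixed) gives $\partial \bar U_B/\partial \lambda = (1+\CV(r)^2)/(6\lambda^2 q)$, which is strictly positive. Differentiating in $r$ (holding $\lambda,q$ fixed) and using the chain rule on $\CV(r)^2$ gives $\partial \bar U_B/\partial r = -2\CV(r)\CV'(r)/(6\lambda q)$, which is nonnegative whenever $\CV'(r)\le 0$ (the assumed sign). Taking the ratio $(\partial \bar U_B/\partial r)/(\partial \bar U_B/\partial \lambda)$ cancels the common factor $1/(6\lambda q)$ and one power of $\lambda$, yielding
\[
\mathrm{MRS}_{\lambda,r} \;=\; \lambda\cdot \frac{-2\CV(r)\CV'(r)}{1+\CV(r)^2},
\]
which matches the first (general) claim. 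Rewriting the numerator as $-\tfrac{d}{dr}(1+\CV(r)^2)$ gives the equivalent form stated in the proposition.

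Finally, I would specialise to the linear normalisation $\CV(r)=1-r$, so $\CV'(r)=-1$ and $-2\CV(r)\CV'(r)=2(1-r)$, producing $\mathrm{MRS}_{\lambda,r}=2(1-r)\lambda/(1+(1-r)^2)$. Monotonicity in $\lambda$ is immediate (the expression is linear in $\lambda$ with a strictly positive coefficient on $r\in[0,1)$), and the limit $r\to 1$ follows from the numerator tending to $0$ while the denominator tends to $1$.

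No step here is really an obstacle; the only thing that has to be done carefully is the chain rule on $\CV(r)^2$ and tracking the sign (since $\CV'(r)\le 0$ is what makes the MRS weakly positive). Everything else is direct substitution into \eqref{eq:UB_regular_general}.
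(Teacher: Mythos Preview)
Your proposal is correct and follows essentially the same approach as the paper: write $\bar U_B=1-(1+\CV(r)^2)/(6\lambda q)$, compute the two partials, take their ratio, and then specialise to $\CV(r)=1-r$. The computations and the final comparative statics match the paper's proof line for line.
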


\begin{proof}
Blind expected utility (before truncation) is $\bar U_B(\lambda,q,r)=1-\E[\sigma^2(x)]/q$.
By Proposition~\ref{prop:experienced_gap_mean}, $\E[\sigma^2(x)]=(1+\CV(r)^2)/(6\lambda)$, so
\[
\bar U_B(\lambda,q,r)=1-\frac{1+\CV(r)^2}{6\lambda q}.
\]
Differentiating yields
$\partial \bar U_B/\partial r= -\frac{1}{6\lambda q}\frac{d}{dr}(1+\CV(r)^2)$
and
$\partial \bar U_B/\partial \lambda=\frac{1+\CV(r)^2}{6q\,\lambda^2}$.
Taking the ratio gives the general expression.  Substituting $\CV(r)=1-r$ gives the closed form and its
comparative statics.
\end{proof}

\noindent Proposition~\ref{prop:MRS_scale_reg} clarifies what the one-dimensional regularity index is doing economically: it tracks how quickly provider effort compresses dispersion (through $\CV'(r)$), and, therefore, how rapidly the experienced second moment $\E[X^2]$ falls relative to further increases in density.  The qualitative implication is robust to the exact parametrisation: as $\lambda$ grows and average gaps are already short, the marginal value of closing the remaining irregular holes rises relative to adding another ``typical'' point, even though both reduce mean variance.

\subsection{How the margins interact: complements and substitutes}

The three margins interact because they address different frictions.  Scale reduces uncertainty everywhere; regularity reduces dispersion and tail exposure; calibration changes behaviour by letting the user condition on local reliability. AJI implies that whether margins are complements or substitutes depends on which margin is binding.

\paragraph{Scale \emph{vs.} calibration: }
Comparing \eqref{eq:MR_scale_blind} and \eqref{eq:MR_scale_calibrated} yields a sharp interaction:
\begin{itemize}[leftmargin=1.25em]
\item For $R<1$, blind users do not adopt and $\partial U_B/\partial R=0$, whereas calibrated users obtain
$\partial U_C/\partial R>0$.  Calibration \emph{complements} scale by making marginal scaling pay off even below the blind adoption threshold.
\item For $R>1$, blind users already delegate everywhere, so marginal scaling is valued at $1/R^2$; under calibration,
$\partial U_C/\partial R<1/R^2$ because some of the negative-value tail has already been avoided.  Conditional on broad adoption, calibration \emph{substitutes} for scale at the margin by attenuating the incremental payoff to further scale.
\end{itemize}
Thus, ``returns to scale'' is not a purely technological slope: it is a behavioural object that depends on what users can observe and how they respond.

\paragraph{Scale \emph{vs.} regularity:}
Holding $q$ fixed, both higher $\lambda$ and higher $r$ raise the experienced reliability ratio
$R^{\mathrm{exp}}=6\lambda q/(1+(1-r)^2)$, so they are substitutes for average (blind) performance.  But because the marginal return to $\lambda$ falls like $1/\lambda^2$ while the marginal return to $r$ falls like $1/\lambda$,
Proposition~\ref{prop:MRS_scale_reg} implies the \emph{relative} return to regularity rises with scale.  This is the formal sense in which ``once you have scaled a lot, closing the remaining holes becomes the high-leverage move.''

\paragraph{Calibration \emph{vs.} regularity: }
Calibration mitigates holes \emph{ex post} by screening; regularity mitigates holes \emph{ex ante} by shrinking the high-variance tail.  The cost-of-blindness identity \eqref{eq:blindness_cost} makes the substitution transparent: since $x\mapsto (x-q)_+$ is convex, any mean-preserving contraction of the distribution of $\sigma^2(x)$ reduces the return to calibration.

\begin{lemma}[Regularity reduces the value of calibration under tail contraction]\label{lem:reg_reduces_calib}
Fix $(\lambda,q)$.  If an increase in regularity transforms $\sigma^2(x)$ into a mean-preserving contraction in the
convex order (i.e.\ it weakly reduces $\E[\phi(\sigma^2(x))]$ for every convex $\phi$ while keeping $\E[\sigma^2(x)]$
fixed), then the cost of blindness $\Delta_B(\lambda,q)$ weakly decreases.
\end{lemma}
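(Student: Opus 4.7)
The plan is to recognise that $\Delta_B$ is, up to a positive scalar, the expectation of a specific convex function of $\sigma^2(x)$, and then apply the defining property of the convex order directly.

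First, I would rewrite the cost of blindness using \eqref{eq:blindness_cost} as
\[
\Delta_B(\lambda,q)=\frac{1}{q}\,\E\bigl[\phi_q(\sigma^2(x))\bigr],\qquad \phi_q(s)\equiv (s-q)_+.
\]
The hockey-stick function $\phi_q$ is the pointwise maximum of the affine maps $s\mapsto s-q$ and $s\mapsto 0$, hence continuous, nondecreasing, and convex on $[0,\infty)$.

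Second, let $\sigma_0^2(x)$ and $\sigma_1^2(x)$ denote local posterior variance before and after the regularity improvement, respectively. The hypothesis is precisely that $\sigma_1^2(x)$ is a mean-preserving contraction of $\sigma_0^2(x)$ in the convex order, i.e.\ $\E[\phi(\sigma_1^2(x))]\le \E[\phi(\sigma_0^2(x))]$ for every convex $\phi$ with integrable expectations. Specialising to $\phi=\phi_q$ and dividing through by $q>0$ yields $\Delta_B(\lambda,q;\,r_1)\le \Delta_B(\lambda,q;\,r_0)$, which is the claim.

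Integrability is immediate because $0\le \phi_q(s)\le s$ and $\E[\sigma^2(x)]$ is finite under the baseline by \eqref{eq:variance}; the mean-preservation part of the hypothesis transfers this bound to the contracted distribution. There is no substantive analytical obstacle here: the lemma is a one-line consequence of the defining inequality of the convex order applied to the hockey-stick payoff induced by the abstention cutoff. The only point worth flagging is what the proof does \emph{not} give. Because $\phi_q$ depends on $q$ only through a location shift, the argument treats each $q$ separately and does not require comparability of $\sigma_0^2$ and $\sigma_1^2$ in the stronger increasing-convex (or second-order stochastic dominance) order. The economic content---rather than the mathematical content---is what is worth emphasising in the surrounding text: the calibration gain is a convex upper-tail functional of $\sigma^2(x)$, so any regularity improvement that contracts that distribution in the convex order must weakly shrink the incremental value of being able to abstain, making shape-targeted investments and calibration substitutes at the margin.
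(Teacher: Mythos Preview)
Your proposal is correct and follows essentially the same route as the paper: both recognise that $\Delta_B(\lambda,q)=\frac{1}{q}\E[(\sigma^2(x)-q)_+]$, observe that $s\mapsto(s-q)_+$ is convex, and apply the defining inequality of a mean-preserving contraction in the convex order. Your added remarks on integrability and on why only the convex (rather than increasing-convex) order is needed are fine elaborations but not required for the argument.
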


\begin{proof}
From \eqref{eq:blindness_cost}, $\Delta_B(\lambda,q)=\frac{1}{q}\E[(\sigma^2(x)-q)_+]$.  The function
$s\mapsto (s-q)_+$ is convex, so under a mean-preserving contraction in the convex order, its expectation weakly
decreases.  Multiplying by $1/q$ preserves the inequality.
\end{proof}

\noindent Lemma~\ref{lem:reg_reduces_calib} formalises a simple idea: calibration and regularity are both tail-risk tools, so improving one lowers the marginal value of the other.  At the same time, calibration can increase the realised return to regularity in low-$R$ domains by preventing the extensive-margin collapse of blind adoption: without calibration, a user may not adopt at all, making even large regularity improvements privately irrelevant.

\subsection{Benchmarks, mismeasured returns, and AJI-adjusted evaluation}

AJI is also a \emph{measurement} problem: common benchmarks often weight tasks or domains in ways that do not match how users encounter the task space.  To put it in the model's terms, benchmark suites often sample closer to a \emph{gap-uniform} (or otherwise
dense-region-weighted) distribution, whereas real-world use samples tasks across the space and therefore induces
length-biased exposure to the sparsest regions. Because users are length-biased toward sparse regions, benchmarks that ignore length bias can misstate both the \emph{level} and the \emph{marginal return} to investments. 

A benchmark regime that effectively weights gaps uniformly (one item per gap/domain) evaluates an ``unbiased'' gap
length $X$ rather than the user's length-biased $X^*$.  Since the within-gap average posterior variance is $X/6$,
\[
\E_{\text{bench}}[\sigma^2]=\frac{\E[X]}{6}=\frac{1}{6\lambda},
\qquad
\E_{\text{user}}[\sigma^2]=\frac{\E[X^*]}{6}=\frac{1+\CV^2}{6\lambda}.
\]
Equivalently, the implied reliability ratios satisfy
\[
\frac{R^{\mathrm{exp}}}{R^{\mathrm{bench}}}
=\frac{q/\E_{\text{user}}[\sigma^2]}{q/\E_{\text{bench}}[\sigma^2]}
=\frac{1}{1+\CV^2}\le 1,
\]
with equality only under perfectly regular spacing ($\CV=0$).  In the Poisson benchmark ($\CV=1$), the wedge is exactly $1/2$: a gap-uniform benchmark is twice as optimistic about reliability as a representative user's experience.

If returns to scaling are inferred from benchmark slopes, AJI generates two distortions:
\begin{enumerate}[label=(\roman*),leftmargin=1.25em]
\item \textbf{Extensive-margin distortion (adoption).} Benchmarks that ignore length bias predict adoption too early: they evaluate at $\E[X]$ while users experience $\E[X^*]$.
\item \textbf{Marginal-return distortion.} Even holding behaviour fixed, benchmark slopes overstate the marginal
improvement from scaling because a gap-uniform benchmark effectively evaluates $q/\E_{\text{bench}}[\sigma^2]$ while user
experience is pinned down by $q/\E_{\text{user}}[\sigma^2]$.
Since $\E_{\text{user}}[\sigma^2]=(1+\CV^2)\E_{\text{bench}}[\sigma^2]$, the experienced reliability ratio is
$R^{\mathrm{exp}}=R^{\mathrm{bench}}/(1+\CV^2)$, so the marginal gain from increasing $\lambda$ is attenuated by the factor
$1/(1+\CV^2)$.
\end{enumerate}
The model suggests an immediate correction: report (or target) the \emph{second raw moment} of gap lengths, $\E[X^2]$ (together with $\E[X]$), since task-uniform use induces length bias and the experienced mean gap satisfies $\E[X^*]=\E[X^2]/\E[X]$.  Operationally, this means either (i) evaluating by sampling tasks uniformly from the task space (which automatically induces length bias), or (ii) adjusting gap-uniform benchmark estimates by an estimated dispersion factor $(1+\CV^2)$ (or its empirical analogue). Appendix \ref{app:benchmarks_aji} reviews commonly used benchmarks for LLMs and highlights the implications for their interpretation from the model as well as suggestions for their adjustment.

To make the investment implication transparent, consider a finite task space of length $L$ with gaps
$(X_i)_{i=1}^n$, $\sum_i X_i=L$.  A uniformly drawn task lands in gap $i$ with probability $X_i/L$ and has within-gap
average variance $X_i/6$, so the experienced mean variance is
\begin{equation}\label{eq:Esig_discrete}
\E[\sigma^2]=\sum_{i=1}^n \frac{X_i}{L}\cdot \frac{X_i}{6}
=\frac{1}{6L}\sum_{i=1}^n X_i^2.
\end{equation}
By contrast, a gap-uniform benchmark effectively evaluates
\[
\E_{\text{gap-unif}}[\sigma^2]=\frac{1}{n}\sum_{i=1}^n \frac{X_i}{6}
=\frac{L}{6n},
\]
which depends only on $n$ and is invariant to dispersion in $(X_i)$.

\begin{proposition}[Where a marginal knowledge point matters]\label{prop:target_long_gaps}
Suppose a provider can add one knowledge point inside a chosen gap of length $X$, splitting it into two gaps of
lengths $\alpha X$ and $(1-\alpha)X$ for some $\alpha\in(0,1)$.  Then the reduction in experienced mean variance
\eqref{eq:Esig_discrete} is
\[
\Delta \E[\sigma^2]
=\frac{1}{6L}\Big(X^2-(\alpha^2+(1-\alpha)^2)X^2\Big)
=\frac{1}{3L}\alpha(1-\alpha)X^2,
\]
which is maximised by splitting the gap in half ($\alpha=1/2$), and is increasing in the original gap length $X$.
In particular, to maximise the experienced-quality gain from a marginal point, the provider should split the
\emph{longest} gap.
\end{proposition}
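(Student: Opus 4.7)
The plan is to specialise equation~\eqref{eq:Esig_discrete} to a one-step intervention: adding a single knowledge point inside a chosen gap changes exactly one summand of $\sum_i X_i^2$ while preserving the domain length $L$ and every other $X_j$. The argument then collapses to a transparent quadratic calculation, with the comparative statics following by separation of variables.

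First I would substitute the split into \eqref{eq:Esig_discrete}, observing that the term $X^2$ is replaced by $(\alpha X)^2 + ((1-\alpha)X)^2 = (\alpha^2 + (1-\alpha)^2)X^2$ and that no other summand, and no denominator, changes. Subtracting the post-split expression from the pre-split expression and dividing by $6L$ yields the first equality in the statement. Using the identity $1 - \alpha^2 - (1-\alpha)^2 = 2\alpha(1-\alpha)$ then produces the closed form $\Delta \E[\sigma^2] = \alpha(1-\alpha)X^2/(3L)$.

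For the two comparative statics I would argue separately. Treating $X$ as fixed, the map $\alpha \mapsto \alpha(1-\alpha)$ is a strictly concave parabola on $[0,1]$ maximised at $\alpha = 1/2$, so within any chosen gap the midpoint is the best split. Treating $\alpha$ as fixed, $\Delta \E[\sigma^2]$ is strictly increasing in $X$ through the $X^2$ factor, so among the gaps the longest one delivers the largest gain. Because the joint optimisation decouples multiplicatively as $\max_{\alpha} \alpha(1-\alpha) \cdot \max_i X_i^2$, the overall prescription is to bisect the longest gap. There is no real obstacle here; the only subtlety worth flagging is the accounting convention that a single added point alters only the chosen gap and leaves $L$ exogenous, which is exactly what the statement posits.
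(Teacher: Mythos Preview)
Your proposal is correct and follows essentially the same approach as the paper's own proof: replace $X^2$ by $(\alpha X)^2+((1-\alpha)X)^2$ in the sum in \eqref{eq:Esig_discrete}, simplify via $1-\alpha^2-(1-\alpha)^2=2\alpha(1-\alpha)$, and read off the comparative statics from the factored form $\alpha(1-\alpha)X^2/(3L)$. Your version is slightly more explicit about the accounting (that $L$ and the other summands are unchanged) and about the multiplicative separability of the optimisation over $(\alpha,X)$, but these are elaborations of the same argument rather than a different route.
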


\begin{proof}
The experienced mean variance is proportional to $\sum_i X_i^2$ by \eqref{eq:Esig_discrete}.  Splitting a gap of
length $X$ replaces $X^2$ with $(\alpha X)^2+((1-\alpha)X)^2=(\alpha^2+(1-\alpha)^2)X^2$, giving the stated reduction. The function $\alpha(1-\alpha)$ is maximised at $\alpha=1/2$ and the reduction scales with $X^2$.\footnote{In their examination of `knowledge deepening,' \cite{carnehl2025quest} prove a similar result regarding adding knowledge points at mid-points.}
\end{proof}

\noindent Proposition~\ref{prop:target_long_gaps} highlights the incentive problem: under a gap-uniform benchmark, adding a point anywhere increases $n$ by one and improves the score by the same amount regardless of whether it closes a catastrophic hole or refines an already dense region.  Under an experienced (usage-weighted) objective, marginal value scales like $X^2$, so closing the worst holes has first-order value.  In short, AJI does not merely change the level of measured performance; it changes which investments look high-return, and therefore which improvements are likely to be supplied.

With calibration, ``usage'' is endogenous: the user abstains precisely in high-variance regions.  A welfare-aligned evaluation should therefore report both (i) value net of abstention, $U_C$, and (ii) coverage, $s_C$ (the use share in \eqref{eq:sC_def}).  Otherwise, it is possible to look good by implicitly refusing to answer precisely where the model is jagged.

\section{Reasoning Modes and the Jagged Landscape}\label{sec:reasoning}

Many AI systems now offer multiple inference modes: a cheap ``default'' model and a more expensive
``reasoning'' (or ``thinking'') mode that uses additional compute (and often latency) to reduce errors.
In a smooth world, this is a standard quality--price tradeoff. In a jagged world, it becomes a \emph{selection} problem: the value of paying for reasoning is highest precisely on tasks that are locally hard, and by Proposition~\ref{prop:inspection_paradox} users are over-exposed to those tasks. This section introduces a simple benchmark model of reasoning and studies how it interacts with jaggedness, the inspection paradox, and the measurement of returns to scale.

\subsection{Setup: two inference modes}

At each task $x$, the user can either abstain (outside option $0$), use a cheap ``fast'' mode ($F$), or invoke a costly ``reasoning'' or ``thinking'' mode ($T$). Fast-mode utility is the baseline payoff from \eqref{eq:utility},
\begin{equation}\label{eq:reasoning_fast}
U_F(x)\;\equiv\;1-\frac{\sigma^2(x)}{q}.
\end{equation}
Reasoning incurs an additional per-task cost $\kappa>0$ (paid in time, money, or attention) and reduces local mean-squared error from $\sigma^2(x)$ to $\sigma_T^2(x)$:
Because payoffs are normalised as in \eqref{eq:utility}, $\kappa$ is measured in the same units: a hypothetical
zero-error reasoning output would yield net utility $1-\kappa$.
\begin{equation}\label{eq:reasoning_slow}
U_T(x)\;\equiv\;1-\frac{\sigma_T^2(x)}{q}-\kappa.
\end{equation}
The per-task value of reasoning relative to fast mode is therefore
\begin{equation}\label{eq:reasoning_increment}
V_T(x)\;\equiv\;U_T(x)-U_F(x)
=\frac{\sigma^2(x)-\sigma_T^2(x)}{q}-\kappa.
\end{equation}
Two features of \eqref{eq:reasoning_fast}--\eqref{eq:reasoning_increment} are economically central.
First, reasoning is an \emph{inference-side} margin: it is paid per task rather than up front, so jaggedness affects the realised ``reasoning bill'' through the task-arrival distribution. Second, the user only gets the full value of reasoning if they can \emph{target} it to tasks where $\sigma^2(x)$ is high; this immediately connects reasoning to calibration (Section~\ref{sec:calibration}).

\subsection{A benchmark model: reasoning as local noisy evidence}

The key modelling choice is how reasoning changes $\sigma^2(x)$. A useful benchmark is to treat reasoning as acquiring an additional, task-specific piece of evidence that is informative about $Y(x)$ but imperfect.
This captures a range of mechanisms (retrieval, tool use, longer search, self-consistency), while keeping the model transparent.

\begin{definition}[Reasoning as a noisy pseudo-anchor]\label{def:reasoning_noisy_anchor}
When reasoning is invoked at task $x$, the model obtains an additional signal
\[
\tilde Y(x)=Y(x)+\varepsilon,\qquad \varepsilon\sim N(0,\sigma_\varepsilon^2),
\]
independent of the baseline information set.
The resulting posterior variance is
\begin{equation}\label{eq:reasoning_variance}
\sigma_T^2(x)
=\left(\frac{1}{\sigma^2(x)}+\frac{1}{\sigma_\varepsilon^2}\right)^{-1}
=\frac{\sigma^2(x)\,\sigma_\varepsilon^2}{\sigma^2(x)+\sigma_\varepsilon^2}.
\end{equation}
\end{definition}

\noindent Definition~\ref{def:reasoning_noisy_anchor} can be interpreted literally (a noisy auxiliary estimate) or as a reduced form for ``extra compute'' that produces an additional independent sample of the answer distribution. The parameter $\sigma_\varepsilon^2$ is a \emph{reasoning noise floor}: even with unlimited baseline uncertainty, reasoning cannot reduce variance below $\sigma_\varepsilon^2$.

\begin{proposition}[How reasoning reshapes local risk]\label{prop:reasoning_properties}
Under Definition~\ref{def:reasoning_noisy_anchor}:
\begin{enumerate}[label=(\roman*)]
\item $\sigma_T^2(x)$ is increasing and concave in $\sigma^2(x)$, with
$\sigma_T^2(x)\le \min\{\sigma^2(x),\sigma_\varepsilon^2\}$ and
$\lim_{\sigma^2(x)\to\infty}\sigma_T^2(x)=\sigma_\varepsilon^2$.
\item The variance reduction
\[
\Delta(x)\;\equiv\;\sigma^2(x)-\sigma_T^2(x)=\frac{\sigma^4(x)}{\sigma^2(x)+\sigma_\varepsilon^2}
\]
is increasing and \emph{convex} in $\sigma^2(x)$.
\item Reasoning therefore makes the landscape ``jagged but capped'': long gaps still generate high baseline variance,
but reasoning limits the realised variance to (approximately) $\sigma_\varepsilon^2$ on those tasks.
\end{enumerate}
\end{proposition}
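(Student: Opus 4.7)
The plan is to reduce the three claims to elementary calculus on the scalar function $\phi(s) \equiv se/(s+e)$, where I abbreviate $s \equiv \sigma^2(x)$ and $e \equiv \sigma_\varepsilon^2$, both taken as strictly positive. All three parts then follow by differentiating $\phi$ and its complement $s - \phi(s)$.

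For part (i), I would first compute $\phi'(s) = e^2/(s+e)^2 > 0$, establishing strict monotonicity, and then $\phi''(s) = -2e^2/(s+e)^3 < 0$, giving strict concavity. The bound $\sigma_T^2 \le \min\{s,e\}$ follows without calculus by writing $\phi(s) = s \cdot \frac{e}{s+e} = e \cdot \frac{s}{s+e}$ and noting that each fraction lies in $(0,1)$. The limit $\phi(s) \to e$ as $s \to \infty$ is immediate from dividing numerator and denominator by $s$. It is worth remarking that this is exactly the usual precision-addition formula for the posterior variance of a Gaussian conditioned on an independent noisy signal, which is why monotonicity and concavity in $s$ are structural rather than accidental.

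For part (ii), I would compute the variance reduction algebraically as $\Delta(s) = s - \phi(s) = s^2/(s+e)$, then differentiate to obtain $\Delta'(s) = s(s+2e)/(s+e)^2 > 0$ for $s > 0$, establishing monotonicity. Convexity has a one-line proof that avoids a second derivative: since $\Delta(s) = s + (-\phi(s))$, and $\phi$ is concave by part (i), the function $-\phi$ is convex, and $\Delta$ is the sum of a linear function and a convex function, hence convex. (As a sanity check, $\Delta''(s) = 2e^2/(s+e)^3 > 0$, consistent with this argument.)

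Part (iii) is interpretive rather than a separate analytical claim. Once (i) is established, the uniform bound $\sigma_T^2(x) \le \sigma_\varepsilon^2$ holds pointwise in $x$, so reasoning caps the realised posterior variance at the noise floor regardless of how large the baseline $\sigma^2(x)$ becomes in long gaps; combined with the convexity of $\Delta$ from (ii), this says reasoning is disproportionately valuable in the high-variance tail generated by the Brownian bridge over length-biased gaps. I would present (iii) as a one-paragraph corollary of (i)--(ii) rather than a formal step. There is no real obstacle in this proof: the only place to slip is sign-tracking in the second derivative of $\phi$, but the complementarity trick ($\Delta$ convex because $\phi$ concave) sidesteps even that.
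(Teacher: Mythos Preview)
Your proposal is correct and follows essentially the same route as the paper: both reduce the claims to elementary calculus on $\phi(s)=se/(s+e)$, compute $\phi'$ and $\phi''$ for part~(i), and obtain $\Delta(s)=s^2/(s+e)$ with its derivatives for part~(ii). Your factorisation $\phi(s)=s\cdot\frac{e}{s+e}=e\cdot\frac{s}{s+e}$ for the $\min$ bound and the observation that $\Delta=s+(-\phi)$ is convex because $\phi$ is concave are slightly tidier than the paper's ``by inspection'' and direct second-derivative computation, but these are cosmetic refinements of the same argument.
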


\begin{proof}
Write $v\equiv \sigma^2(x)$.
From \eqref{eq:reasoning_variance}, $\sigma_T^2(v)=v\sigma_\varepsilon^2/(v+\sigma_\varepsilon^2)$.
Then
\[
\frac{d\sigma_T^2}{dv}=\frac{\sigma_\varepsilon^4}{(v+\sigma_\varepsilon^2)^2}>0,
\qquad
\frac{d^2\sigma_T^2}{dv^2}=-\frac{2\sigma_\varepsilon^4}{(v+\sigma_\varepsilon^2)^3}<0,
\]
establishing monotonicity and concavity, and the bounds and limits follow by inspection.
The reduction is $\Delta(v)=v-\sigma_T^2(v)=v^2/(v+\sigma_\varepsilon^2)$, which satisfies
\[
\frac{d\Delta}{dv}=\frac{v(v+2\sigma_\varepsilon^2)}{(v+\sigma_\varepsilon^2)^2}>0,
\qquad
\frac{d^2\Delta}{dv^2}=\frac{2\sigma_\varepsilon^4}{(v+\sigma_\varepsilon^2)^3}>0,
\]
so $\Delta$ is increasing and convex.
\end{proof}

\noindent If, instead, reasoning reduces variance proportionally, $\sigma_T^2(x)=\theta\sigma^2(x)$ with $\theta\in(0,1)$, then reasoning preserves the \emph{shape} of jaggedness (it rescales the entire landscape) and the inspection-paradox multiplier from Section~\ref{sec:complements} is unchanged. Definition~\ref{def:reasoning_noisy_anchor} is deliberately different: it captures the empirically common pattern that reasoning helps disproportionately in high-uncertainty regions but hits a floor.

\subsection{Optimal reasoning under calibration}

As in Section~\ref{sec:calibration}, consider a calibrated user who observes $\sigma^2(x)$ before choosing whether to use the AI
and which mode to run.\footnote{This is a benchmark: in practice, users have partial signals (confidence scores, self-evaluations,
heuristics) rather than direct access to $\sigma^2(x)$.  The point of the calibrated case is to isolate the economics of the
reasoning option given ideal targeting.}

\begin{proposition}[Cutoff rule for reasoning]\label{prop:reasoning_cutoff}
Under Definition~\ref{def:reasoning_noisy_anchor}, a calibrated user chooses
\[
\max\{0,U_F(x),U_T(x)\}
\]
by a cutoff rule in the baseline variance $v\equiv\sigma^2(x)$.
Define the \emph{fast break-even} point $v_0\equiv q$, the \emph{reasoning-entry threshold} $v_L$ as the unique positive solution to
$U_T(v)=U_F(v)$,
\begin{equation}\label{eq:vL}
v_L \;\equiv\;\frac{\kappa q+\sqrt{\kappa^2q^2+4\kappa q\sigma_\varepsilon^2}}{2},
\end{equation}
and define the \emph{reasoning break-even} point $v_H$ (if it exists) by $U_T(v_H)=0$,
\begin{equation}\label{eq:vH}
v_H\;\equiv\;
\begin{cases}
+\infty, & \sigma_\varepsilon^2\le (1-\kappa)q,\\[4pt]
\dfrac{(1-\kappa)q\,\sigma_\varepsilon^2}{\sigma_\varepsilon^2-(1-\kappa)q}, & \sigma_\varepsilon^2>(1-\kappa)q.
\end{cases}
\end{equation}
Then:
\begin{enumerate}[label=(\roman*)]
\item If $U_T(q)<0$ (equivalently $\kappa>\frac{q}{q+\sigma_\varepsilon^2}$), reasoning is never optimal and the user uses fast mode if and only if $v\le q$,
abstaining otherwise.
\item If $U_T(q)\ge 0$ (equivalently $\kappa\le\frac{q}{q+\sigma_\varepsilon^2}$), then $v_L\le q\le v_H$ and the optimal policy is:
fast mode for $v<v_L$, reasoning for $v\in[v_L,v_H]$, and abstention for $v>v_H$ (with no abstention when $v_H=+\infty$).
\end{enumerate}
\end{proposition}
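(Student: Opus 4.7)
I would treat the proposition as a structural fact about three strictly monotone functions of the baseline local variance $v \equiv \sigma^2(x)$: the outside option $0$; fast-mode utility $U_F(v) = 1 - v/q$, strictly decreasing; and reasoning utility $U_T(v) = 1 - \sigma_T^2(v)/q - \kappa$, also strictly decreasing because $\sigma_T^2$ is strictly increasing in $v$ by Proposition~\ref{prop:reasoning_properties}(i). The engine of the cutoff structure is the reasoning premium
\[
V_T(v) \;\equiv\; U_T(v) - U_F(v) \;=\; \frac{\Delta(v)}{q} - \kappa \;=\; \frac{v^2}{q(v+\sigma_\varepsilon^2)} - \kappa,
\]
which by Proposition~\ref{prop:reasoning_properties}(ii) is continuous, strictly increasing, with $V_T(0) = -\kappa < 0$ and $V_T(\infty) = \sigma_\varepsilon^2/q - \kappa$. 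Because every pairwise comparison among $\{0,U_F,U_T\}$ is between strictly monotone functions of $v$ that cross at most once, the maximiser of $\max\{0,U_F(v),U_T(v)\}$ automatically has a cutoff (interval) structure in $v$; the only remaining question is where the crossings lie.

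\textbf{Pivot at $v=q$ and case split.} I would anchor the argument at $v=q$, where $U_F(q)=0$ and a direct substitution gives $U_T(q) = q/(q+\sigma_\varepsilon^2) - \kappa$. This single quantity is pivotal because it equals both $V_T(q)$ and the value of reasoning at the fast-mode break-even point. In case (i), $U_T(q) < 0$, equivalently $\kappa > q/(q+\sigma_\varepsilon^2)$: for $v \le q$, monotonicity of $V_T$ gives $V_T(v) \le V_T(q) < 0$, so $U_F$ dominates $U_T$, and $U_F(v) \ge 0$, so fast mode wins; for $v > q$, $U_T(v) < U_T(q) < 0$ and $U_F(v) < 0$, so abstention wins. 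In case (ii), $U_T(q) \ge 0$, so $V_T(q) \ge 0$; combined with $V_T(0) < 0$ and strict monotonicity, $V_T$ has a unique root $v_L \in (0,q]$, obtained by solving $v^2 - \kappa q v - \kappa q \sigma_\varepsilon^2 = 0$ for its positive root, which is exactly \eqref{eq:vL}. Similarly $U_T(v)=0$ reduces to $\sigma_T^2(v) = (1-\kappa)q$; since $\sigma_T^2$ increases from $0$ to the horizontal asymptote $\sigma_\varepsilon^2$, a finite solution exists iff $\sigma_\varepsilon^2 > (1-\kappa)q$, giving $v_H$ as in \eqref{eq:vH}, and otherwise $v_H = +\infty$. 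Piecing the inequalities together yields the three-region partition: for $v < v_L$, $V_T < 0$ and $U_F > 0$; for $v_L \le v \le v_H$, both $V_T \ge 0$ and $U_T \ge 0$; and for $v > v_H$, $U_T < 0$ and $U_F < 0$.

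\textbf{Main obstacle.} The economic content is carried entirely by monotonicity, so the real work lies in verifying the ordering $v_L \le q \le v_H$ inside case (ii). Both reduce to algebraic rearrangements of the case condition $\kappa \le q/(q+\sigma_\varepsilon^2)$: $v_L \le q$ unwinds (after squaring) to $\kappa(q+\sigma_\varepsilon^2) \le q$, and $q \le v_H$ unwinds to $(1-\kappa)(q+\sigma_\varepsilon^2) \ge \sigma_\varepsilon^2$. The one stylistic hazard is the degenerate branch $\sigma_\varepsilon^2 \le (1-\kappa)q$ in which $v_H = +\infty$: I would dispatch it by a brief limiting argument (noting that $U_T(v) \to 1 - \sigma_\varepsilon^2/q - \kappa \ge 0$ as $v \to \infty$ and that $U_T$ is decreasing, so $U_T$ stays nonnegative everywhere) rather than by pretending the fractional expression in \eqref{eq:vH} extends continuously through its singularity.
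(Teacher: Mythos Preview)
Your proposal is correct and mirrors the paper's proof almost exactly: both pivot on the monotonicity of $U_F$, $U_T$, and the reasoning premium $V_T=U_T-U_F$, anchor the case split at $v=q$ via the sign of $U_T(q)$, and extract $v_L$ from the quadratic $v^2-\kappa q v-\kappa q\sigma_\varepsilon^2=0$. One harmless slip: you write $V_T(\infty)=\sigma_\varepsilon^2/q-\kappa$, but in fact $V_T(v)=v^2/(q(v+\sigma_\varepsilon^2))-\kappa\to+\infty$; you may have been thinking of $U_T(\infty)=1-\sigma_\varepsilon^2/q-\kappa$, and since your argument never actually uses $V_T(\infty)$, nothing downstream is affected.
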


\begin{proof}
Write $v=\sigma^{2}(x)$ and $\xi\equiv\sigma_\varepsilon^2$.
From \eqref{eq:reasoning_fast}--\eqref{eq:reasoning_variance},
\[
U_F(v)=1-\frac{v}{q},
\qquad
U_T(v)=1-\frac{v\xi}{q(v+\xi)}-\kappa.
\]
Both are decreasing in $v$, and $U_T(v)$ is bounded below by $1-\xi/q-\kappa$. Moreover
\[
U_T(v)-U_F(v)=\frac{v^2}{q(v+\xi)}-\kappa,
\]
which is strictly increasing in $v$ (its derivative is positive for $v>0$). Hence there is a unique crossing point $v_L$ at which $U_T(v)=U_F(v)$, which solves the quadratic $v^2-\kappa q\,v-\kappa q\,\xi=0$ and yields \eqref{eq:vL}.

If $U_T(q)<0$, then $U_T(v)<0$ for all $v\ge q$ (since $U_T$ is decreasing) and $U_T(v)<U_F(v)$ for all $v\le q$
(since $U_T-U_F$ is increasing and negative at $q$), so reasoning is never optimal. If $U_T(q)\ge 0$, then $U_T-U_F$ crosses zero weakly before $q$, so $v_L\le q$, and reasoning dominates fast mode for all $v\ge v_L$. Reasoning dominates abstention if and only if $U_T(v)\ge 0$, which defines $v_H$ in \eqref{eq:vH}. Combining these comparisons yields the stated policy.
\end{proof}

\noindent  Equation \eqref{eq:vL} shows that a calibrated user buys reasoning when baseline variance is high enough that the variance reduction is worth paying the surcharge $\kappa$. Equation \eqref{eq:vH} makes the noise-floor logic transparent: when $\sigma_\varepsilon^2$ is large relative to stakes $q$, even reasoning cannot make the task safe enough, so the user abstains for sufficiently large $v$. When $\sigma_\varepsilon^2\le (1-\kappa)q$, the hardest tasks are handled by paying for reasoning rather than abstaining.

\begin{corollary}[Comparative statics]\label{cor:reasoning_cs}
Under Proposition~\ref{prop:reasoning_cutoff}:
\begin{enumerate}[label=(\roman*)]
\item $v_L$ is increasing in $\kappa$ (more expensive reasoning), in $q$ (lower stakes), and in $\sigma_\varepsilon^2$ (worse reasoning quality).
\item When $v_H<+\infty$, the feasibility cutoff $v_H$ is increasing in $q$ and decreasing in $\kappa$ and $\sigma_\varepsilon^2$.
\end{enumerate}
\end{corollary}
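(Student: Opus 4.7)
The plan is to verify each sign claim by working with whichever closed form is cleanest for the parameter in question. For $v_L$, the implicit representation $v^2=\kappa q(v+\sigma_\varepsilon^2)$ (the quadratic derived in the proof of Proposition~\ref{prop:reasoning_cutoff}) is more convenient than the explicit \eqref{eq:vL}, because implicit differentiation separates each parameter's effect into a single additive term. For $v_H$ (when finite), I would instead exploit the reciprocal representation $1/v_H=1/((1-\kappa)q)-1/\sigma_\varepsilon^2$, obtained by inverting \eqref{eq:vH}, which linearises all three dependencies and lets me read off signs by inspection.

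For part (i), implicit differentiation of the quadratic at $v=v_L$ gives
\[
(2v_L-\kappa q)\,dv_L \;=\; (v_L+\sigma_\varepsilon^2)\bigl(q\,d\kappa+\kappa\,dq\bigr)\;+\;\kappa q\,d\sigma_\varepsilon^2.
\]
Every factor on the right-hand side is strictly positive, so the three partials of $v_L$ inherit the sign of the coefficient $2v_L-\kappa q$. The one sign-verification step is to show this coefficient is positive; I would do this by noting from \eqref{eq:vL} that $2v_L-\kappa q=\sqrt{\kappa^2q^2+4\kappa q\sigma_\varepsilon^2}>0$. Reading off the three partials then yields strict monotonicity in $\kappa$, $q$, and $\sigma_\varepsilon^2$, as claimed.

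For part (ii), when the feasibility condition $\sigma_\varepsilon^2>(1-\kappa)q$ holds, the reciprocal form makes each dependence linear in a single reciprocal term. Inspection shows $1/v_H$ is strictly decreasing in $q$, strictly increasing in $\kappa$ (via the factor $1/(1-\kappa)$), and strictly increasing in $\sigma_\varepsilon^2$. Since $v_H>0$, inverting flips each sign and gives the claimed pattern: $v_H$ is increasing in $q$ and decreasing in $\kappa$ and $\sigma_\varepsilon^2$. The only wrinkle is domain-keeping: the differentiation is valid on the open region $\{\sigma_\varepsilon^2>(1-\kappa)q\}$ where $v_H$ is finite, and this region is preserved under local perturbations of any of the three parameters in directions consistent with the stated monotonicities.

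The main obstacle is minor: verifying that $2v_L-\kappa q>0$ so that the implicit-differentiation coefficient has the expected sign. Once that is in place, both parts reduce to inspection of signs of closed-form partials, and no deeper technical difficulty is anticipated.
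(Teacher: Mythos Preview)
Your argument is correct. The paper states this corollary without proof, treating the comparative statics as immediate from the closed forms \eqref{eq:vL} and \eqref{eq:vH}. Your route---implicit differentiation of the defining quadratic for $v_L$ and the reciprocal linearisation $1/v_H=1/((1-\kappa)q)-1/\sigma_\varepsilon^2$ for $v_H$---is a mildly more elegant alternative to differentiating the explicit formulas directly: it isolates each parameter's effect as a single signed term and reduces the only nontrivial check to $2v_L-\kappa q=\sqrt{\kappa^2q^2+4\kappa q\sigma_\varepsilon^2}>0$, which you verify. Both approaches are elementary and yield the same conclusions; yours avoids the messier square-root derivatives.
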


\subsection{Blind choice and the option value of selective reasoning}

A blind user cannot condition on $\sigma^2(x)$ and must choose an inference mode as a blanket policy, analogously to blind adoption in Section~\ref{sec:model}.
Let $\bar U_F(R)=1-1/R$ be the expected fast-mode payoff (Section~\ref{sec:complements}).
Under Definition~\ref{def:reasoning_noisy_anchor}, the expected payoff from always reasoning is
\begin{equation}\label{eq:UR_bar}
\bar U_T(R,\kappa,\sigma_\varepsilon^2)
\;\equiv\;
\E\!\left[U_T(x)\right]
=
1-\kappa-\frac{\E[\sigma_T^2(x)]}{q}.
\end{equation}
In the Brownian--Poisson baseline (Sections~\ref{sec:model}--\ref{sec:scaling}), it is convenient to write this in terms of the scale-free normalised variance $Z\equiv 3\lambda\sigma^2(x)$ defined in \eqref{eq:Z_def}, whose distribution is scale-invariant under Poisson coverage (Lemma~\ref{lem:scale_invariance}).
Let $\rho\equiv 3\lambda\sigma_\varepsilon^2$ be the reasoning noise floor in the same units.
Then \eqref{eq:reasoning_variance} implies
\begin{equation}\label{eq:ZR}
Z_T \;\equiv\;3\lambda\sigma_T^2(x)
=\frac{Z\rho}{Z+\rho},
\qquad\text{and hence}\qquad
\bar U_T(R,\kappa,\rho)
=1-\kappa-\frac{1}{R}\E\!\left[\frac{Z\rho}{Z+\rho}\right].
\end{equation}
A blind user chooses the best blanket rule,
\[
U^{\text{blind}}_{\text{modes}}(R,\kappa,\rho)
\;\equiv\;
\max\Big\{0,\ \bar U_F(R),\ \bar U_T(R,\kappa,\rho)\Big\}.
\]
By contrast, a calibrated user can choose the best mode task-by-task, earning
\[
U^{\text{cal}}_{\text{modes}}(R,\kappa,\rho)
\;\equiv\;
\E\!\left[\max\{0,\ U_F(x),\ U_T(x)\}\right].
\]
The gap $U^{\text{cal}}_{\text{modes}}-U^{\text{blind}}_{\text{modes}}$ is the option value of \emph{selective} reasoning.

\begin{proposition}[Jaggedness creates option value for reasoning]\label{prop:reasoning_option_value}
For any joint distribution of $(\sigma^2(x),\sigma_T^2(x))$,
\[
U^{\text{cal}}_{\text{modes}}\ \ge\ U^{\text{blind}}_{\text{modes}},
\]
with strict inequality whenever (i) $\Var(\sigma^2(x))>0$ and (ii) there is positive probability that different modes are
optimal at different realisations of $\sigma^2(x)$.
\end{proposition}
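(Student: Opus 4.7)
The plan is to treat this as a standard ``value of information'' inequality. The calibrated user selects from $\{$abstain, fast, reason$\}$ pointwise after observing $\sigma^2(x)$, whereas the blind user must commit to a single option ex ante. A pointwise dominance argument delivers the weak inequality; conditions (i)--(ii) rule out the degenerate cases in which no information advantage arises.

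First I would record the pointwise bound. Writing $U_0(x)\equiv 0$, for every realization of $\sigma^2(x)$ and every blanket choice $a\in\{0,F,T\}$,
\[
\max\{0,\ U_F(x),\ U_T(x)\} \;\ge\; U_a(x).
\]
Taking expectations yields $U^{\text{cal}}_{\text{modes}} \ge \E[U_a(x)]$ for each $a$, and maximizing over $a$ on the right gives
\[
U^{\text{cal}}_{\text{modes}} \;\ge\; \max\{0,\ \bar U_F(R),\ \bar U_T(R,\kappa,\rho)\} \;=\; U^{\text{blind}}_{\text{modes}}.
\]
This step uses no structure beyond the definitions of the two objects.

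For strict inequality, fix a blind-optimal blanket choice $a^*\in\{0,F,T\}$, so $U^{\text{blind}}_{\text{modes}}=\E[U_{a^*}(x)]$. Then
\[
U^{\text{cal}}_{\text{modes}} - U^{\text{blind}}_{\text{modes}} = \E\!\left[\max\{0,\ U_F(x),\ U_T(x)\} - U_{a^*}(x)\right],
\]
which is the expectation of a nonnegative integrand. It is strictly positive if and only if the event $\{\max\{0, U_F(x), U_T(x)\} > U_{a^*}(x)\}$ has positive probability. Condition (ii) supplies exactly this: if, with positive probability, different modes are strictly optimal at different realizations of $\sigma^2(x)$, then whichever single mode $a^*$ the blind user picks, some other mode strictly outperforms $a^*$ on a positive-probability event. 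Condition (i), $\Var(\sigma^2(x))>0$, is needed only to exclude the degenerate case in which $\sigma^2(x)$ is almost surely constant, under which $U_F$ and $U_T$ would be degenerate and (ii) could not hold.

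The argument is not mechanically difficult; the main care is in linking (ii) to strict dominance for a \emph{fixed} $a^*$. Under the cutoff structure of Proposition~\ref{prop:reasoning_cutoff}, the pointwise-optimal action is a three-region rule in $v=\sigma^2(x)$ (fast for $v<v_L$, reason for $v\in[v_L,v_H]$, abstain for $v>v_H$), so (ii) reduces to at least two of these regions carrying positive mass under the distribution of $\sigma^2(x)$. Whichever region the blanket $a^*$ corresponds to, any other occupied region contributes a strict gap $\max\{0,U_F,U_T\}-U_{a^*}>0$ on a positive-probability set, yielding strict inequality.
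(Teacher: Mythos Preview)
Your proof is correct. The weak inequality follows exactly as you write it: pointwise, $\max\{0,U_F(x),U_T(x)\}\ge U_a(x)$ for each fixed $a\in\{0,F,T\}$, so after taking expectations and then maximising the right-hand side over $a$ you recover $U^{\text{blind}}_{\text{modes}}$. Your treatment of the strict case is also sound: once a blind-optimal $a^*$ is fixed, condition~(ii) guarantees a positive-probability event on which some other mode strictly beats $a^*$, and the nonnegative integrand then has strictly positive expectation.

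The paper's proof reaches the same conclusion by a slightly different packaging: it observes that $g(a,b)\equiv\max\{0,a,b\}$ is convex (as a pointwise maximum of affine functions) and applies Jensen's inequality to obtain $\E[g(U_F,U_T)]\ge g(\E U_F,\E U_T)$, with strictness when the argmax switches across states. The two routes are equivalent in substance---your pointwise-dominance argument is precisely what underlies the convexity of $g$---but they have slightly different virtues. The Jensen phrasing is more compact for the weak inequality and makes the connection to ``option value as convexity'' explicit. Your direct approach is more elementary (it avoids invoking convexity machinery) and, in particular, yields a cleaner and more explicit argument for \emph{strict} inequality: by fixing $a^*$ and exhibiting a positive-probability set on which another mode strictly dominates, you make the role of condition~(ii) transparent, whereas the paper's one-line ``$g$ is not affine on the support'' is comparatively terse. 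Your observation that (i) is essentially subsumed by (ii) is also correct under the model's deterministic link $\sigma_T^2=\sigma_T^2(\sigma^2)$.
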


\begin{proof}
The function $g(a,b)\equiv \max\{0,a,b\}$ is convex in $(a,b)$ as the pointwise maximum of affine functions.
Therefore, Jensen's inequality gives
\[
\E[g(U_F(x),U_T(x))]\ \ge\ g(\E[U_F(x)],\E[U_T(x)]).
\]
The left-hand side is $U^{\text{cal}}_{\text{modes}}$ and the right-hand side equals
$\max\{0,\bar U_F,\bar U_T\}=U^{\text{blind}}_{\text{modes}}$.
Strictness holds whenever $g$ is not affine on the support of $(U_F,U_T)$, i.e.\ whenever the argmax switches across states.
\end{proof}

\noindent Proposition~\ref{prop:reasoning_option_value} is the ``reasoning'' analogue of the calibration value in Section~\ref{sec:calibration}. If tasks were homogeneous ($\sigma^2(x)$ constant), a blanket reasoning choice would be as good as task-level selection. Jaggedness creates dispersion in $\sigma^2(x)$, which makes per-task reasoning decisions valuable: the user can pay $\kappa$ only on the tasks that need it.

\subsection{Reasoning and the inspection paradox}

Section~\ref{sec:scaling} showed that user experience is length-biased toward long gaps (Proposition~\ref{prop:inspection_paradox}).
Because the direct benefit of reasoning is increasing in local uncertainty (Proposition~\ref{prop:reasoning_properties}), the inspection paradox systematically shifts the realised value of reasoning relative to evaluations that underweight long gaps.

To make the comparison sharp, consider two sampling schemes for gaps in the Poisson baseline.
A ``benchmark'' that samples a random \emph{gap} sees $X\sim \text{Exp}(\lambda)$, whereas a user sampling a random
\emph{location} sees a length-biased gap $X^\ast\sim \text{Gamma}(2,\lambda)$ (Proposition~\ref{prop:inspection_paradox}).
Conditional on a gap of length $X$ and a relative position $t\sim \text{Uniform}(0,1)$ within the gap,
baseline variance is $v=Xt(1-t)$ and reasoning variance is given by \eqref{eq:reasoning_variance}.

\begin{theorem}[Inspection-paradox amplification of reasoning value]\label{thm:reasoning_inspection}
Fix $(q,\kappa,\sigma_\varepsilon^2)$ and define the gross variance reduction
$\Delta(v)=v-\frac{v\sigma_\varepsilon^2}{v+\sigma_\varepsilon^2}$.
Let
\[
\bar \Delta^{\text{gap}}
\;\equiv\;
\E\!\left[\Delta\!\big(Xt(1-t)\big)\right],
\qquad
\bar \Delta^{\text{use}}
\;\equiv\;
\E\!\left[\Delta\!\big(X^\ast t(1-t)\big)\right],
\]
where in both expectations $t\sim\text{Uniform}(0,1)$ is independent of the gap length, and
$X\sim\text{Exp}(\lambda)$ while $X^\ast\sim\text{Gamma}(2,\lambda)$.
Then $\bar \Delta^{\text{use}}>\bar \Delta^{\text{gap}}$.
Equivalently, for any fixed reasoning cost $\kappa$, gap-uniform evaluations understate the expected net gain
$\E[V_T(x)]$ from reasoning in actual use.
\end{theorem}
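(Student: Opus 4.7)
The plan is to reduce the claim to a stochastic dominance comparison between $X$ and $X^\ast$, applied to the increasing function $\Delta$ from Proposition~\ref{prop:reasoning_properties}. The key observation is that the two expectations $\bar\Delta^{\text{gap}}$ and $\bar\Delta^{\text{use}}$ differ only in the distribution of the gap length $Y\in\{X,X^\ast\}$; the within-gap location $t\sim\mathrm{Uniform}(0,1)$ is the same in both. So if I can show (i) that $X^\ast$ first-order stochastically dominates $X$, and (ii) that $y\mapsto \Delta(y\,t(1-t))$ is (strictly) increasing for $t\in(0,1)$, then conditioning on $t$ and applying FOSD yields the inequality, and the strict version follows because the set of $t$ with $t(1-t)>0$ has full Lebesgue measure on $(0,1)$.

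First I would establish FOSD. The densities are $f_X(y)=\lambda e^{-\lambda y}$ and $f_{X^\ast}(y)=\lambda^2 y e^{-\lambda y}$, so the likelihood ratio $f_{X^\ast}(y)/f_X(y)=\lambda y$ is monotone increasing in $y$. This is the monotone likelihood ratio property, which implies $X^\ast\succeq_{\text{FOSD}} X$; alternatively one can check directly that $\Pr(X^\ast>y)=(1+\lambda y)e^{-\lambda y}>e^{-\lambda y}=\Pr(X>y)$ for all $y>0$. Second, from Proposition~\ref{prop:reasoning_properties}(ii), $\Delta(v)=v^2/(v+\sigma_\varepsilon^2)$ is strictly increasing in $v\ge 0$, and for any fixed $t\in(0,1)$ the map $y\mapsto \Delta(y\,t(1-t))$ is therefore strictly increasing in $y$.

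Combining, for each $t\in(0,1)$, the standard FOSD characterization gives
\[
\E\!\left[\Delta\!\big(X^\ast t(1-t)\big)\,\big|\,t\right]
\;>\;
\E\!\left[\Delta\!\big(X t(1-t)\big)\,\big|\,t\right],
\]
with strict inequality because $\Delta(\cdot\, t(1-t))$ is strictly increasing and $X^\ast$ strictly dominates $X$. Integrating over $t\sim\mathrm{Uniform}(0,1)$ (and noting $t(1-t)>0$ almost surely) yields $\bar\Delta^{\text{use}}>\bar\Delta^{\text{gap}}$. For the ``equivalently'' clause, recall from \eqref{eq:reasoning_increment} that $V_T(x)=\Delta(\sigma^2(x))/q-\kappa$, so $\E[V_T(x)]=\bar\Delta/q-\kappa$ under either sampling scheme; since $\kappa$ is common, the same strict inequality between $\bar\Delta^{\text{use}}$ and $\bar\Delta^{\text{gap}}$ transfers to the expected net gains, which is the ``understatement'' conclusion.

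The argument has no hard step; the only thing to be careful about is the strictness of the inequality, which requires noting that the MLR/FOSD comparison between $X$ and $X^\ast$ is strict (the densities differ on a set of positive measure) and that $\Delta$ is strictly increasing wherever its argument is positive, so conditioning on $t$ in $(0,1)$ preserves strictness before integrating. Everything else (dominated convergence to justify the interchange of expectation and integration in $t$, finiteness of $\E[\Delta(X^\ast t(1-t))]$ since $\Delta(v)\le v$ and $X^\ast$ has finite mean) is standard.
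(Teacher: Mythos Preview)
Your proposal is correct and follows essentially the same route as the paper: fix $t\in(0,1)$, use that $y\mapsto\Delta(y\,t(1-t))$ is strictly increasing, combine with the (strict) first-order stochastic dominance $X^\ast\succeq_{\text{FOSD}}X$ established via the survival-function comparison, and integrate over $t$. Your additions (the MLR route to FOSD, the explicit handling of the ``equivalently'' clause via $V_T(x)=\Delta(\sigma^2(x))/q-\kappa$, and the finiteness/strictness bookkeeping) are sound and slightly more careful than the paper's version, but not a different argument.
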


\begin{proof}
For every fixed $t\in(0,1)$, the map $x\mapsto \Delta(x\,t(1-t))$ is strictly increasing in $x$.
Moreover, $X^\ast$ first-order stochastically dominates $X$ in the Poisson case:
$\Pr(X^\ast>z)=e^{-\lambda z}(1+\lambda z)>\Pr(X>z)=e^{-\lambda z}$ for all $z>0$.
Hence for each $t$,
$\E[\Delta(X^\ast t(1-t))\mid t]>\E[\Delta(X t(1-t))\mid t]$,
and integrating over $t$ yields the claim.
\end{proof}

\noindent  Theorem~\ref{thm:reasoning_inspection} is the reasoning analogue of the ``benchmark wedge'' in
Section~\ref{sec:complements}. Benchmarks underweight the longest gaps and therefore understate the \emph{expected} value of costly reasoning and, consequently, the welfare gains from improvements in reasoning quality. In the special case where the optimal reasoning region is an upper tail (i.e.\ $v_H=+\infty$ in Proposition~\ref{prop:reasoning_cutoff}), benchmarks also understate how frequently users optimally choose reasoning.

\subsection{Interactions with scale and regularity}

Reasoning introduces a design margin that is qualitatively different from the training-time scale. Scale ($\lambda$) reduces $\sigma^2(x)$ everywhere (Proposition~\ref{prop:scaling_law}), whereas reasoning reduces variance \emph{conditionally} and at a per-task cost. The two interact through the distribution of local variances: scaling changes how frequently the user encounters tasks with $v$ above the reasoning-entry cutoff $v_L$ (Proposition~\ref{prop:reasoning_cutoff}), and reasoning changes how costly the remaining long-gap tail is.

A simple way to see this is via the limiting behaviour of $\bar U_T$ in \eqref{eq:ZR}. Because $Z$ is scale-free under Poisson coverage, the dependence of always-reasoning performance on $\lambda$ is driven by $\rho=3\lambda\sigma_\varepsilon^2$.

\begin{proposition}[Always-reasoning limits under Poisson coverage]\label{prop:reasoning_scale_limits}
In the Brownian--Poisson baseline, if the user reasons on every task, then:
\begin{enumerate}[label=(\roman*)]
\item (Sparse scale) As $\lambda\to 0$ (equivalently $\rho\to 0$),
\[
\E[\sigma_T^2(x)]\to \sigma_\varepsilon^2
\qquad\text{and}\qquad
\bar U_T \to 1-\kappa-\frac{\sigma_\varepsilon^2}{q}.
\]
\item (Dense scale) As $\lambda\to\infty$ (equivalently $\rho\to\infty$),
\[
\E[\sigma_T^2(x)]\sim \frac{1}{3\lambda}
\qquad\text{and}\qquad
\bar U_T \sim 1-\kappa-\frac{1}{R}.
\]
\end{enumerate}
\end{proposition}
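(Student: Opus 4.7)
The plan is to work throughout in the scale-free representation already established by Lemma~\ref{lem:scale_invariance}. Recall $Z\equiv 3\lambda\sigma^{2}(x)$ has a distribution that does not depend on $\lambda$, with $\E[Z]=1$ by \eqref{eq:variance}. Writing \eqref{eq:reasoning_variance} in these units gives $\sigma_T^{2}(x)=Z\sigma_\varepsilon^{2}/(Z+\rho)$ with $\rho\equiv 3\lambda\sigma_\varepsilon^{2}$, so the single master identity I will use is
\[
\E[\sigma_T^{2}(x)]\;=\;\sigma_\varepsilon^{2}\,\E\!\left[\frac{Z}{Z+\rho}\right].
\]
Both limits are then comparative statics in $\rho$ applied to this expression, after which $\bar U_T=1-\kappa-\E[\sigma_T^{2}(x)]/q$ gives the utility statements by substitution.

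For the sparse scale limit (i), $\lambda\to 0$ means $\rho\to 0$. Since $\sigma^{2}(x)>0$ a.s.\ (the relative position $t\in(0,1)$ and the length-biased gap $X^{\ast}>0$ a.s.), we have $Z>0$ a.s., and pointwise $Z/(Z+\rho)\uparrow 1$ as $\rho\downarrow 0$. The integrand is dominated by $1$, so monotone (or dominated) convergence gives $\E[Z/(Z+\rho)]\to 1$, hence $\E[\sigma_T^{2}(x)]\to\sigma_\varepsilon^{2}$, and $\bar U_T\to 1-\kappa-\sigma_\varepsilon^{2}/q$, as claimed.

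For the dense scale limit (ii), I multiply by $\rho$ to keep the integrand bounded by an integrable envelope: $\rho\,\E[Z/(Z+\rho)]=\E[Z\rho/(Z+\rho)]$. Pointwise $Z\rho/(Z+\rho)\uparrow Z$ as $\rho\uparrow\infty$, and $Z\rho/(Z+\rho)\le Z$ with $\E[Z]=1<\infty$, so dominated convergence gives $\E[Z\rho/(Z+\rho)]\to\E[Z]=1$. Therefore $\E[Z/(Z+\rho)]\sim 1/\rho$, and using $\sigma_\varepsilon^{2}/\rho=1/(3\lambda)$,
\[
\E[\sigma_T^{2}(x)]\;=\;\sigma_\varepsilon^{2}\,\E\!\left[\frac{Z}{Z+\rho}\right]\;\sim\;\frac{\sigma_\varepsilon^{2}}{\rho}\;=\;\frac{1}{3\lambda}.
\]
Substituting into $\bar U_T$ and using $R=3\lambda q$ yields $\bar U_T\sim 1-\kappa-1/R$.

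The only nontrivial step is choosing the right form before invoking dominated convergence in (ii): writing $\E[\sigma_T^{2}]$ directly as $\E[\sigma^{2}\sigma_\varepsilon^{2}/(\sigma^{2}+\sigma_\varepsilon^{2})]$ would require either a higher moment of $Z$ to control $Z^{2}/(Z+\rho)$ or a more delicate Taylor expansion. Multiplying by $\rho$ first turns the problem into a bounded-by-$Z$ statement that is handled by the already established first moment $\E[Z]=1$. The remaining ingredients (scale-invariance of the law of $Z$, positivity of $Z$, and the normalisation $\E[Z]=1$) are all inherited from earlier results, so no new structure is needed.
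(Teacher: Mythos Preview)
Your proof is correct and follows essentially the same approach as the paper: both establish the master identity $\E[\sigma_T^2(x)]=\sigma_\varepsilon^2\,\E[Z/(Z+\rho)]=\frac{1}{3\lambda}\E[Z\rho/(Z+\rho)]$, apply dominated convergence with envelope $1$ for the $\rho\to 0$ limit, and apply dominated convergence with envelope $Z$ (using $\E[Z]=1$) to $Z\rho/(Z+\rho)$ for the $\rho\to\infty$ limit. Your additional remarks on why the ``multiply by $\rho$'' rewriting is the clean route are helpful but do not change the underlying argument.
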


\begin{proof}
Using \eqref{eq:ZR},
\[
\E[\sigma_T^2(x)]
=\frac{1}{3\lambda}\E\!\left[\frac{Z\rho}{Z+\rho}\right]
=\sigma_\varepsilon^2\;\E\!\left[\frac{Z}{Z+\rho}\right].
\]
As $\rho\to 0$, $\frac{Z}{Z+\rho}\uparrow 1$ pointwise and is bounded by $1$, so dominated convergence gives
$\E\!\left[\frac{Z}{Z+\rho}\right]\to 1$ and hence $\E[\sigma_T^2(x)]\to\sigma_\varepsilon^2$.
As $\rho\to\infty$, $\frac{Z\rho}{Z+\rho}\to Z$ pointwise and is dominated by $Z$. By dominated convergence,
$\E\!\left[\frac{Z\rho}{Z+\rho}\right]\to \E[Z]=1$, and therefore
\[
\E[\sigma_T^2(x)]
=\frac{1}{3\lambda}\E\!\left[\frac{Z\rho}{Z+\rho}\right]
\sim \frac{1}{3\lambda}.
\]
Since $\rho=3\lambda\sigma_\varepsilon^2$, this is exactly the dense-scale limit as $\lambda\to\infty$ (holding
$\sigma_\varepsilon^2$ fixed).
Substituting into \eqref{eq:UR_bar} yields the stated limits.
\end{proof}

\noindent Proposition~\ref{prop:reasoning_scale_limits} shows that reasoning can act as a partial substitute for training-time scale in the low-$\lambda$ regime: it prevents the long-gap tail from exploding by imposing an error floor $\sigma_\varepsilon^2$. But reasoning does not ``beat'' scale in the high-$\lambda$ regime: once typical $\sigma^2(x)$ is well below the reasoning floor, reasoning does little besides impose the per-task cost $\kappa$.

Regularity interacts similarly. Without reasoning, the experienced mean variance scales with the second moment of gap lengths (Section~\ref{sec:complements}). With a reasoning floor, extremely long gaps contribute approximately $\sigma_\varepsilon^2$ per unit of task-space length rather than exploding quadratically, so the welfare return to targeting the longest gaps is attenuated on the \emph{intensive} margin. At the same time, if reasoning is costly and used selectively (Proposition~\ref{prop:reasoning_cutoff}), increasing regularity reduces how often users enter the long-gap tail where they must pay $\kappa$. This creates a natural three-way design space: training-time scale $(\lambda)$, coverage regularity (Section~\ref{sec:complements}), and inference-side reasoning quality $(\kappa,\sigma_\varepsilon^2)$.

In sum, reasoning does not eliminate jaggedness: it changes its economic geometry. It caps the losses from the longest gaps, but the inspection paradox ensures that those same gaps still dominate both (i) how often reasoning is invoked and (ii) how salient failures remain.

\section{Mastery: Learning Local Reliability}\label{sec:mastery}

Calibration (Section~\ref{sec:calibration}) is an information benchmark: the user observes $\sigma^2(x)$ and can
abstain on locally unreliable tasks. In most deployments, users do not receive such a clean reliability signal.
Instead, they invest in \emph{mastery}: learning, through repeated use, experimentation, and verification, which
task framings are safe, which are brittle, and how far successful patterns generalise.

This section formalises mastery as a learning problem and then studies its interaction with scaling investment.
The key message is not that mastery is ``better'' than scale, or vice versa. It is that scaling affects \emph{technology}
while mastery affects \emph{discoverability}: scaling can be real yet economically latent when improvements occur
in regions that users have learned to avoid or cannot infer about.

\subsection{Mastery as learning a reliability map}

In this section, the subscript $t=1,2,\ldots$ is a time index for sequential evaluations. This is unrelated to the within-gap coordinate $t\in[0,1]$ used earlier when averaging inside a gap.

Fix a model version (a fixed $\lambda$). Let $\mathcal{Z}$ denote the space of \emph{task representations} that a user
can induce: prompts, tool configurations, retrieval settings, input formats, and any other design degrees of freedom
that can move the realised interaction to a different location in task space.\footnote{This interpretation matters:
mastery is not merely ``knowing whether to trust the model'' at a fixed task. It is also learning how to \emph{move}
tasks into more reliable regions by rewriting, decomposing, adding context, or changing tools. Modelling this as a
choice of $x\in\mathcal{Z}$ is a reduced-form way to capture prompt libraries, workflow engineering, and standard
operating procedures.}

Let $V:\mathcal{Z}\to\R_+$ denote the model's local mean-squared error (or more generally, a local unreliability
index). In the Brownian--Poisson baseline, $V(x)=\sigma^2(x)$ is the local posterior variance induced by the hidden
knowledge-point configuration. The user does not observe $V(x)$ directly and must learn it.

From this point onward, $V(\cdot)$ is treated as an arbitrary (possibly jagged) reliability map. In the Brownian--Poisson benchmark, $V(x)=\sigma^2(x)$ is generated by hidden knowledge gaps; in applications, it could instead reflect any task-dependent error process. Assumption~\ref{ass:gp} is a model of the user's \emph{beliefs} about $V(\cdot)$ (and may be misspecified); the information-gain term $\gamma_t$ is a property of the kernel $k$ that governs how quickly subjective uncertainty shrinks under feedback.

\begin{assumption}[Feedback and verification noise]\label{ass:feedback}
When the user evaluates the model at $x_t\in\mathcal{Z}$ (e.g.\ by verifying against ground truth or by running an
audited test), they observe
\[
y_t = V(x_t) + \eta_t,
\]
where $(\eta_t)_{t\ge 1}$ are i.i.d.\ conditional on $(x_s)_{s\ge 1}$ and $V$, with $\eta_t\sim\mathcal{N}(0,\sigma_n^2)$.
\end{assumption}

\noindent  Even in settings with ``objective'' evaluation, reliability is observed with noise: tasks vary within the same prompt template, evaluation is imperfect, and organisations use proxy checks (spot audits, consensus checks, or downstream error signals) rather than full ground truth. We assume Gaussian noise for analytical convenience and because the mutual-information identity used in Theorem~\ref{thm:learning_rate} is exact under a Gaussian likelihood.

\begin{assumption}[Beliefs: a Gaussian-process prior]\label{ass:gp}
The user models $V(\cdot)$ as a sample from a Gaussian process prior $V\sim \GP(0,k)$ with bounded kernel
$k(x,x)\le 1$.
\end{assumption}

\noindent A Gaussian process (or GP) prior is a tractable representation of a sophisticated user's belief that ``similar tasks have similar reliability. The kernel $k$ encodes what the user regards as similarity: changing a few words in a prompt may be viewed as a small perturbation (high $k$), while switching domains or tools may be treated as essentially unrelated (low $k$). This is not merely statistical convenience: interface design (templates, tool affordances, workflow structure) and organisational practice (prompt libraries, input schemas) shape $k$ by determining which transformations users regard as meaningful and transferable.

Let $\mu_t(x)\equiv \E[V(x)\mid \mathcal{F}_t]$ and $s_t^2(x)\equiv \Var(V(x)\mid \mathcal{F}_t)$ denote the GP posterior mean and variance after $t$ feedback observations, where $\mathcal{F}_t$ is the sigma-algebra generated by the history. To avoid confusion with the model's local error $\sigma^2(x)$, we reserve $\sigma^2(x)$ for the \emph{true} local error map and use $s_t^2(x)$ for the user's \emph{uncertainty} about that map.

A natural measure of mastery is worst-case uncertainty:
\begin{equation}\label{eq:mastery_metric}
m_t \;\equiv\; \sup_{x\in\mathcal{Z}} s_t^2(x).
\end{equation}
A large $m_t$ means there remain regions in which the user cannot reliably distinguish safe from unsafe delegation.

\subsection{An optimistic benchmark: how fast could mastery learn?}

If the user is willing to run deliberate experiments, a simple strategy is \emph{uncertainty sampling}:
\begin{equation}\label{eq:uncertainty_sampling}
x_{t+1}\in \arg\max_{x\in\mathcal{Z}} s_t^2(x).
\end{equation}
This formalises systematic stress-testing: probe where you are least sure. It is an optimistic benchmark because it
assumes the user can afford to sample points that are potentially unsafe; in high-stakes settings, that assumption will
fail (and that failure is precisely where the interaction with scaling becomes interesting).

Define the \emph{maximum information gain} after $t$ samples,
\begin{equation}\label{eq:gamma_def}
\gamma_t \;\equiv\; \max_{A\subset \mathcal{Z}: |A|=t} I\!\left(V; y_A\right),
\end{equation}
where $y_A$ denotes the vector of noisy observations at design points in $A$.

\begin{theorem}[A speed limit on mastery]\label{thm:learning_rate}
Under Assumptions~\ref{ass:feedback}--\ref{ass:gp}, suppose the user chooses $x_{t+1}$ by uncertainty sampling
\eqref{eq:uncertainty_sampling}. Define the hitting time
\[
t_0 \;\equiv\; \inf\{t\in\mathbb{N}_0 : m_t\le \sigma_n^2\},
\]
with the convention $t_0=\infty$ if the set is empty. Then for all integers
$t\ge t_0+1$ (when $t_0<\infty$),
\[
m_t \le \frac{4\sigma_n^2\,\gamma_t}{t-t_0}.
\]
In particular, asymptotically $m_t = O(\sigma_n^2 \gamma_t/t)$.
\end{theorem}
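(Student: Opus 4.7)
The plan is to follow the standard Gaussian-process bandit argument that links cumulative posterior variance to mutual information \citep{srinivas2010gaussian,vakili2021information} and then specialise it to uncertainty sampling. Under Assumptions~\ref{ass:feedback}--\ref{ass:gp}, the posterior variance $s_t^2(x)$ is a deterministic function of the design $x_{1:t}$ alone (it does not depend on the realised $y_{1:t}$), so uncertainty sampling \eqref{eq:uncertainty_sampling} is well-defined path-by-path and the ``greedy max-variance'' selection can be analysed as a predictable sequence. Two ingredients do the work: (i) the chain-rule identity for Gaussian mutual information,
\[
I(V;y_{1:t})=\tfrac{1}{2}\sum_{s=1}^{t}\log\!\bigl(1+s_{s-1}^2(x_s)/\sigma_n^2\bigr),
\]
which upper-bounds the log-sum by $\gamma_t$ via \eqref{eq:gamma_def}; and (ii) monotonicity of posterior variance in the number of observations, so that $m_t=\sup_x s_t^2(x)$ is non-increasing in $t$.

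\textbf{Main chain of inequalities.} Under uncertainty sampling, the variance at the chosen point equals the running maximum: $s_{s-1}^2(x_s)=m_{s-1}$. Combining (i) with the definition of $\gamma_t$ yields $\sum_{s=1}^{t}\log(1+m_{s-1}/\sigma_n^2)\le 2\gamma_t$. To convert this log-sum into a linear sum of the $m_{s-1}$'s, I need a bound of the form $\log(1+u)\ge \beta u$ valid on the relevant range of $u=m_{s-1}/\sigma_n^2$. This is where $t_0$ enters: once $s-1\ge t_0$, monotonicity of $m_t$ guarantees $m_{s-1}\le \sigma_n^2$, so $u\in[0,1]$ and the elementary inequality $\log(1+u)\ge u/2$ applies. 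Restricting the sum to $s\in\{t_0+1,\dots,t\}$ and discarding the nonnegative earlier terms therefore gives $\sum_{s=t_0+1}^{t}m_{s-1}\le 4\sigma_n^2\gamma_t$.

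\textbf{Closing step.} Because $m_{s-1}\ge m_{t}$ for every $s\le t$ by monotonicity, the left-hand sum lower-bounds $(t-t_0)m_t$, yielding the stated bound $m_t\le 4\sigma_n^2\gamma_t/(t-t_0)$. The asymptotic statement $m_t=O(\sigma_n^2\gamma_t/t)$ follows because $t_0$ is a finite (problem-dependent) constant whenever the kernel $k$ is non-degenerate and $\mathcal{Z}$ is reachable by the sampling rule, so $t-t_0\sim t$.

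\textbf{Main obstacle.} The only real subtlety is handling the pre-$t_0$ phase cleanly. Before $m_t$ has dropped below $\sigma_n^2$, the log-to-linear inequality fails and a direct one-shot argument blows up; the theorem sidesteps this by indexing the result in the post-warmup denominator $t-t_0$, but the proof must make explicit that (a) the discarded log terms for $s\le t_0$ are nonnegative (so dropping them preserves the inequality) and (b) monotonicity of $m_t$ makes $t_0$ a well-defined first-crossing time that, once crossed, is never re-entered. A secondary subtlety is that identity (i) is a predictive statement about a design that may be chosen adaptively; one must verify (as above) that under a Gaussian likelihood the posterior variance depends only on $x_{1:t}$, so uncertainty sampling is effectively data-independent from the standpoint of the information-gain calculation.
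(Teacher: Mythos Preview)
Your proposal is correct and follows essentially the same route as the paper: the mutual-information chain rule bounded by $\gamma_t$, the elementary inequality $\log(1+u)\ge u/2$ applied on the post-$t_0$ regime where $m_{s-1}/\sigma_n^2\le 1$, and monotonicity of $m_t$ to convert the variance sum into $(t-t_0)m_t$. Your additional remarks on why the GP posterior variance is design-determined (so uncertainty sampling is well-posed for the information-gain identity) and on the finiteness of $t_0$ are sound clarifications that the paper leaves implicit.
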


\begin{proof}
For GP regression with Gaussian observation noise variance $\sigma_n^2$, the mutual information between $V$ and
observations at points $x_1,\dots,x_t$ satisfies
\[
I(V; y_{1:t})=\frac{1}{2}\sum_{s=1}^t \log\!\left(1+\frac{s_{s-1}^2(x_s)}{\sigma_n^2}\right)
\]
(e.g.\ \citet{rasmussen2006gaussian}). Since $\gamma_t$ is the maximum over all designs, $I(V;y_{1:t})\le \gamma_t$.

For $a\in[0,1]$, $\log(1+a)\ge a/2$. By definition of $t_0$ and monotonicity of posterior variances, for all $s\ge t_0+1$
we have $s_{s-1}^2(x_s)\le m_{s-1}\le m_{t_0}\le \sigma_n^2$, hence $s_{s-1}^2(x_s)/\sigma_n^2\in[0,1]$ and
\[
\log\!\left(1+\frac{s_{s-1}^2(x_s)}{\sigma_n^2}\right)\ge \frac{1}{2}\frac{s_{s-1}^2(x_s)}{\sigma_n^2}.
\]
Therefore,
\[
\gamma_t \ge I(V;y_{1:t})
\ge \frac{1}{2}\sum_{s=t_0+1}^t \log\!\left(1+\frac{s_{s-1}^2(x_s)}{\sigma_n^2}\right)
\ge \frac{1}{4\sigma_n^2}\sum_{s=t_0+1}^t s_{s-1}^2(x_s).
\]
Under uncertainty sampling, $s_{s-1}^2(x_s)=m_{s-1}$ and $\{m_s\}$ is non-increasing, hence
\[
m_t \le \frac{1}{t-t_0}\sum_{s=t_0+1}^t m_{s-1}
= \frac{1}{t-t_0}\sum_{s=t_0+1}^t s_{s-1}^2(x_s)
\le \frac{4\sigma_n^2\,\gamma_t}{t-t_0}.
\]
\end{proof}

\noindent 
Theorem~\ref{thm:learning_rate} gives an \emph{upper bound} on residual uncertainty under uncertainty sampling:
worst-case posterior uncertainty about local reliability satisfies $m_t=O(\sigma_n^2\gamma_t/t)$ (up to constants and the noise floor).
The term $\gamma_t$ is the intrinsic complexity of the reliability map under the user's similarity notion. If users treat the task space as high-dimensional or weakly structured, $\gamma_t$ grows quickly and mastery is slow.
If the task space is effectively low-dimensional (e.g.\ a narrow workflow with stable templates), $\gamma_t$ grows slowly, and mastery can be fast.

\begin{proposition}[Information-gain growth rates]\label{prop:mig_rates}
Let $\mathcal{Z}\subset\R^d$ be compact and suppose $k$ is either:
\begin{enumerate}[label=(\alph*)]
\item a squared-exponential (RBF) kernel, or
\item a Mat\'ern kernel with smoothness parameter $\nu>0$.
\end{enumerate}
Then the maximum information gain satisfies:
\begin{align*}
\text{(RBF)}\qquad &\gamma_t = O\!\left((\log t)^{d+1}\right) \qquad \text{\citep{srinivas2010gaussian}},\\
\text{(Mat\'ern)}\qquad &\gamma_t = \tilde O\!\left(t^{\frac{d}{2\nu+d}}\right) \qquad \text{\citep{vakili2021information}},
\end{align*}
where $\tilde O(\cdot)$ suppresses polylogarithmic factors.
\end{proposition}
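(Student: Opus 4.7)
The plan is to reduce the maximum information gain to an eigenvalue-decay problem for the kernel integral operator and then invoke the two cited results. For a Gaussian likelihood with noise variance $\sigma_n^2$,
\[
I(V;y_A)=\tfrac{1}{2}\log\det\!\left(I+\sigma_n^{-2}K_A\right),
\]
where $K_A$ is the kernel Gram matrix at the design $A$. Maximising over $A$ with $|A|=t$ therefore becomes a combinatorial trace/log-determinant problem, and the standard route is to relate it to the spectral tail of the Mercer operator $T_k f(x)=\int k(x,x')f(x')\,d\mu(x')$ on $L^2(\mathcal{Z})$. A classical truncation argument (splitting the spectrum into the leading $T$ eigenvalues, which contribute roughly $T\log(t/\sigma_n^2)$, and a residual whose contribution is controlled by $\sum_{j>T}\lambda_j$) gives a generic bound
\[
\gamma_t \;\lesssim\; T\log t \;+\; \sigma_n^{-2}\,t\,\sum_{j>T}\lambda_j,
\]
and the two cases differ only in how the eigenvalues $\lambda_j$ of $T_k$ decay.

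For case (a), the RBF/squared-exponential kernel has eigenvalues that decay essentially geometrically in $j^{1/d}$, so one can take $T$ of order $(\log t)^d$ and make the tail negligible; optimising gives $\gamma_t=O((\log t)^{d+1})$. This is exactly Theorem 5 of \citet{srinivas2010gaussian}, and I would simply cite it after verifying that Assumption~\ref{ass:gp} (bounded kernel on compact $\mathcal{Z}$) matches their hypotheses. For case (b), the Matérn-$\nu$ kernel generates a reproducing kernel Hilbert space norm-equivalent to the Sobolev space $H^{\nu+d/2}(\mathcal{Z})$, so the eigenvalues of $T_k$ decay polynomially at rate $\lambda_j\asymp j^{-(2\nu+d)/d}$. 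Plugging this into the truncation bound and optimising $T$ yields $\gamma_t=\tilde O(t^{d/(2\nu+d)})$. The sharp version with the correct polylog factors is the main result of \citet{vakili2021information}, which closed a gap in the earlier Srinivas bound; I would invoke it directly once the kernel and domain assumptions are checked.

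Since both conclusions are stated as citations, my proof is essentially one paragraph of bookkeeping: specialise the cited theorems to the regularity and boundedness assumed in Assumptions~\ref{ass:feedback}--\ref{ass:gp}, note that $k(x,x)\le 1$ and compactness of $\mathcal{Z}$ match the standing assumptions in both references, and report the resulting rates. The only substantive step the reader might want is a brief justification that the assumed bounded kernel on compact $\mathcal{Z}$ yields a trace-class operator with the stated eigenvalue decay; this follows from standard Mercer theory for RBF kernels and from Sobolev embedding for Matérn kernels.

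The main obstacle is conceptual rather than technical: one must be careful that the information-gain quantity $\gamma_t$ in \eqref{eq:gamma_def} is defined as a maximum over designs of size $t$, matching the definitions in the cited papers, and that the user's GP beliefs (rather than the true data-generating process for $V$) govern the bound in Theorem~\ref{thm:learning_rate}. Provided this is flagged, the proof reduces to citation. If instead one wanted a self-contained derivation, the delicate step would be the Matérn case, where obtaining the correct exponent $d/(2\nu+d)$ requires matching upper and lower bounds on the Sobolev eigenvalue decay and a careful choice of the truncation level $T$ as a function of $t$ and $\sigma_n^2$.
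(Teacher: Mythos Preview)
Your proposal is correct; the paper gives no proof for this proposition at all, treating it as a direct quotation of results from \citet{srinivas2010gaussian} and \citet{vakili2021information}. Your sketch of the eigenvalue-truncation argument behind those results, and your observation that the proof reduces to checking that Assumptions~\ref{ass:feedback}--\ref{ass:gp} match the standing hypotheses in the cited papers, goes beyond what the paper itself provides and is entirely appropriate.
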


\noindent 
Combining Proposition~\ref{prop:mig_rates} with Theorem~\ref{thm:learning_rate} yields a sample-complexity message: to drive worst-case uncertainty below $\varepsilon$, one needs on the order of $t\gtrsim \sigma_n^2 \gamma_t/\varepsilon$ evaluations. This aligns with observed practice: users can quickly learn reliable prompt playbooks for structured domains (low effective dimension), but struggle to develop stable policies for open-ended tasks (high effective dimension).

\subsection{Mastery as delegation under uncertainty}

Mastery matters economically because it determines \emph{where} the user delegates. Given local error $V(x)$,
per-task expected utility from using the AI is $U(x)=1-V(x)/q$ (outside option normalised to $0$).
If the user knew $V(x)$ exactly, the optimal rule would match perfect calibration: delegate if and only if $V(x)\le q$.
This is a useful upper bound.

\begin{lemma}[Perfect mastery replicates calibration]\label{lem:perfect_mastery}
If the user knows $V(x)$ pointwise, then the optimal delegation rule is $\pi^\star(x)=\1\{V(x)\le q\}$ and expected
utility is
\[
U_M^\star(\lambda,q)=\E\!\left[\left(1-\frac{V(x)}{q}\right)_+\right].
\]
In the baseline where $V(x)=\sigma^2(x)$, this coincides with calibrated utility $U_C(\lambda,q)$ from
Section~\ref{sec:calibration}.
\end{lemma}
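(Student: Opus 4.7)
The plan is to exploit the fact that with pointwise knowledge of $V(x)$, the delegation problem decouples across tasks. There is no budget constraint, no scarcity, and no intertemporal linkage across draws, so expected utility is maximised by maximising the per-task payoff pointwise, for almost every realisation of $x$. This reduces the proof to a one-line static optimisation followed by taking an expectation.

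First I would characterise the optimal per-task action. Under a delegation policy $\pi:\mathcal{Z}\to\{0,1\}$, the realised utility at task $x$ is $\pi(x)\cdot(1-V(x)/q)$, since the outside option is normalised to $0$. The pointwise optimum therefore sets $\pi^\star(x)=1$ whenever $1-V(x)/q\ge 0$ and $\pi^\star(x)=0$ otherwise, which is exactly $\pi^\star(x)=\1\{V(x)\le q\}$. Tie-breaking at $V(x)=q$ is irrelevant because both actions yield payoff $0$ there.

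Second, I would compute the value of this rule. Substituting $\pi^\star$ gives realised utility $\max\{0,\,1-V(x)/q\}=(1-V(x)/q)_+$ at every $x$, and taking expectations (over both the task $x$ and the random knowledge-point configuration determining $V$) yields $U_M^\star(\lambda,q)=\E[(1-V(x)/q)_+]$, as claimed. Dominance of $\pi^\star$ follows from the pointwise inequality $\pi(x)(1-V(x)/q)\le (1-V(x)/q)_+$ for every $\pi$ and every $x$, which integrates directly.

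Finally, to identify $U_M^\star$ with $U_C$ in the baseline, I would simply invoke $V(x)=\sigma^2(x)$ under Assumptions~\ref{ass:brownian}--\ref{ass:poisson} and compare with the definition \eqref{eq:ucal_def} of calibrated expected utility: the two expressions coincide term-for-term. There is no substantive obstacle here; the lemma is essentially a dictionary entry showing that perfect mastery, treated as complete learning of the reliability map, delivers the same value as the perfectly calibrated benchmark of Section~\ref{sec:calibration}. The only mild subtlety worth flagging is that the expectation implicitly averages over the Poisson configuration as well as the task location, so the identification uses that $U_C$ in \eqref{eq:ucal_def} is defined under the same joint distribution.
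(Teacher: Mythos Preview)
Your proposal is correct and follows essentially the same approach as the paper: a pointwise comparison of $U(x)$ with the outside option $0$, yielding $\max\{U(x),0\}=(1-V(x)/q)_+$, followed by taking expectations and identifying the result with the definition of $U_C$ when $V(x)=\sigma^2(x)$. Your version is more detailed (explicit tie-breaking, the pointwise domination inequality, and the remark about the joint expectation), but the underlying argument is identical.
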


\begin{proof}
Pointwise, the user chooses between $U(x)$ and $0$, so the optimal payoff is $\max\{U(x),0\}=(1-V(x)/q)_+$.
Taking expectations yields the expression. When $V(x)=\sigma^2(x)$, this is exactly the definition of $U_C$.
\end{proof}

\noindent In practice, users face liability, verification costs, and asymmetric loss: a single bad failure can dominate many small gains. A reduced-form way to capture this is a \emph{conservative delegation rule} based on an upper confidence bound for $V(x)$.

Fix $\beta>0$ and define the conservative estimate
\begin{equation}\label{eq:Vhat}
\hat V_t(x)\;\equiv\;\mu_t(x)+\sqrt{\beta}\,s_t(x).
\end{equation}
For suitable $\beta$ (chosen to deliver a desired confidence level), $\hat V_t(x)$ is a high-probability upper bound on
$V(x)$ under the GP model (the standard ``GP-UCB'' form; see \citet{srinivas2010gaussian}).

The induced \emph{perceived safe set} is
\begin{equation}\label{eq:safe_set}
S_t\;\equiv\;\{x\in \mathcal{Z}:\hat V_t(x)\le q\},
\end{equation}
and the corresponding conservative delegation policy is $\pi_t(x)=\1\{x\in S_t\}$.

The purpose of \eqref{eq:Vhat}--\eqref{eq:safe_set} is not to import the full bandit apparatus, but to encode an empirically important behavioural regularity: in high-stakes domains, organisations do not delegate based on mean beliefs alone. They require evidence that delegation is safe with high confidence, which shrinks the delegated region relative to the calibrated benchmark and makes learning \emph{endogenous} to past delegation.

\subsection{Scaling and the inertia of mastery}

We now examine how scaling investment interacts with mastery. Let $V_0$ denote the pre-update reliability map of a model with intensity $\lambda_0$, and let $V_1$ denote the post-update map after scaling to $\lambda_1>\lambda_0$. The user has learned under $V_0$ and carries those beliefs into the new version.

\begin{assumption}[Monotonic scaling]\label{ass:monotonic}
Scaling from $\lambda_0$ to $\lambda_1>\lambda_0$ adds coverage rather than reshuffling it. Consequently, local error weakly decreases pointwise:
\[
V_1(x)\le V_0(x)\qquad \text{for all }x\in\mathcal{Z}.
\]
\end{assumption}

\noindent 
This assumption isolates the \emph{informational} friction in mastery. If updates could arbitrarily worsen some regions, then slow adoption could be optimal for risk reasons even with perfect information. Assumption~\ref{ass:monotonic} rules that out: it ensures that previously safe regions remain safe, and that any failure to harvest improvements arises because the user does not \emph{learn} about newly improved regions. It is also consistent with the Brownian--Poisson baseline: adding knowledge points refines gaps and (weakly) reduces the Brownian-bridge variance everywhere.

Next, to sharpen the idea that learning generalises locally, we impose a finite inference radius.

\begin{assumption}[Local similarity / finite inference radius]\label{ass:local_similarity}
There exists $\ell>0$ such that the kernel satisfies $k(x,x')=0$ whenever $\|x-x'\|>\ell$.
\end{assumption}

\noindent 
Assumption~\ref{ass:local_similarity} rules out degenerate cases in which one observation pins down reliability
everywhere. It can be motivated either by bounded transfer across prompt variants (users treat sufficiently different task framings as unrelated), or by the local nature of interpolation risk in the baseline (uncertainty at a point is governed by nearby anchors rather than distant ones). Kernels without compact support (e.g.\ RBF or Mat\'ern) generate the same qualitative logic: spillovers decay with distance, so the ``hidden scale'' result below becomes an arbitrarily accurate approximation as improvements occur farther from the user's visited set.

Let $\mathcal{A}_t=\{x_1,\dots,x_t\}$ be the set of evaluation locations up to time $t$, and define its $\ell$-neighbourhood
\begin{equation}\label{eq:neighbourhood}
N_\ell(\mathcal{A}_t)\;\equiv\;\{x\in\mathcal{Z}:\min_{a\in\mathcal{A}_t}\|x-a\|\le \ell\}.
\end{equation}
Under Assumption~\ref{ass:local_similarity}, points outside $N_\ell(\mathcal{A}_t)$ are statistically disconnected from the user's experience at time $t$: absent deliberate exploration, their posterior remains at its prior.

The next proposition formalises the ``abstention trap'': improvements can be technologically real yet economically invisible.

\begin{proposition}[Zero marginal returns to hidden scale]\label{prop:abstention_trap}
Fix a belief state $\mathcal{F}_t$ with data locations $\mathcal{A}_t$ and safe set $S_t$.
Consider two post-update error maps $V_1$ and $\tilde V_1$ satisfying
$\tilde V_1(x)\le V_1(x)$ for all $x\in\mathcal{Z}$ and
\begin{equation}\label{eq:hidden_condition}
\tilde V_1(x)=V_1(x)\quad \text{for all }x\in N_\ell(\mathcal{A}_t).
\end{equation}
Suppose the user carries over $\mathcal{F}_t$ as their prior at the update and, after the update, observes feedback
only at locations they delegate (i.e.\ only at points in $S_s$ for $s\ge t$).
Define the stopping time
\[
\tau\;\equiv\;\inf\{s>t:\ x_s\notin N_\ell(\mathcal{A}_t)\},
\]
with the convention $\inf\emptyset=\infty$.
Then, under Assumptions~\ref{ass:monotonic} and \ref{ass:local_similarity}, the induced sequences of posteriors and
delegation policies are identical under $V_1$ and $\tilde V_1$ on the event $\{\tau=\infty\}$, and more generally are
identical \emph{up to time $\tau$} on every sample path.
In particular, any improvement in $\tilde V_1$ relative to $V_1$ that is supported outside $N_\ell(\mathcal{A}_t)$ has
zero marginal effect on outcomes until the user first samples outside that neighbourhood, and has zero marginal effect
forever if $\tau=\infty$.
\end{proposition}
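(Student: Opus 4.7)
The plan is to build an explicit coupling under which both scenarios are driven by the same noise sequence and the same exogenous candidate-task stream, and then to prove by induction on $s$ that every quantity the user sees---posterior mean and variance, conservative estimate, safe set, and next evaluation location---is identical under $V_1$ and $\tilde V_1$ up to time $\tau$. Under such a coupling, the only channel through which the underlying truth $V$ enters the user's dynamics is the observed value $y_{s+1}=V(x_{s+1})+\eta_{s+1}$, so neutralising that channel for every $s<\tau$ neutralises the entire process.

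The induction runs as follows. The base case $s=t$ is immediate because both scenarios inherit the common prior $\mathcal{F}_t$. For the inductive step, identical posteriors $\mu_s, s_s^2$ as functions on $\mathcal{Z}$ yield, via \eqref{eq:Vhat}--\eqref{eq:safe_set}, identical conservative estimates $\hat V_s(\cdot)$ and hence an identical safe set $S_s$; since the user's selection of the next evaluation point is a measurable function of $\mathcal{F}_s$ (and the candidate-task stream is shared), $x_{s+1}$ coincides under both scenarios. On the event $\{s+1\le\tau\}$, the definition of $\tau$ places $x_{s+1}\in N_\ell(\mathcal{A}_t)$, where hypothesis \eqref{eq:hidden_condition} gives $V_1(x_{s+1})=\tilde V_1(x_{s+1})$; the shared noise draw $\eta_{s+1}$ then yields the same $y_{s+1}$, and the standard Gaussian-process update (a deterministic functional of the history of $(x_r,y_r)$ pairs) produces identical posteriors $\mu_{s+1}, s_{s+1}^2$. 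Iterating delivers agreement of posteriors, safe sets, and delegation policies up to time $\tau$ on every path; on $\{\tau=\infty\}$ the induction never terminates, so realised posteriors, policies, and payoffs at delegated tasks coincide forever.

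The main obstacle is largely bookkeeping rather than mathematics. The coupling must be specified carefully enough that ``carrying over $\mathcal{F}_t$'' refers to the same sigma-algebra under both scenarios and that the user's selection of $x_{s+1}$ is a genuine $\mathcal{F}_s$-measurable functional, otherwise common information need not imply common actions. Assumption~\ref{ass:local_similarity} is not strictly required for the coupling itself---identical observations imply identical GP posteriors regardless of the kernel---but it is what gives $N_\ell(\mathcal{A}_t)$ economic content as the inferential reach of past experience, so that ``hidden'' improvements supported outside it form a natural class. Assumption~\ref{ass:monotonic} enters only indirectly, ensuring that the pre-update safe region remains safe after scaling, so the premise that the user passively receives feedback only inside $S_s$ is internally consistent and does not force exploration outside $N_\ell(\mathcal{A}_t)$ merely to re-validate previously trusted tasks.
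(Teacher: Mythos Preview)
Your argument is correct and follows essentially the same route as the paper: couple the two scenarios on a common noise sequence and prior, then induct on $s$ to show that identical posteriors at time $s$ imply an identical choice $x_{s+1}$, and that whenever $x_{s+1}\in N_\ell(\mathcal{A}_t)$ the observation $y_{s+1}$ and hence the updated posterior also coincide. The only slip is an off-by-one: the event you want in the inductive step is $\{s+1<\tau\}$ (equivalently $\tau\ge s+2$), since $\tau=s+1$ would mean $x_{s+1}\notin N_\ell(\mathcal{A}_t)$; with that correction your proof matches the paper's, and your explicit remarks on the coupling and on the roles of Assumptions~\ref{ass:monotonic} and~\ref{ass:local_similarity} are a welcome clarification.
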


\begin{proof}
Condition \eqref{eq:hidden_condition} implies that the conditional law of feedback $y=V(x)+\eta$ is identical under
$V_1$ and $\tilde V_1$ at every $x\in N_\ell(\mathcal{A}_t)$.
By Assumption~\ref{ass:local_similarity} and the definition of $\tau$, for all $s<\tau$ the user only delegates within
$N_\ell(\mathcal{A}_t)$, so the realised dataset up to time $s$ contains only observations from that neighbourhood.
Because $V_1$ and $\tilde V_1$ coincide on $N_\ell(\mathcal{A}_t)$, the realised feedback sequence (and hence the GP
posterior restricted to $N_\ell(\mathcal{A}_t)$) is identical under the two maps for all $s<\tau$.

We prove by induction on $s$ that, on any sample path for which $x_{t+1},\dots,x_s\in N_\ell(\mathcal{A}_t)$, the realised post-update datasets up to time $s$ are identical under $V_1$ and $\tilde V_1$, and hence the posteriors
$(\mu_s,s_s)$ (restricted to $N_\ell(\mathcal{A}_t)$) are identical under the two maps. The base case $s=t$ holds by the shared prior $\mathcal{F}_t$.
For the induction step, suppose the claim holds at time $s$ and that $x_{s+1}\in N_\ell(\mathcal{A}_t)$.
Because the user's delegation and evaluation choices are measurable functions of their current posterior, the induction hypothesis implies the same choice $x_{s+1}$ under both maps.
Since $V_1$ and $\tilde V_1$ coincide at $x_{s+1}$, the conditional law of $y_{s+1}$ is the same under both maps, so the realised dataset remains identical and so does the posterior at time $s+1$.

Thus, the posterior and hence the delegation policy coincide at all times $s<\tau$.
If $\tau=\infty$ the processes coincide for all times.
Since realised utility depends only on the delegated set and on $V(\cdot)$ on that set, improvements outside
$N_\ell(\mathcal{A}_t)$ have no marginal effect until $\tau$ (and never when $\tau=\infty$).
\end{proof}

\noindent 
Mastery data are generated where the user actually operates. Under conservative delegation, ``unsafe'' regions are not sampled, so beliefs there do not update. Scaling can, therefore, create \emph{hidden capabilities}: tasks that have become safe in truth but remain excluded by the user's inherited mastery map. Proposition~\ref{prop:abstention_trap} shows that, under local generalisation, improvements that lie outside the user's informational reach have zero realised marginal return until the user deliberately ventures outward. If the user never does, those returns are zero forever.

\subsection{Frontier spillovers}

Proposition~\ref{prop:abstention_trap} does not imply that scaling is generically useless.
Users often operate near a frontier where $\hat V_t(x)$ is close to $q$. Improvements \emph{within} the region they do visit can shift beliefs about nearby tasks and expand the safe set. The next lemma records the GP update and makes locality explicit.

\begin{lemma}[Local belief updating under GP regression]\label{lem:gp_update}
Under Assumptions~\ref{ass:feedback}--\ref{ass:gp}, after observing $y_{t+1}=V(x_{t+1})+\eta_{t+1}$ at location $x_{t+1}$, the posterior mean and variance satisfy
\begin{align}
\mu_{t+1}(x) &= \mu_t(x) + \frac{k_t(x,x_{t+1})}{k_t(x_{t+1},x_{t+1})+\sigma_n^2}\Big(y_{t+1}-\mu_t(x_{t+1})\Big),
\label{eq:gp_mean_update}\\
s_{t+1}^2(x) &= s_t^2(x) - \frac{k_t(x,x_{t+1})^2}{k_t(x_{t+1},x_{t+1})+\sigma_n^2},
\label{eq:gp_var_update}
\end{align}
where $k_t(\cdot,\cdot)$ is the posterior covariance kernel at time $t$.
In particular, if $k_t(x,x_{t+1})=0$ then $\mu_{t+1}(x)=\mu_t(x)$ and $s_{t+1}(x)=s_t(x)$.
\end{lemma}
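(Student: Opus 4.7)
The plan is to recognise this as a direct application of the Gaussian conditioning formula, applied to the joint law of $(V(x), y_{t+1})$ given the history $\mathcal{F}_t$. Under Assumption~\ref{ass:gp}, the posterior at time $t$ is itself a Gaussian process with mean $\mu_t$ and covariance kernel $k_t$ (this closure property is what makes GP regression tractable and follows because any conditional law of a jointly Gaussian family on linear observations is Gaussian). Hence, for any query point $x$, conditional on $\mathcal{F}_t$ the pair $(V(x), V(x_{t+1}))$ is bivariate Gaussian with mean $(\mu_t(x), \mu_t(x_{t+1}))$ and covariance matrix having diagonal entries $s_t^2(x)$ and $k_t(x_{t+1}, x_{t+1})$ and off-diagonal $k_t(x, x_{t+1})$.

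Next, I would fold in the observation noise. By Assumption~\ref{ass:feedback}, $y_{t+1} = V(x_{t+1}) + \eta_{t+1}$ with $\eta_{t+1} \sim \mathcal{N}(0,\sigma_n^2)$ independent of $V$ and of $\mathcal{F}_t$. This shifts only the variance of the second component (and not the covariance with $V(x)$), so conditional on $\mathcal{F}_t$ the pair $(V(x), y_{t+1})$ is bivariate Gaussian with mean $(\mu_t(x), \mu_t(x_{t+1}))$ and covariance
\[
\begin{pmatrix} s_t^2(x) & k_t(x, x_{t+1}) \\ k_t(x, x_{t+1}) & k_t(x_{t+1}, x_{t+1}) + \sigma_n^2 \end{pmatrix}.
\]
The key bookkeeping step (and the only place routine errors tend to creep in) is adding $\sigma_n^2$ to exactly one diagonal entry and not to the covariance, which is correct because noise enters only the observation equation and is independent across the prior on $V$.

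Now I would apply the standard conditional formula for bivariate Gaussians: for $(A,B)$ jointly Gaussian with means $(\mu_A,\mu_B)$, variances $(\sigma_A^2,\sigma_B^2)$ and covariance $c$, the law of $A\mid B=b$ is Gaussian with mean $\mu_A + (c/\sigma_B^2)(b-\mu_B)$ and variance $\sigma_A^2 - c^2/\sigma_B^2$. Setting $A=V(x)$ and $B=y_{t+1}$ yields \eqref{eq:gp_mean_update} and \eqref{eq:gp_var_update} directly. The final observation that $k_t(x, x_{t+1}) = 0$ forces $\mu_{t+1}(x) = \mu_t(x)$ and $s_{t+1}(x) = s_t(x)$ is then immediate, since both update terms are proportional to $k_t(x, x_{t+1})$ (linearly in the mean correction, quadratically in the variance reduction).

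The main obstacle is not conceptual but notational: one must distinguish the \emph{posterior} kernel $k_t$ (which already absorbs the first $t$ observations and hence appears everywhere in the update) from the prior kernel $k$, and one must add $\sigma_n^2$ only to the observed component. Given these conventions, the result is a one-line consequence of Gaussian conditioning and no further argument is needed.
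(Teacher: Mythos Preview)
Your proposal is correct and in fact more detailed than the paper's own proof, which simply cites the standard GP regression formulas (Rasmussen and Williams, Ch.~2) without derivation and then notes that the final claim follows from the update equations. Your bivariate Gaussian conditioning argument is exactly the underlying derivation those references provide, so there is no substantive difference in approach---you have just unpacked what the paper leaves as a citation.
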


\begin{proof}
These are the standard GP regression update formulas (e.g.\ \citet[Ch.\ 2]{rasmussen2006gaussian}). The final claim follows immediately from \eqref{eq:gp_mean_update}--\eqref{eq:gp_var_update}.
\end{proof}

\noindent We now formalise the ``frontier spillover'' mechanism: observing improvement at a queried point reduces pessimism and tightens uncertainty nearby, potentially expanding the perceived safe set.

\begin{proposition}[Frontier spillovers of scaling]\label{prop:frontier_spillovers}
Suppose after a scaling improvement, the true error map shifts from $V_0$ to $V_1$ with $V_1\le V_0$ pointwise.
Consider a point $x^\dagger\in \mathcal{S}_t$ that the user evaluates immediately after the update, and define the
(local) \emph{surprise improvement relative to the user's pre-update belief}
\[
\Delta_t^\dagger \;\equiv\; \mu_t(x^\dagger)-V_1(x^\dagger).
\]
Then, conditional on $\mathcal{F}_t$ and $x_{t+1}=x^\dagger$,
\begin{equation}\label{eq:frontier_bound}
\E[\hat V_{t+1}(x)\mid \mathcal{F}_t, x_{t+1}=x^\dagger]
\le \hat V_t(x)- \frac{k_t(x,x^\dagger)}{k_t(x^\dagger,x^\dagger)+\sigma_n^2}\,\Delta_t^\dagger.
\end{equation}
\end{proposition}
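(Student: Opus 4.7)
My plan is to decompose the UCB $\hat V_{t+1}(x)=\mu_{t+1}(x)+\sqrt{\beta}\,s_{t+1}(x)$ into its posterior-mean and width components, take conditional expectations term by term using Lemma~\ref{lem:gp_update}, and then recombine. The inequality in \eqref{eq:frontier_bound} (as opposed to equality) will arise from discarding the posterior variance reduction in the width term, while the spillover coefficient $k_t(x,x^\dagger)/(k_t(x^\dagger,x^\dagger)+\sigma_n^2)$ comes out of the mean-update formula directly.

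First I would handle the posterior mean. Because Assumption~\ref{ass:feedback} makes $\eta_{t+1}$ mean-zero given the history and the query location, and the post-update truth at $x^\dagger$ is $V_1(x^\dagger)$, we have $\E[y_{t+1}\mid \mathcal{F}_t,x_{t+1}=x^\dagger]=V_1(x^\dagger)$. Substituting this into \eqref{eq:gp_mean_update} and using the definition $\Delta_t^\dagger\equiv\mu_t(x^\dagger)-V_1(x^\dagger)$ gives
\[
\E[\mu_{t+1}(x)\mid \mathcal{F}_t,\,x_{t+1}=x^\dagger]
=\mu_t(x)-\frac{k_t(x,x^\dagger)}{k_t(x^\dagger,x^\dagger)+\sigma_n^2}\,\Delta_t^\dagger,
\]
which is exact (no inequality yet).

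Next I would handle the width. A useful observation is that the posterior variance update \eqref{eq:gp_var_update} depends only on the query location and the kernel $k_t$, not on the realised value $y_{t+1}$; hence, given the conditioning event, $s_{t+1}(x)$ is deterministic, and \eqref{eq:gp_var_update} guarantees $s_{t+1}(x)\le s_t(x)$. Multiplying by $\sqrt{\beta}$ and adding to the mean line above produces \eqref{eq:frontier_bound} after recognising the right-hand side as $\hat V_t(x)$ minus the spillover term.

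I do not expect a substantive obstacle; the argument is essentially a two-line application of Lemma~\ref{lem:gp_update} once the UCB decomposition is in place. The only point worth flagging is conceptual: the bound combines two distinct channels through which a surprise improvement tightens the safe set. The mean channel is exact in expectation and is what the spillover coefficient $k_t(x,x^\dagger)/(k_t(x^\dagger,x^\dagger)+\sigma_n^2)$ reports; the width channel is deterministic but weakly negative, and it is absorbed into the inequality sign. This is also why the bound extends unchanged to any realisation of $y_{t+1}$ for the variance piece, while the mean piece is stated in expectation to average out the noise $\eta_{t+1}$.
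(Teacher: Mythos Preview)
Your proposal is correct and follows essentially the same route as the paper: compute the expected posterior-mean update from Lemma~\ref{lem:gp_update} using $\E[y_{t+1}\mid\mathcal{F}_t,x_{t+1}=x^\dagger]=V_1(x^\dagger)$, then bound the width term via $s_{t+1}(x)\le s_t(x)$ and recombine into $\hat V_t(x)$. Your explicit observation that $s_{t+1}(x)$ is deterministic given the conditioning event is a helpful clarification that the paper leaves implicit.
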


\begin{proof}
Conditional on $\mathcal{F}_t$ and $x_{t+1}=x^\dagger$, the only remaining randomness in $y_{t+1}$ is $\eta_{t+1}$,
so $\E[y_{t+1}\mid \mathcal{F}_t, x_{t+1}=x^\dagger]=V_1(x^\dagger)$. Hence the expected innovation is
\[
\E[y_{t+1}-\mu_t(x^\dagger)\mid \mathcal{F}_t, x_{t+1}=x^\dagger]
= V_1(x^\dagger)-\mu_t(x^\dagger)=-\Delta_t^\dagger.
\]
Plugging into Lemma~\ref{lem:gp_update} yields
$\E[\mu_{t+1}(x)\mid \mathcal{F}_t, x_{t+1}=x^\dagger]=\mu_t(x)-\frac{k_t(x,x^\dagger)}{k_t(x^\dagger,x^\dagger)+\sigma_n^2}\Delta_t^\dagger$.
Since $s_{t+1}(x)\le s_t(x)$, we have $\hat V_{t+1}(x)\le \mu_{t+1}(x)+\sqrt{\beta}\,s_t(x)$, giving \eqref{eq:frontier_bound}.
\end{proof}

\noindent
The hidden scale is spatial: improvements far from where users operate do not move beliefs, but improvements on (or near) frontier tasks are informative about adjacent tasks and can expand the safe set. The strength and radius of this effect are governed by the kernel (what users regard as ``similar'') and by the noise level $\sigma_n^2$ (how costly or noisy verification is).

\subsection{The baseline geometry: safe fringes in Brownian gaps}

In the Brownian--Poisson baseline, the locality above has a simple geometric interpretation.
Between two adjacent knowledge points separated by a gap of length $X$, the local variance is the Brownian-bridge
expression \eqref{eq:bridge-variance}: $\sigma^2(x)=Xt(1-t)$ where $t\in[0,1]$ is the relative position within the gap.
The set of locally nonnegative-value tasks in that gap depends only on $X$.

\begin{lemma}[Safe fringe within a Brownian-bridge gap]\label{lem:bridge_fringe}
Fix a gap of length $X$ between adjacent knowledge points in the baseline model.
The set of locations in the gap where using the AI yields nonnegative expected utility,
$\{x:\sigma^2(x)\le q\}$, is the union of two intervals adjacent to the gap endpoints.
If $X\le 4q$ the entire gap is safe.
If $X>4q$, the safe intervals have length
\begin{equation}\label{eq:fringe_length}
d(X,q)\;\equiv\;\frac{1}{2}\Big(X-\sqrt{X(X-4q)}\Big),
\end{equation}
so the unsafe region is the middle interval of length $X-2d(X,q)$.
Moreover, $d(X,q)$ is strictly decreasing in $X$ for $X>4q$.
\end{lemma}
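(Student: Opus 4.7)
The plan is to reduce the lemma to a one-variable quadratic inequality in the within-gap coordinate. Parameterise position by $t \in [0,1]$ so that by \eqref{eq:bridge-variance} the safety condition $\sigma^2(x) \le q$ becomes $X t(1-t) \le q$, equivalently
\[
X t^2 - X t + q \ge 0.
\]
This is a convex (upward) quadratic in $t$ with discriminant $X^2 - 4qX = X(X - 4q)$, whose sign pivots on $X = 4q$. That single observation drives everything.

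In the regime $X \le 4q$, the discriminant is nonpositive, so the quadratic is nonnegative on all of $\mathbb{R}$ and a fortiori on $[0,1]$, delivering the first claim that the entire gap is safe. In the regime $X > 4q$, the two real roots are symmetric about $t = 1/2$, namely $t_\pm = \tfrac{1}{2} \pm \sqrt{X(X-4q)}/(2X)$, and the unsafe locations are exactly $(t_-, t_+)$. Thus the safe set is $[0, t_-] \cup [t_+, 1]$, two intervals abutting the endpoints. Translating back to absolute length by multiplying through by $X$ gives the per-endpoint fringe length $X t_- = \tfrac{1}{2}(X - \sqrt{X(X-4q)}) = d(X,q)$, matching \eqref{eq:fringe_length}; the unsafe middle then has length $X - 2 d(X,q) = \sqrt{X(X-4q)}$.

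For the strict monotonicity, I would avoid direct differentiation and instead rationalise the numerator:
\[
d(X,q) = \frac{X - \sqrt{X(X-4q)}}{2} = \frac{X^2 - X(X-4q)}{2\bigl(X + \sqrt{X(X-4q)}\bigr)} = \frac{2q}{1 + \sqrt{1 - 4q/X}}.
\]
As $X$ increases on $(4q, \infty)$, $\sqrt{1 - 4q/X}$ strictly increases toward $1$, so the denominator strictly increases while the numerator is constant; hence $d(\cdot, q)$ is strictly decreasing. This also confirms the natural limits $d \to 2q$ as $X \downarrow 4q$ (so the fringes just meet) and $d \to q$ as $X \to \infty$ (so fringes shrink toward a fixed length $q$ on each side, with the middle blowing up).

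There is no real obstacle. The only mildly delicate point is the last monotonicity step, where a direct derivative would require verifying $(X - 2q)^2 > X(X - 4q)$, i.e.\ $4q^2 > 0$; the rationalised form above sidesteps the arithmetic entirely and makes the sign transparent.
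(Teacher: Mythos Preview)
Your argument is correct and follows essentially the same route as the paper: reduce the safety condition to the quadratic $s^2 - Xs + qX \ge 0$ (you use the relative coordinate $t=s/X$, the paper uses $s$ directly), read off the discriminant case split at $X=4q$, and identify the smaller root as the fringe length. The only cosmetic difference is that for monotonicity the paper simply differentiates $d(X,q)$, whereas you rationalise to $d=2q/(1+\sqrt{1-4q/X})$ and read off the sign; both are valid and neither adds or loses anything substantive.
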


\begin{proof}
Index the gap as $[0,X]$. At distance $s\in[0,X]$ from the left endpoint, $\sigma^2(s)=s(X-s)/X$.
The condition $\sigma^2(s)\le q$ is equivalent to $s(X-s)\le qX$, i.e.\ $s^2-Xs+qX\ge 0$.
If $X\le 4q$, the discriminant is nonpositive and the inequality holds for all $s\in[0,X]$.
If $X>4q$, the quadratic has roots
$s_\pm=\frac{1}{2}\big(X\pm\sqrt{X(X-4q)}\big)$,
and $\sigma^2(s)\le q$ holds for $s\in[0,s_-]\cup[s_+,X]$.
Thus the safe fringe length is $d(X,q)=s_-$. Differentiating \eqref{eq:fringe_length} yields $d_X(X,q)<0$ for $X>4q$.
\end{proof}

\noindent 
Scaling works mechanically by shrinking gaps. In this baseline, shrinking a gap expands the safe fringe deterministically. This is the simplest case of frontier spillovers: even if a user only operates near existing anchors (safe regions), adding anchors nearby expands the region where delegation is locally nonnegative.

\subsection{Implications for scaling investment}

Mastery makes the realised return to scaling \emph{state-dependent} because scaling changes the underlying map $V$ while mastery determines which parts of that map are actually harvested.

Let $x$ denote a random task drawn from the arrival distribution on $\mathcal{Z}$.
Given a delegation policy $\pi(\cdot)\in\{0,1\}^{\mathcal{Z}}$, realised expected utility under error map $V$ is
\[
\bar U(V,\pi)\;\equiv\;\E\!\left[\left(1-\frac{V(x)}{q}\right)\pi(x)\right].
\]
For any pre-update policy $\pi_t$ and post-update policy $\pi_{t+1}$,
\begin{align}
\bar U(V_1,\pi_{t+1})-\bar U(V_0,\pi_t)
&=
\E\!\left[\frac{V_0(x)-V_1(x)}{q}\,\pi_t(x)\right]\label{eq:two_margins_mastery}\\
&\qquad\qquad + \E\!\left[\left(1-\frac{V_1(x)}{q}\right)\big(\pi_{t+1}(x)-\pi_t(x)\big)\right].\nonumber
\end{align}
The first term is the \emph{intensive margin}: improvements harvested on tasks the user was already delegating.
The second term is the \emph{extensive margin}: additional value from expanding the delegated set as beliefs update.

Proposition~\ref{prop:abstention_trap} is a sharp constraint on the extensive margin: improvements outside the user's inferential neighbourhood cannot change $\pi_{t+1}$ unless the user explores outward. Proposition~\ref{prop:frontier_spillovers} and Lemma~\ref{lem:bridge_fringe} show the opposite channel: improvements near the user's operating frontier can expand $\pi_{t+1}$ even without risky exploration.

Relative to Section~\ref{sec:complements}, which studies static returns to scale under fixed information, mastery adds a dynamic wedge:
\begin{itemize}[leftmargin=1.25em]
\item \textbf{Discoverability matters.} The realised marginal return to increasing $\lambda$ depends on whether improved
regions are visited or inferable from visited regions. Scaling that mainly improves ``elsewhere'' can have a low realised return in the short run, even if it improves true capability everywhere.
\item \textbf{Frontier targeting is disproportionately valuable.} Improvements concentrated near tasks users already touch,
and especially near tasks close to the safety boundary, can trigger safe-set expansion and unlock an extensive margin.
\item \textbf{Release policy interacts with mastery.} When re-verification and workflow adjustment are costly, users may
carry over conservative mastery maps across updates. Under monotonic scaling, this is safe, but it can strand gains.
This creates a rationale for pairing scaling with either (i) salient evidence of improvement that shifts beliefs, or
(ii) interface-level calibration tools that substitute for slow mastery.
\end{itemize}
In sum, monotonic scaling ensures that existing mastery is not destroyed, but it does not ensure that new capability is discovered. Mastery is, therefore, a complement to scaling, not because it makes the model better, but because it governs how quickly and how broadly users can \emph{harvest} the improvements that scaling creates.

\section{Extensions and Applications}\label{sec:extensions}

The preceding sections develop the core economic logic of Artificial Jagged Intelligence (AJI): user-experienced quality
is governed by \emph{local reliability} and is shaped by the inspection paradox, so benchmark averages can be poor
predictors of lived value. This section develops two applications that use this logic in settings where the relevant
objects are not just ``a representative adopter'' but heterogeneous users and institutions.

The first application shows how AJI naturally generates \emph{vertical differentiation by reliability profile}: users with different stakes can rationally prefer different systems even when ``average capability'' rankings are clear.
The second application studies \emph{organisational adoption versus worker experience}: principals adopt using benchmark
signals while agents experience length-biased tasks, creating over-adoption and predictable trust dynamics.

\subsection{Vertical Differentiation by Reliability Profile}\label{sec:vertical_diff}

Users differ sharply in stakes. When errors are cheap to correct (high $q$), users care mainly about average capability.
When errors are costly or irreversible (low $q$), users care disproportionately about avoiding rare but extreme failures.
AJI makes this distinction economically meaningful because jagged landscapes generate a high-error tail that is
\emph{overweighted} in actual use (Proposition~\ref{prop:inspection_paradox}). As a result, there is a second dimension of quality beyond benchmark averages: a system's \emph{reliability profile}---how error is distributed across tasks, not just its mean.

In the baseline payoff $U(x)=1-\sigma^2(x)/q$, a blind user's expected value depends only on $\E[\sigma^2]$.
That is appropriate when the loss from error is approximately linear, and the user is effectively risk-neutral in error.
High-stakes deployment is rarely like that: liability, downstream cascades, and compliance constraints introduce
convex losses, making \emph{dispersion} in $\sigma^2(x)$ (tail risk) a first-order object. The extension below isolates this channel in a minimal way that preserves the model's structure.

\subsubsection{A minimal model of ``catastrophic'' stakes}

\begin{assumption}[Convex error losses]\label{ass:convex_losses}
When a user relies on provider $j$ at task $x$, their net payoff is
\begin{equation}\label{eq:convex_payoff}
u_j(x;q)
\;=\;
1-\frac{\sigma_j^2(x)}{q}
-\phi\left(\frac{\sigma_j^2(x)}{q}\right)^{\!2},
\qquad \phi\ge 0,
\end{equation}
with outside option normalised to $0$.
\end{assumption}

\noindent 
The quadratic term is a reduced-form way to capture the idea that large errors have disproportionately high costs in
high-stakes settings (legal exposure, reputational harm, safety incidents, cascading rework).
It is also the second-order term of a generic convex loss expansion around small errors.
The baseline model is nested at $\phi=0$, so all baseline results remain interpretable as the risk-neutral benchmark.

Under Assumption~\ref{ass:convex_losses}, a blind user's expected value from provider $j$ (before truncation at the outside
option) is
\begin{equation}\label{eq:Ubar_convex}
\bar U_j(q)
\;\equiv\;
\E[u_j(x;q)]
=
1-\frac{\E[\sigma_j^2(x)]}{q}-\phi\frac{\E[\sigma_j^4(x)]}{q^2}.
\end{equation}
The new object is the \emph{fourth moment} $\E[\sigma_j^4(x)]$, which is precisely where the reliability profile matters.
The next lemma computes these moments in the Brownian bridge environment and makes the role of the inspection paradox
transparent.

\begin{lemma}[Moments of experienced error]\label{lem:moments_sigma}
Fix a provider $j$ with gap-length distribution $X_j$ having mean $\E[X_j]=1/\lambda_j$.
Let $X_j^*$ denote the length-biased gap faced by a randomly arriving task (Proposition~\ref{prop:inspection_paradox}) and
let $t\sim\mathrm{Uniform}(0,1)$ be relative position in the gap, so that
$\sigma_j^2(x)=X_j^*\,t(1-t)$.
Then
\begin{align}
\E[\sigma_j^2(x)] &= \frac{\E[X_j^*]}{6}=\frac{\E[X_j^2]}{6\,\E[X_j]},\label{eq:m1_sigma}\\
\E[\sigma_j^4(x)] &= \frac{\E[(X_j^*)^2]}{30}=\frac{\E[X_j^3]}{30\,\E[X_j]}.\label{eq:m2_sigma}
\end{align}
\end{lemma}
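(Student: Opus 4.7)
The plan is to compute both moments by the independence decomposition $\sigma_j^2(x)=X_j^*\,t(1-t)$ with $t\perp X_j^*$, reduce the gap-length factors via length bias, and evaluate the within-gap factors as elementary Beta integrals. Nothing about the argument is delicate; the usefulness of the lemma lies in making the role of the inspection paradox and within-gap geometry visible as separate multiplicative pieces.

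First, I would record the Brownian-bridge representation already in hand: for a uniformly arriving task, the enclosing gap has length $X_j^*$ and the relative position within the gap is $t\sim\mathrm{Uniform}(0,1)$, independent of $X_j^*$, and the Brownian-bridge variance formula \eqref{eq:bridge-variance} gives $\sigma_j^2(x)=X_j^*\,t(1-t)$. Raising to the $k$th power and using independence yields
\[
\E[\sigma_j^{2k}(x)]=\E[(X_j^*)^k]\,\E[(t(1-t))^k],\qquad k=1,2.
\]

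Second, I would evaluate the two Beta integrals explicitly. For $k=1$, $\int_0^1 t(1-t)\,dt=B(2,2)=\Gamma(2)^2/\Gamma(4)=1/6$. For $k=2$, $\int_0^1 t^2(1-t)^2\,dt=B(3,3)=\Gamma(3)^2/\Gamma(6)=4/120=1/30$. Substituting gives
\[
\E[\sigma_j^2(x)]=\frac{\E[X_j^*]}{6},\qquad \E[\sigma_j^4(x)]=\frac{\E[(X_j^*)^2]}{30},
\]
which are the first equalities in \eqref{eq:m1_sigma}--\eqref{eq:m2_sigma}.

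Third, I would translate length-biased moments into raw-gap moments. By the length-bias construction (Proposition~\ref{prop:inspection_paradox}), $f_{X_j^*}(x)=x f_{X_j}(x)/\E[X_j]$, so for any integrable $g$,
\[
\E[g(X_j^*)]=\frac{\E[X_j\,g(X_j)]}{\E[X_j]}.
\]
Taking $g(x)=x$ and $g(x)=x^2$ yields $\E[X_j^*]=\E[X_j^2]/\E[X_j]$ and $\E[(X_j^*)^2]=\E[X_j^3]/\E[X_j]$, which combined with the previous display gives the second equalities in \eqref{eq:m1_sigma}--\eqref{eq:m2_sigma}. There is no genuine obstacle here; the only thing to double-check is the Beta-integral value $1/30$ for $k=2$, which is where an arithmetic slip would most plausibly occur.
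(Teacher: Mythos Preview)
Your proposal is correct and follows essentially the same route as the paper: factor $\sigma_j^{2k}(x)=(X_j^*)^k(t(1-t))^k$ by independence, evaluate the within-gap integrals (the paper simply states $\int_0^1 t(1-t)\,dt=1/6$ and $\int_0^1 t^2(1-t)^2\,dt=1/30$ where you invoke the Beta identities), and then convert $\E[(X_j^*)^m]$ to $\E[X_j^{m+1}]/\E[X_j]$ via length bias. No substantive differences.
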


\begin{proof}
Conditional on $X_j^*$, $\sigma_j^2=X_j^*\,t(1-t)$ with $t\sim \mathrm{Uniform}(0,1)$, so
$\E[t(1-t)]=\int_0^1 t(1-t)\,dt=1/6$ and
$\E[t^2(1-t)^2]=\int_0^1 t^2(1-t)^2\,dt=1/30$.
Taking expectations over $X_j^*$ yields
$\E[\sigma_j^2]=\E[X_j^*]/6$ and $\E[\sigma_j^4]=\E[(X_j^*)^2]/30$.
Finally, for any gap distribution $X$, the length-biased moments satisfy
$\E[(X^*)^m]=\E[X^{m+1}]/\E[X]$, giving \eqref{eq:m1_sigma}--\eqref{eq:m2_sigma}.
\end{proof}

\noindent Lemma~\ref{lem:moments_sigma} shows why the inspection paradox matters for ``catastrophic'' stakes.
Benchmarks that effectively weight gaps uniformly (one item per gap/domain) replace $X^*$ by $X$.
Under that gap-uniform sampling,
\[
\E_{\text{gap-unif}}[\sigma^2]=\frac{\E[X]}{6}=\frac{1}{6\lambda},
\qquad
\E_{\text{gap-unif}}[\sigma^4]=\frac{\E[X^2]}{30}.
\]
By contrast, user experience replaces $\E[X]$ by $\E[X^*]=\E[X^2]/\E[X]$ and $\E[X^2]$ by $\E[(X^*)^2]=\E[X^3]/\E[X]$.
Thus, AJI amplifies not only the mean error but also the higher moments that govern tail-sensitive losses.

\subsubsection{A tractable regularity family}

To connect the reliability profile to the regularity margin in Section~\ref{sec:complements}, it is convenient to impose a parametric family in which a single parameter controls dispersion.

\begin{assumption}[Gamma regularity family]\label{ass:gamma_gaps}
For provider $j$, gap lengths satisfy $X_j\sim \mathrm{Gamma}(a_j,\,a_j\lambda_j)$ (shape $a_j>0$, rate $a_j\lambda_j$), so that $\E[X_j]=1/\lambda_j$ and $\CV(X_j)=1/\sqrt{a_j}$.
\end{assumption}

\noindent 
The gamma family is a standard reduced form for spacings: it nests the Poisson benchmark at $a=1$ (exponential gaps),
and it converges to deterministic spacing as $a\to\infty$ (perfect regularity). It therefore provides a microfoundation for the regularity parameter in Section~\ref{sec:complements} while preserving closed-form moments.

Under Assumption~\ref{ass:gamma_gaps}, the moment expressions in Lemma~\ref{lem:moments_sigma} can be written directly in
terms of $(\lambda,\CV)$.

\begin{lemma}[Closed-form moments under gamma regularity]\label{lem:moments_gamma}
Under Assumption~\ref{ass:gamma_gaps}, writing $\CV_j\equiv \CV(X_j)$,
\begin{align}
\E[\sigma_j^2(x)] &= \frac{1+\CV_j^{2}}{6\lambda_j},\label{eq:Esig2_gamma}\\
\E[\sigma_j^4(x)] &= \frac{1+3\CV_j^{2}+2\CV_j^{4}}{30\,\lambda_j^{2}}
=\frac{(1+\CV_j^{2})(1+2\CV_j^{2})}{30\,\lambda_j^{2}}.\label{eq:Esig4_gamma}
\end{align}
Moreover, the inspection-paradox amplification of the fourth moment is
\[
\frac{\E[\sigma_j^4]_{\text{user}}}{\E[\sigma_j^4]_{\text{gap-unif}}}
=
1+2\CV_j^{2},
\]
so under Poisson gaps ($\CV_j=1$) user-experienced $\E[\sigma^4]$ is three times the gap-uniform benchmark value.
\end{lemma}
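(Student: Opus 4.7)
The plan is to view Lemma~\ref{lem:moments_gamma} as a direct computational corollary of Lemma~\ref{lem:moments_sigma}: the latter already expresses $\E[\sigma_j^2(x)]$ and $\E[\sigma_j^4(x)]$ in terms of the first three raw moments $\E[X_j],\E[X_j^2],\E[X_j^3]$ of the (unbiased) gap distribution, so all that remains is to plug in closed-form Gamma moments and translate to $\CV_j$.

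First I would record the raw moments of the Gamma distribution in this parametrisation. For $X_j\sim\mathrm{Gamma}(a_j,\,a_j\lambda_j)$ (shape $a_j$, rate $a_j\lambda_j$), the $k$th raw moment is $\E[X_j^k]=a_j(a_j+1)\cdots(a_j+k-1)/(a_j\lambda_j)^k$, which gives in particular $\E[X_j]=1/\lambda_j$, $\E[X_j^2]=(a_j+1)/(a_j\lambda_j^2)$, and $\E[X_j^3]=(a_j+1)(a_j+2)/(a_j^2\lambda_j^3)$. This also verifies $\mathrm{Var}(X_j)=1/(a_j\lambda_j^2)$ and hence $\CV_j^2=1/a_j$, so $\CV_j$ is the correct summary of dispersion in this family.

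Next I would substitute into the formulas $\E[\sigma_j^2(x)]=\E[X_j^2]/(6\E[X_j])$ and $\E[\sigma_j^4(x)]=\E[X_j^3]/(30\E[X_j])$ from Lemma~\ref{lem:moments_sigma}. Direct algebra yields $\E[\sigma_j^2(x)]=(a_j+1)/(6a_j\lambda_j)$ and $\E[\sigma_j^4(x)]=(a_j+1)(a_j+2)/(30a_j^2\lambda_j^2)$. Using $(a_j+1)/a_j=1+\CV_j^2$ and $(a_j+2)/a_j=1+2\CV_j^2$, these rewrite as \eqref{eq:Esig2_gamma} and the factored form in \eqref{eq:Esig4_gamma}; expanding the product $(1+\CV_j^2)(1+2\CV_j^2)=1+3\CV_j^2+2\CV_j^4$ gives the stated equivalent form.

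Finally, for the amplification ratio, I would note that the gap-uniform benchmark replaces $X_j^*$ by $X_j$ inside the conditional calculation in Lemma~\ref{lem:moments_sigma}, so $\E[\sigma_j^4]_{\text{gap-unif}}=\E[X_j^2]/30$, while $\E[\sigma_j^4]_{\text{user}}=\E[X_j^3]/(30\E[X_j])$. Their ratio is therefore $\E[X_j^3]/(\E[X_j]\E[X_j^2])=(a_j+2)/a_j=1+2\CV_j^2$, which reduces to $3$ at $\CV_j=1$. There is no real analytic obstacle here; the only thing to guard against is a bookkeeping slip in the Gamma parametrisation (shape $a_j$, rate $a_j\lambda_j$), which is easily checked against the required mean $1/\lambda_j$ and $\CV_j^2=1/a_j$.
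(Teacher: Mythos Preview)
Your proposal is correct and follows essentially the same route as the paper: compute the Gamma raw moments $\E[X_j],\E[X_j^2],\E[X_j^3]$, plug them into the identities of Lemma~\ref{lem:moments_sigma}, and then rewrite in terms of $\CV_j^2=1/a_j$; the amplification ratio is obtained in both by dividing $\E[X_j^3]/\E[X_j]$ by $\E[X_j^2]$. Your bookkeeping on the $(a_j,a_j\lambda_j)$ parametrisation is accurate.
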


\begin{proof}
For $X\sim\mathrm{Gamma}(a,a\lambda)$, $\E[X^2]=(a+1)/(a\lambda^2)=(1+1/a)/\lambda^2=(1+\CV^2)/\lambda^2$,
so \eqref{eq:Esig2_gamma} follows from \eqref{eq:m1_sigma}.
Similarly $\E[X^3]=(a+1)(a+2)/(a^2\lambda^3)$, so
$\E[(X^*)^2]=\E[X^3]/\E[X]=(a+1)(a+2)/(a^2\lambda^2)$ and \eqref{eq:Esig4_gamma} follows from \eqref{eq:m2_sigma}.
For the amplification ratio, the user-experienced fourth moment is
\[
\E[\sigma^4]_{\text{user}}=\frac{1}{30}\E[(X^*)^2]=\frac{(1+\CV^2)(1+2\CV^2)}{30\lambda^2},
\]
while the gap-uniform fourth moment is
$\E[\sigma^4]_{\text{gap-unif}}=\frac{1}{30}\E[X^2]=\frac{1+\CV^2}{30\lambda^2}$.
Dividing gives $\E[\sigma^4]_{\text{user}}/\E[\sigma^4]_{\text{gap-unif}}=1+2\CV^2$.
\end{proof}

\subsubsection{Sorting by stakes}

Consider two providers $H$ and $L$ with reliability profiles induced by $(\lambda_H,\CV_H)$ and $(\lambda_L,\CV_L)$.
Think of $H$ as ``scaled but jagged'' and $L$ as ``more regular.''

\begin{proposition}[Single-crossing by stakes]\label{prop:sorting_convex}
Maintain Assumptions~\ref{ass:convex_losses} and \ref{ass:gamma_gaps}.
Define the mean and tail terms
\[
m_j\equiv \E[\sigma_j^2(x)],
\qquad
v_j\equiv \E[\sigma_j^4(x)].
\]
Suppose provider $H$ has \emph{better average} but \emph{worse tail} in the sense that
\begin{equation}\label{eq:mean_tail_order}
m_H<m_L
\qquad\text{and}\qquad
v_H>v_L.
\end{equation}
Then there exists a unique cutoff
\begin{equation}\label{eq:qstar_sorting}
q^*\;\equiv\;\phi\,\frac{v_H-v_L}{m_L-m_H}>0
\end{equation}
such that:
\begin{enumerate}[label=(\roman*)]
\item if $q>q^*$ (low stakes), then $\bar U_H(q)>\bar U_L(q)$;
\item if $q<q^*$ (high stakes), then $\bar U_H(q)<\bar U_L(q)$.
\end{enumerate}
\end{proposition}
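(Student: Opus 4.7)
The plan is to show that the difference $D(q) \equiv \bar U_H(q) - \bar U_L(q)$, viewed as a function of $q > 0$, has the same sign as an affine function of $q$ and therefore crosses zero exactly once at the claimed value $q^\ast$.

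First, I would substitute the closed-form expression \eqref{eq:Ubar_convex} for each provider. Because the constant $1$ cancels, one obtains directly
\[
D(q) = \frac{m_L - m_H}{q} - \phi\,\frac{v_H - v_L}{q^{2}}.
\]
Multiplying through by $q^{2} > 0$ preserves the sign, so $\mathrm{sgn}\,D(q) = \mathrm{sgn}\bigl[(m_L - m_H)\,q - \phi(v_H - v_L)\bigr]$. Hypothesis \eqref{eq:mean_tail_order} gives $m_L - m_H > 0$ and $v_H - v_L > 0$, so the bracket is affine in $q$ with strictly positive slope and strictly negative intercept. It therefore has a unique root at $q^\ast = \phi(v_H - v_L)/(m_L - m_H) > 0$, which reproduces \eqref{eq:qstar_sorting}.

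The single-crossing structure then follows from the monotonicity of that affine expression: for $q > q^\ast$ the bracket is strictly positive, giving $\bar U_H(q) > \bar U_L(q)$ and conclusion (i); for $q < q^\ast$ the bracket is strictly negative, giving $\bar U_H(q) < \bar U_L(q)$ and conclusion (ii). There is no genuine technical obstacle. The economic content comes for free from the algebra: the convex loss in \eqref{eq:convex_payoff} splits the payoff gap into a mean term of order $1/q$ and a tail term of order $1/q^{2}$, so the $1/q^{2}$ term necessarily dominates as $q\downarrow 0$ (high-stakes users penalise tail risk more heavily) while the $1/q$ term dominates as $q\to\infty$ (low-stakes users mainly weigh average error), and the relative weights $m_L - m_H$ and $\phi(v_H - v_L)$ pin down the unique crossing at $q^\ast$.
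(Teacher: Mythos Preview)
Your proof is correct and follows essentially the same route as the paper's own argument: compute $\bar U_H(q)-\bar U_L(q)$ from \eqref{eq:Ubar_convex}, observe it equals $(m_L-m_H)/q-\phi(v_H-v_L)/q^2$, and read off the unique positive root and the sign pattern on either side. Your device of multiplying through by $q^2$ to reduce the sign analysis to an affine function in $q$ is a mild elaboration, but the underlying computation and logic are identical to the paper's.
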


\begin{proof}
From \eqref{eq:Ubar_convex}, the difference is
\[
\bar U_H(q)-\bar U_L(q)
=
\frac{m_L-m_H}{q}-\phi\,\frac{v_H-v_L}{q^2}.
\]
Under \eqref{eq:mean_tail_order}, the first term is positive and the second term is negative.
Setting the difference to zero yields the unique positive solution \eqref{eq:qstar_sorting}.
The sign comparison for $q$ above or below $q^*$ follows immediately.
\end{proof}

\noindent 
When $q$ is large (errors are cheap), the quadratic term in \eqref{eq:Ubar_convex} is negligible, and users behave close to risk-neutral: they choose the provider with better \emph{mean} performance (here, $H$).
When $q$ is small (errors are expensive), the quadratic term is magnified as $1/q^2$ and tail risk dominates: users prefer the provider whose reliability profile puts less mass on extreme failures (here, $L$).
Lemma~\ref{lem:moments_gamma} shows why regularity matters in high stakes: increasing $\CV$ raises $v$ sharply, and the inspection paradox amplifies this effect by overweighting long gaps.

Regularity reduces the high-variance tail \emph{ex ante} by shrinking the dispersion in gaps.
Calibration (Section~\ref{sec:calibration}) reduces its welfare impact \emph{ex post} by screening out the tail.
Formally, for a calibrated user facing normalised variance $Z$ and reliability index $R$,
\[
U_C(R)-\bar U_B(R)=\E\!\left[\left(\frac{Z}{R}-1\right)_+\right]
=\frac{1}{R}\E[(Z-R)_+],
\]
so the incremental value of calibration is the expected tail loss avoided.
Since $(z-R)_+$ is convex in $z$, any \emph{mean-preserving} increase in dispersion of $Z$ (a convex-order increase)
raises $\E[(Z-R)_+]$ and therefore increases the value of calibration at fixed $R$.
This provides a clean formal sense in which more jagged reliability profiles make calibration/abstention interfaces more valuable.
Thus, providers who are ``scaled but jagged'' can partially substitute for low regularity by investing in
calibration/abstention interfaces, which is precisely what high-stakes users value.

Proposition~\ref{prop:sorting_convex} describes a simple segmentation logic that matches observed practice:
a mass-market product can rationally prioritise scale and average capability, while a professional-grade product can rationally prioritise a more regular reliability profile and calibration features that mitigate tail exposure. The inspection paradox is what makes that differentiation economically salient: it turns rare long-gap failures into a large part of lived experience.

\subsection{Organisational Adoption versus Worker Experience}\label{sec:org_adoption}

Organisations adopt AI through procurement, pilots, and benchmark-based evaluations. Workers then use the system on their actual tasks. AJI predicts a systematic wedge between these two perspectives: evaluations are often closer to gap-uniform sampling, while worker experience is length-biased (Proposition~\ref{prop:inspection_paradox}). The result is a predictable pattern: adoption justified by strong benchmark averages, followed by downstream frustration, workarounds, and distrust when ``surprising failures'' occur. This subsection formalises that wedge and relates it to scaling, calibration, and organisational policy.

Fix an environment with coverage intensity $\lambda$ and gap dispersion $\CV$ as in
Section~\ref{sec:complements}.
Let the organisation evaluate the system on a benchmark that effectively weights gaps uniformly.
Under the Brownian bridge structure, benchmark evaluation corresponds to the mean-variance
$\E_{\text{gap-unif}}[\sigma^2]=\E[X]/6=1/(6\lambda)$.
Define the corresponding \emph{benchmark reliability ratio}
\begin{equation}\label{eq:R_bench_org}
R^{\mathrm{bench}}\;\equiv\;\frac{q}{\E_{\text{gap-unif}}[\sigma^2]}=6\lambda q.
\end{equation}
By contrast, workers face tasks drawn from the arrival distribution and therefore experience the length-biased mean
variance $\E[\sigma^2]=\E[X^*]/6=(1+\CV^2)/(6\lambda)$ (Proposition~\ref{prop:experienced_gap_mean}).
Define the \emph{experienced reliability ratio}
\begin{equation}\label{eq:R_exp_org}
R^{\mathrm{exp}}\;\equiv\;\frac{q}{\E[\sigma^2]}
=\frac{6\lambda q}{1+\CV^2}
=\frac{R^{\mathrm{bench}}}{1+\CV^2},
\end{equation}
which coincides with $R$ in Section~\ref{sec:model} under Poisson gaps ($\CV=1$).

\begin{proposition}[Over-adoption due to the inspection paradox]\label{prop:over_adoption}
Suppose an organisation adopts if the benchmark-based blind value is nonnegative:
$1-1/R^{\mathrm{bench}}\ge 0$, i.e.\ $R^{\mathrm{bench}}\ge 1$.
Workers' experienced blind value is positive if and only if $R^{\mathrm{exp}}\ge 1$,
equivalently $R^{\mathrm{bench}}\ge 1+\CV^2$.
Hence for every $\CV>0$ there is an interval
\[
R^{\mathrm{bench}}\in[1,\,1+\CV^2)
\]
in which the organisation adopts, but workers experience negative expected value from blind reliance.
Under Poisson gaps ($\CV=1$), this over-adoption interval is $R^{\mathrm{bench}}\in[1,2)$.
\end{proposition}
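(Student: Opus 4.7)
The plan is to put both decision criteria on a common axis $R^{\mathrm{bench}}$ and then read the over-adoption interval off directly. The key observation is that both the organisation's benchmark-based rule and the workers' experienced blind welfare have the identical functional form $1-1/R$, but with different reliability ratios plugged in; everything then reduces to the scaling identity $R^{\mathrm{exp}}=R^{\mathrm{bench}}/(1+\CV^2)$ recorded in \eqref{eq:R_exp_org}.

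First, I would record the organisation's side. By hypothesis the organisation evaluates blind value by substituting the gap-uniform mean variance $\E_{\text{gap-unif}}[\sigma^2]=1/(6\lambda)$ into \eqref{eq:utility}, which yields $1-1/R^{\mathrm{bench}}$; nonnegativity is equivalent to $R^{\mathrm{bench}}\ge 1$. Second, I would write workers' experienced blind value using Proposition~\ref{prop:experienced_gap_mean}'s length-biased mean $\E[\sigma^2]=(1+\CV^2)/(6\lambda)$. Plugging this into \eqref{eq:utility} (equivalently, invoking \eqref{eq:UB_regular_general} directly) gives $\bar U_B=1-1/R^{\mathrm{exp}}$, so workers have positive expected value iff $R^{\mathrm{exp}}\ge 1$. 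Third, I would translate the worker condition into benchmark units via \eqref{eq:R_exp_org}: $R^{\mathrm{exp}}\ge 1\iff R^{\mathrm{bench}}\ge 1+\CV^2$.

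The over-adoption interval is then the wedge between the two thresholds, $R^{\mathrm{bench}}\in[1,\,1+\CV^2)$, which is nonempty precisely because $\CV>0$ implies $1+\CV^2>1$. Substituting the Poisson value $\CV=1$ from Proposition~\ref{prop:cv} yields the special case $[1,2)$. There is no substantive obstacle here: the argument is bookkeeping, and the only care needed is to make explicit that the organisation's rule and the workers' welfare use the same $1-1/R$ template but embed different mean variances, so that the single multiplicative wedge $1+\CV^2$ from Proposition~\ref{prop:experienced_gap_mean} drives the entire result.
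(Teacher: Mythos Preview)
Your proposal is correct and follows essentially the same approach as the paper: translate the worker condition $R^{\mathrm{exp}}\ge 1$ into benchmark units via the identity $R^{\mathrm{exp}}=R^{\mathrm{bench}}/(1+\CV^2)$ from \eqref{eq:R_exp_org}, compare with the organisation's threshold $R^{\mathrm{bench}}\ge 1$, and read off the wedge $[1,1+\CV^2)$. Your version is slightly more expansive in spelling out that both criteria share the $1-1/R$ template, but the logic is identical.
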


\begin{proof}
By definition, worker value is positive if and only if $R^{\mathrm{exp}}\ge 1$.
Using \eqref{eq:R_exp_org} gives $R^{\mathrm{exp}}\ge 1\iff R^{\mathrm{bench}}\ge 1+\CV^2$.
The benchmark adoption rule requires only $R^{\mathrm{bench}}\ge 1$, so the mismatch interval is
$[1,1+\CV^2)$ for $\CV>0$.
\end{proof}

\noindent 
In the over-adoption region, the organisation is ``right'' under its measurement regime: average benchmark performance is high enough to clear the knife-edge. Workers are also ``right'': the tasks they actually see are drawn from the length-biased distribution and include disproportionately many long-gap tasks where the system fails. AJI, therefore, predicts not random disappointment but systematic disappointment concentrated in exactly the settings where adoption is marginal under benchmark averages.

Proposition~\ref{prop:over_adoption} also clarifies why measured ``returns to scaling'' can diverge across audiences. Differentiating \eqref{eq:R_bench_org}--\eqref{eq:R_exp_org} with respect to $\lambda$ yields
\[
\frac{\partial R^{\mathrm{bench}}}{\partial \lambda}=6q,
\qquad
\frac{\partial R^{\mathrm{exp}}}{\partial \lambda}=\frac{6q}{1+\CV^2}.
\]
Thus, a fixed increase in scale that looks like a $\Delta R^{\mathrm{bench}}$ improvement in benchmark terms delivers
only $\Delta R^{\mathrm{bench}}/(1+\CV^2)$ in lived reliability.
Under Poisson gaps, this is a factor of $1/2$.
AJI therefore predicts that benchmark-based scaling curves can systematically overstate the welfare returns to scaling in irregular domains unless they explicitly incorporate length bias or tail metrics.

If an organisation knows (or can bound) $\CV$, it can align adoption with worker welfare by adopting only when
\begin{equation}\label{eq:corrected_threshold}
R^{\mathrm{bench}}\ge 1+\CV^2.
\end{equation}
The challenge is that $\CV$ is typically not identified by standard leaderboards.
Section~\ref{sec:complements} therefore motivates evaluation protocols that are explicitly usage-weighted or that report tail-risk measures correlated with long gaps.

A common organisational policy is mandatory verification or human review. In the model, verification is valuable precisely because it replicates the calibration benchmark at a cost.

\begin{proposition}[When mandatory verification is welfare-improving]\label{prop:verification_rule}
Suppose workers can verify local reliability at per-task cost $c_v>0$.
Verification reveals $\sigma^2(x)$ and allows the worker to abstain whenever relying would be negative value.
Then the expected payoff from delegation with verification is
\[
U_C(\lambda,q)-c_v
\;\equiv\;
\E\!\left[\left(1-\frac{\sigma^2(x)}{q}\right)_+\right]-c_v,
\]
while blind delegation yields $\bar U_B(\lambda,q)=1-\E[\sigma^2(x)]/q=1-1/R^{\mathrm{exp}}$ (possibly negative),
where $R^{\mathrm{exp}}\equiv q/\E[\sigma^2(x)]$.
Verification strictly increases expected welfare if and only if
\begin{equation}\label{eq:verify_condition}
c_v
<
U_C(\lambda,q)-\bar U_B(\lambda,q)
=
\E\!\left[\left(\frac{\sigma^2(x)}{q}-1\right)_+\right]
=
\E\!\left[\left(\frac{\tilde Z}{R^{\mathrm{exp}}}-1\right)_+\right],
\end{equation}
where $\tilde Z\equiv \sigma^2(x)/\E[\sigma^2(x)]$ has mean $1$.
\end{proposition}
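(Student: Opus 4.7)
The plan is to reduce the verification problem to the calibration benchmark of Section~\ref{sec:calibration}, then collapse the welfare difference to a single positive-part expectation, and finally rescale to reach the normalised form in \eqref{eq:verify_condition}.

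First, I would observe that paying $c_v$ to learn $\sigma^2(x)$ before the delegation decision places the worker in exactly the information environment of the perfectly calibrated user of Section~\ref{sec:calibration}. Conditional on that information, the optimal per-task action is $\max\{U(x),0\}$, so the expected net payoff is $\E[(1-\sigma^2(x)/q)_+]-c_v = U_C(\lambda,q)-c_v$ by definition \eqref{eq:ucal_def}. On the other side, the blind payoff $\bar U_B(\lambda,q)=1-\E[\sigma^2(x)]/q=1-1/R^{\mathrm{exp}}$ reads off immediately from the definition of $R^{\mathrm{exp}}$ in \eqref{eq:R_exp_org}. Thus the welfare comparison $U_C-c_v>\bar U_B$ rearranges to $c_v<U_C-\bar U_B$, and the task reduces to rewriting the wedge $U_C-\bar U_B$ as an upper-tail expectation.

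Second, I would obtain that rewriting via the elementary pointwise identity $(a)_+ - a = (-a)_+$, applied to $a=1-\sigma^2(x)/q$, which gives $(1-\sigma^2(x)/q)_+ - (1-\sigma^2(x)/q) = (\sigma^2(x)/q - 1)_+$. Taking expectations and using linearity yields $U_C-\bar U_B = \E[(\sigma^2(x)/q-1)_+]$, which is the first equality in \eqref{eq:verify_condition}. To reach the normalised form, I would substitute $\sigma^2(x)=\tilde Z\,\E[\sigma^2(x)]$, so that $\sigma^2(x)/q = \tilde Z\cdot\E[\sigma^2(x)]/q = \tilde Z/R^{\mathrm{exp}}$, delivering the second equality.

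The only real subtlety — and the main (minor) obstacle — is confirming that the cutoff is non-vacuous, i.e.\ that $\E[(\sigma^2(x)/q-1)_+]>0$ strictly, so that ``strictly increases'' is substantive. This requires positive probability mass above $q$; in the Brownian--Poisson baseline $\sigma^2(x)=X^* t(1-t)$ has unbounded support because $X^*\sim\mathrm{Gamma}(2,\lambda)$ does, so $\Pr(\sigma^2(x)>q)>0$ for every finite $q$. This is essentially the same observation that yielded the strict positivity of $\Delta_B(R)$ after Proposition~\ref{prop:Ucal}, and the present claim inherits it. Beyond that, the proposition is algebraic and follows directly from the calibration machinery already in place.
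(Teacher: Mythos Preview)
Your proposal is correct and follows essentially the same route as the paper's proof: identify verification with the calibration benchmark, subtract the blind payoff, and invoke the cost-of-blindness identity $U_C-\bar U_B=\E[(\sigma^2(x)/q-1)_+]$. Your version is in fact more complete than the paper's, which simply states the welfare difference and cites \eqref{eq:verify_condition} without re-deriving the positive-part identity or the $\tilde Z/R^{\mathrm{exp}}$ rescaling; your added check that the tail expectation is strictly positive (so the ``strictly'' is non-vacuous) is a useful detail the paper omits.
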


\begin{proof}
Under verification, the worker delegates only when $\sigma^2(x)\le q$, receiving
$(1-\sigma^2(x)/q)_+-c_v$. Taking expectations yields $U_C(\lambda,q)-c_v$.
Without verification, the worker delegates based on expected quality, obtaining
$\bar U_B(\lambda,q)=\E[1-\sigma^2(x)/q]$. The welfare gain from verification is therefore
\[
U_C(\lambda,q)-c_v-\bar U_B(\lambda,q),
\]
and the condition for strict improvement is \eqref{eq:verify_condition}.
\end{proof}

\noindent 
The right-hand side of \eqref{eq:verify_condition} is exactly the expected loss from being forced into the high-error tail. It is larger in more jagged environments (higher dispersion in $Z$) and around the adoption frontier, which is where organisations most often debate rollout. This formalises why ``human-in-the-loop'' policies are most valuable not when the system is terrible or perfect, but when it is \emph{almost} good enough.

Even when adoption is welfare-improving on average, delegation choices can be distorted inside organisations when the
worker who chooses whether to delegate trades off organisational value against a private effort cost of doing the task
without AI. In our reduced-form setup, the misalignment comes from this effort-cost wedge (not from a separate sharing
rule for AI error losses).

\begin{assumption}[Delegation and effort]\label{ass:delegation_effort}
A worker chooses between (i) doing the task themselves at private effort cost $c_e>0$ or (ii) delegating to the AI.
If the worker delegates blindly, the expected incremental value of delegation on a random task is $\bar U_B(R^{\mathrm{exp}})$.
\end{assumption}

\noindent Assumption~\ref{ass:delegation_effort} is intentionally reduced-form: it captures that delegation saves effort but exposes the organisation to error. It abstracts from richer contracting and monitoring to keep the wedge transparent.

\begin{proposition}[Worker over-delegation]\label{prop:over_delegation}
Under Assumption~\ref{ass:delegation_effort}, the organization prefers delegation only when
$\bar U_B(R^{\mathrm{exp}})\ge 0$, i.e.\ $R^{\mathrm{exp}}\ge 1$.
A worker delegates whenever $\bar U_B(R^{\mathrm{exp}})\ge -c_e$.
Hence, workers over-delegate on a nonempty set of parameters whenever $c_e>0$.
Equivalently, in terms of benchmark reliability, the over-delegation region is
\begin{equation}\label{eq:overdelegation_region}
R^{\mathrm{bench}}
\in
\left[
\frac{1+\CV^2}{1+c_e},
\;1+\CV^2
\right),
\end{equation}
with the convention that delegation is never chosen if $\bar U_B<-c_e$.
\end{proposition}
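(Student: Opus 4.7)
The plan is to compare the organization's and worker's thresholds directly, show that the worker's threshold lies strictly below the organization's whenever $c_e>0$, and then translate the resulting wedge into the stated benchmark-reliability interval via \eqref{eq:R_exp_org}. I would begin with the blind-reliance formula $\bar U_B(R^{\mathrm{exp}})=1-1/R^{\mathrm{exp}}$ recorded in \eqref{eq:UB_regular_general}. Because the outside option is normalised to zero and the organization internalises only the value of the delegated output, delegation is organizationally preferred iff $\bar U_B(R^{\mathrm{exp}})\ge 0$, which rearranges to $R^{\mathrm{exp}}\ge 1$.

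Next I would set up the worker's problem. Under Assumption~\ref{ass:delegation_effort}, delegating yields incremental value $\bar U_B(R^{\mathrm{exp}})$ while doing the task internally yields $-c_e$ (organizational payoff $0$ net of the private effort cost). The worker therefore delegates iff $\bar U_B(R^{\mathrm{exp}})\ge -c_e$, equivalently $1/R^{\mathrm{exp}}\le 1+c_e$, i.e.\ $R^{\mathrm{exp}}\ge 1/(1+c_e)$. Since $1/(1+c_e)<1$ whenever $c_e>0$, the worker's threshold is strictly below the organization's, so the over-delegation set $R^{\mathrm{exp}}\in[1/(1+c_e),1)$ is nonempty. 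Substituting $R^{\mathrm{exp}}=R^{\mathrm{bench}}/(1+\CV^2)$ from \eqref{eq:R_exp_org} immediately converts these bounds into \eqref{eq:overdelegation_region}.

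There is no real analytical obstacle here: once the two decision problems are written down, the claim is an affine shift of the blind adoption cutoff of Theorem~\ref{thm:adoption}. The only point that requires care -- and the place I would be most explicit in the writeup -- is the sign convention in Assumption~\ref{ass:delegation_effort}. The object $\bar U_B$ is the net incremental value of delegation relative to the outside option, so the worker's private wedge comes entirely from being spared $c_e$ upon delegating, rather than from any direct interaction with the error term $\sigma^2(x)/q$. Making this accounting explicit -- delegation at $\bar U_B$ versus self-production at $-c_e$ -- yields the threshold gap $c_e$ cleanly and keeps the mapping from $R^{\mathrm{exp}}$ to $R^{\mathrm{bench}}$ mechanical.
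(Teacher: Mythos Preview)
Your proposal is correct and follows essentially the same approach as the paper: write $\bar U_B(R^{\mathrm{exp}})=1-1/R^{\mathrm{exp}}$, derive the organization's threshold $R^{\mathrm{exp}}\ge 1$ and the worker's threshold $R^{\mathrm{exp}}\ge 1/(1+c_e)$, note the strict gap when $c_e>0$, and convert via $R^{\mathrm{exp}}=R^{\mathrm{bench}}/(1+\CV^2)$. Your additional commentary on the sign convention in Assumption~\ref{ass:delegation_effort} is helpful exposition but not an alternative argument.
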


\begin{proof}
Since $\bar U_B(R^{\mathrm{exp}})=1-1/R^{\mathrm{exp}}$, the worker delegates if and only if
$1-1/R^{\mathrm{exp}}\ge -c_e$, i.e.\ $R^{\mathrm{exp}}\ge 1/(1+c_e)$.
Using $R^{\mathrm{exp}}=R^{\mathrm{bench}}/(1+\CV^2)$ from \eqref{eq:R_exp_org} yields
$R^{\mathrm{bench}}\ge (1+\CV^2)/(1+c_e)$.
The organization prefers delegation only if $R^{\mathrm{exp}}\ge 1$, i.e.\ $R^{\mathrm{bench}}\ge 1+\CV^2$.
The difference is \eqref{eq:overdelegation_region}.
\end{proof}

\noindent 
Over-delegation is mechanically larger when the environment is more jagged (larger $\CV$), because the same benchmark reliability ratio translates into lower lived reliability.
This helps explain a common deployment pathology: management sees strong benchmarks and mandates use; workers, facing effort pressure, delegate in marginal regions where expected organisational value is negative; quality incidents follow; trust erodes; and informal norms emerge that restrict use to ``safe'' pockets.

Once workers respond to jaggedness by restricting use to safe regions, their experience becomes endogenous to their delegation policy. Section~\ref{sec:mastery} shows that this creates an additional wedge: scaling improvements can be \emph{economically latent} if they occur outside the region workers visit or can infer about from what they visit (Proposition~\ref{prop:abstention_trap}). Organisational benchmark improvements then overstate worker-experienced improvements twice: first mechanically through \eqref{eq:R_exp_org}, and second dynamically because conservative delegation can prevent the sampling process needed to discover new capabilities.

In sum, AJI is not only a property of model error; it is a property of \emph{institutions and measurement}.
It predicts when benchmark-driven adoption will disappoint, why returns to scaling can look different to managers and workers, and why calibration/verification and regularity investments are disproportionately valuable in exactly the domains where organisations most want reliable AI.

\section{Conclusion}\label{sec:conclusion}

Artificial jagged intelligence (AJI) is a geometric feature of model reliability: performance varies sharply across nearby tasks because knowledge coverage is uneven. In the Brownian--Poisson baseline, this unevenness interacts with the inspection paradox: the gap a typical user encounters is length-biased, so the average experienced error is larger than what is suggested by the average spacing between anchors. This yields simple adoption thresholds for blind users and clarifies why scaling (higher coverage intensity) improves mean reliability while leaving a persistent right tail of failure risk.

A central economic implication is the option value of task-level calibration. When users can observe (or infer) local reliability, they can selectively delegate, turning jaggedness into an exploitable opportunity. Calibration and scale can, therefore, be substitutes (when scale compresses the error distribution and shrinks the tail) or complements (when calibration enables users to safely harvest improvements from scale that would otherwise be unusable).

Finally, mastery matters because calibration is rarely perfect: users must learn where delegation is safe. Under a
Gaussian-process learning model, the worst-case uncertainty about local reliability is bounded above by
$O(\gamma_t/t)$, where $\gamma_t$ captures the effective complexity of the reliability landscape. This creates an
abstention trap: if users only sample tasks they already believe are safe, improvements from scaling can remain
undiscovered. Interface designs that encourage probing near the frontier, provide reliable uncertainty signals, or
make verification cheap can turn otherwise hidden scaling gains into realised welfare improvements.

These results suggest that evaluating AI systems solely by aggregate benchmarks can systematically understate both
risk (via length-biased exposure) and value (via the option value of selective delegation). Understanding AJI is
therefore essential for forecasting adoption, guiding scaling investment, and designing tools that make reliability discoverable rather than accidental.

\newpage

\bibliography{references}
\bibliographystyle{ecta}

\newpage

\appendix

\section{Online Appendix: Generalising the Inspection Paradox}\label{app:inspection}

The main text uses Poisson coverage as a tractable baseline: gaps are exponential, the length-biased gap is
Gamma$(2,\lambda)$ (Proposition~\ref{prop:inspection_paradox}), and several objects admit closed forms.
This appendix records what does \emph{not} rely on the Poisson structure.
The essential point is that the inspection paradox is a manifestation of \emph{size-biased sampling}: a uniformly drawn task location is more likely to fall in larger ``holes'' of the coverage process.
This logic is distribution-free and extends beyond one-dimensional gaps.

\subsection{Disclosure, liability, and evaluation design (brief discussion)}
AJI suggests that policy and market design should focus less on single-number benchmark scores and more on how reliability varies across tasks and how users are exposed to the worst regions.
Two levers are natural: (i) disclosure and interface obligations (calibration, abstention, provenance, and uncertainty
communication) that enable selective delegation, and (ii) evaluation standards that require reporting tails (or
experience-weighted risk) rather than only macro-averages. Liability regimes can also interact with jaggedness: when error costs are externalised, providers may rationally target headline averages while underinvesting in regularity
and tail safety.

\subsection{Length bias for a general gap distribution}\label{app:inspection_general}

Let $\{z_i\}$ denote knowledge points on a line and let $X$ be the (unbiased) gap length between adjacent points.
In the Poisson baseline, $X\sim \mathrm{Exp}(\lambda)$ with $\E[X]=1/\lambda$.
Here we allow $X$ to have an arbitrary distribution with $\E[X]\in(0,\infty)$.

\begin{definition}[Length-biased gap]\label{def:length_bias}
Let $X^*$ denote the length of the gap containing a uniformly drawn task location.
The \emph{length-biased} distribution of $X^*$ is defined by
\begin{equation}\label{eq:length_bias_general_measure}
\Pr(X^*\in A)
\;\equiv\;
\frac{\E\!\left[X\,\1\{X\in A\}\right]}{\E[X]}
\qquad\text{for all measurable }A.
\end{equation}
If $X$ has a density $f_X$, then $X^*$ has density
\begin{equation}\label{eq:length_bias_density}
f_{X^*}(x)=\frac{x f_X(x)}{\E[X]}.
\end{equation}
\end{definition}

\noindent Definition~\ref{def:length_bias} is the formal statement of the inspection paradox:
tasks arrive uniformly in space, not uniformly in \emph{gaps}, so gaps are sampled in proportion to their length.

\begin{lemma}[General inspection-paradox identities]\label{lem:gen_inspection}
If $\E[X]<\infty$, then $X^*$ is well-defined by Definition~\ref{def:length_bias} and for any $m\ge 0$ with
$\E[X^{m+1}]<\infty$,
\begin{equation}\label{eq:length_bias_moments}
\E\!\left[(X^*)^m\right]=\frac{\E[X^{m+1}]}{\E[X]}.
\end{equation}
In particular, if $\E[X^2]<\infty$,
\begin{equation}\label{eq:IP_multiplier_general}
\E[X^*]=\frac{\E[X^2]}{\E[X]}=\E[X]\left(1+\CV^2\right),
\end{equation}
where $\CV\equiv \sqrt{\Var(X)}/\E[X]$.
\end{lemma}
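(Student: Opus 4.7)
The plan is to treat Definition~\ref{def:length_bias} as the primitive object and derive both identities by direct integration against the length-biased measure. First I would verify that $X^*$ is well-defined, which reduces to checking that the set function in \eqref{eq:length_bias_general_measure} is a probability measure on $(0,\infty)$. Nonnegativity and countable additivity are inherited from the law of $X$, so the only nontrivial step is normalisation: taking $A=(0,\infty)$ gives $\Pr(X^*\in A)=\E[X]/\E[X]=1$, which is well-defined precisely because $\E[X]\in(0,\infty)$ (the hypothesis rules out a degenerate $0/0$). The density formula \eqref{eq:length_bias_density} is then a routine Radon--Nikodym statement when $X$ admits a density.

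Next I would compute $\E[(X^*)^m]$ by a one-line change-of-measure argument. For any nonnegative Borel $g$, \eqref{eq:length_bias_general_measure} extends by the standard machine to $\E[g(X^*)]=\E[X\,g(X)]/\E[X]$; specialising to $g(x)=x^m$ delivers
\[
\E[(X^*)^m]=\frac{\E[X\cdot X^m]}{\E[X]}=\frac{\E[X^{m+1}]}{\E[X]},
\]
which is \eqref{eq:length_bias_moments}. Absolute convergence of the numerator is exactly the hypothesis $\E[X^{m+1}]<\infty$, so the right-hand side is unambiguously finite.

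For the final identity I would set $m=1$ to obtain $\E[X^*]=\E[X^2]/\E[X]$, then substitute the variance decomposition $\E[X^2]=\Var(X)+(\E[X])^2$ and factor out $(\E[X])^2$ from the numerator:
\[
\E[X^*]=\frac{(\E[X])^2+\Var(X)}{\E[X]}=\E[X]\left(1+\frac{\Var(X)}{(\E[X])^2}\right)=\E[X](1+\CV^2).
\]
The main subtlety throughout is the well-definedness step rather than any substantive obstacle: once the length-biased measure is confirmed to be a probability measure under $\E[X]\in(0,\infty)$, the moment identity is an immediate application of the change-of-measure formula, and the coefficient-of-variation reformulation is purely algebraic. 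The argument is genuinely distribution-free; the Poisson structure used in the main text enters only when one wants to compute $\E[X^{m+1}]$ explicitly.
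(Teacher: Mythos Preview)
Your proposal is correct and follows essentially the same approach as the paper's proof: both compute $\E[(X^*)^m]$ by integrating $x^m$ against the length-biased law to obtain $\E[X^{m+1}]/\E[X]$, then recover the $\CV$ expression via $\E[X^2]=\Var(X)+(\E[X])^2$. Your treatment is slightly more careful (you verify well-definedness explicitly and argue at the level of the measure rather than assuming a density), but the substantive argument is identical.
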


\begin{proof}
By Definition~\ref{def:length_bias},
\[
\E[(X^*)^m]
=\int x^m \frac{x f_X(x)}{\E[X]}\,dx
=\frac{\E[X^{m+1}]}{\E[X]},
\]
and \eqref{eq:IP_multiplier_general} follows by writing
$\E[X^2]=\Var(X)+(\E[X])^2=(1+\CV^2)(\E[X])^2$.
\end{proof}

\noindent 
The Brownian-bridge formula within a gap is unchanged: conditional on the realised gap length $X^*$ and relative position
$t\sim \mathrm{Uniform}(0,1)$ within that gap,
\[
\sigma^2(x)=X^*\,t(1-t).
\]
Hence, whenever $\E[X^2]<\infty$,
\begin{equation}\label{eq:Esig_general_appendix}
\E[\sigma^2(x)]
=
\E[X^*]\E[t(1-t)]
=
\frac{\E[X^*]}{6}
=
\frac{\E[X^2]}{6\,\E[X]}
=
\frac{\E[X]}{6}\left(1+\CV^2\right).
\end{equation}
Equation~\eqref{eq:Esig_general_appendix} is the distribution-free backbone of the paper:
\emph{user-experienced} average uncertainty depends on (i) the \emph{mean} gap length $\E[X]=1/\lambda$ (scale) and
(ii) the \emph{dispersion} of gaps $\CV$ (regularity).

\subsection{Regularity enters as a multiplicative wedge}\label{app:inspection_wedge}

Benchmarks and pilots often resemble \emph{gap-uniform} sampling: they test a fixed number of task families or domains,
implicitly weighting each gap equally.
By contrast, actual use is \emph{length-biased}.
A convenient way to express the resulting measurement wedge is to compare the implied ``reliability ratios.''

Let $\lambda\equiv 1/\E[X]$ denote coverage intensity and fix stakes $q$.
Under gap-uniform sampling, the benchmark-relevant mean variance is $\E_{\text{gap-unif}}[\sigma^2]=\E[X]/6=1/(6\lambda)$.
Under length-biased sampling, \eqref{eq:Esig_general_appendix} gives
$\E_{\text{use}}[\sigma^2]=\E[X](1+\CV^2)/6=(1+\CV^2)/(6\lambda)$.

\begin{proposition}[Benchmark vs.\ experienced reliability]\label{prop:bench_vs_exp_appendix}
Assume $\E[X^2]<\infty$.
Define
\[
R^{\mathrm{bench}}\;\equiv\;\frac{q}{\E_{\text{gap-unif}}[\sigma^2]}
=6\lambda q,
\qquad
R^{\mathrm{exp}}\;\equiv\;\frac{q}{\E_{\text{use}}[\sigma^2]}
=\frac{6\lambda q}{1+\CV^2}.
\]
Then
\begin{equation}\label{eq:R_wedge_appendix}
R^{\mathrm{exp}}=\frac{R^{\mathrm{bench}}}{1+\CV^2}\le R^{\mathrm{bench}},
\end{equation}
with equality if and only if $\CV=0$ (perfectly regular spacing).
For Poisson gaps ($\CV=1$), $R^{\mathrm{exp}}=R^{\mathrm{bench}}/2$.
\end{proposition}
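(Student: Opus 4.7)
The plan is to observe that this proposition is essentially an algebraic corollary of the distribution-free identity in equation \eqref{eq:Esig_general_appendix}, which itself was assembled from the length-bias moment formula in Lemma~\ref{lem:gen_inspection} and the Brownian-bridge conditional variance $X^\ast t(1-t)$ integrated against $t\sim\mathrm{Uniform}(0,1)$. The proof therefore has four pieces: (i) substitute the two mean-variance expressions into the definitions of $R^{\mathrm{bench}}$ and $R^{\mathrm{exp}}$; (ii) divide to extract the wedge identity; (iii) read off the inequality and equality case from nonnegativity of $\CV^2$; (iv) plug in the exponential moments for the Poisson corollary.

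First, I would write out each ratio explicitly. For the benchmark, gap-uniform sampling gives $\E_{\text{gap-unif}}[\sigma^2] = \E[X]/6 = 1/(6\lambda)$ (using $\lambda \equiv 1/\E[X]$), so $R^{\mathrm{bench}} = 6\lambda q$. For the experienced ratio, invoking \eqref{eq:Esig_general_appendix} under the hypothesis $\E[X^2]<\infty$ yields $\E_{\text{use}}[\sigma^2] = \E[X](1+\CV^2)/6 = (1+\CV^2)/(6\lambda)$, and hence $R^{\mathrm{exp}} = 6\lambda q/(1+\CV^2)$. Dividing the two expressions immediately produces the wedge $R^{\mathrm{exp}}/R^{\mathrm{bench}} = 1/(1+\CV^2)$, which is the identity in \eqref{eq:R_wedge_appendix}.

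Second, the inequality $R^{\mathrm{exp}} \le R^{\mathrm{bench}}$ follows from $1+\CV^2 \ge 1$. Equality holds iff $\CV^2 = 0$; because $\E[X] \in (0,\infty)$ by hypothesis, $\CV = 0$ is equivalent to $\Var(X) = 0$, i.e.\ $X$ is almost surely constant, which is the ``perfectly regular spacing'' case. Finally, for Poisson coverage $X \sim \mathrm{Exp}(\lambda)$ has $\E[X]=1/\lambda$ and $\Var(X)=1/\lambda^2$, so $\CV = 1$; substituting $\CV^2 = 1$ into the wedge gives the factor $1/2$, proving the last claim.

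There is no substantive obstacle here: the proposition is effectively a rewriting of Lemma~\ref{lem:gen_inspection} in reliability-ratio units. The only conceptual points worth flagging in the write-up are (a) to cite the $\E[X^2]<\infty$ hypothesis explicitly, since otherwise $\CV$ need not be finite and $\E_{\text{use}}[\sigma^2]$ can diverge, and (b) to note that the equality case is a full-distribution statement ($X$ degenerate), not a statement about tail thinness or boundedness.
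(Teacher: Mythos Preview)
Your proposal is correct and matches the paper's own proof, which simply states ``Immediate from \eqref{eq:Esig_general_appendix} and the definitions.'' You have spelled out the algebra and the equality/Poisson cases explicitly, but the route is identical.
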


\begin{proof}
Immediate from \eqref{eq:Esig_general_appendix} and the definitions.
\end{proof}

\noindent 
All mean-based ``blind'' results in the main text extend verbatim once reliability is measured in experienced units.
For example, blind expected utility (before truncation at zero) can be written as
\[
\bar U_B
=1-\frac{\E[\sigma^2]}{q}
=1-\frac{1}{R^{\mathrm{exp}}},
\]
so the blind-adoption threshold is $R^{\mathrm{exp}}\ge 1$, i.e.\ $R^{\mathrm{bench}}\ge 1+\CV^2$.
Equation~\eqref{eq:R_wedge_appendix} therefore clarifies why returns to scale can be mismeasured: benchmark improvements map into experienced improvements only after dividing by $(1+\CV^2)$.

\subsection{Examples and tail amplification}\label{app:inspection_examples}

Table~\ref{tab:IP_examples} reports the inspection-paradox multiplier for common gap families.
All ratios are scale-free: they depend on \emph{shape} (regularity/tails), not on $\lambda$.

\begin{table}[t]
\centering
\begin{tabular}{lccc}
\toprule
Gap distribution $X$ & Condition & $\CV(X)$ & $\E[X^*]/\E[X]=1+\CV^2$ \\
\midrule
Deterministic spacing & --- & $0$ & $1$ \\
Exponential (Poisson gaps) & --- & $1$ & $2$ \\
Gamma$(k,\theta)$ (shape $k$, scale $\theta$) & $k>0$ & $1/\sqrt{k}$ & $1+1/k$ \\
Pareto$(\alpha)$ (Type I) & $\alpha>2$ & $\sqrt{\frac{1}{\alpha(\alpha-2)}}$ &
$\frac{(\alpha-1)^2}{\alpha(\alpha-2)}$ \\
\bottomrule
\end{tabular}
\caption{Inspection-paradox multipliers for common gap distributions. For Pareto$(\alpha)$ with $\alpha\le 2$,
$\E[X^2]=\infty$ so $\E[X^*]=\infty$ and the multiplier is unbounded.}
\label{tab:IP_examples}
\end{table}

Lemma~\ref{lem:gen_inspection} requires $\E[X^2]<\infty$ to summarize the paradox with $1+\CV^2$.
If $\E[X]<\infty$ but $\E[X^2]=\infty$ (e.g.\ Pareto with $\alpha\in(1,2]$), then $\E[X^*]=\infty$.
Economically, this means that the user-experienced mean uncertainty $\E[\sigma^2]$ diverges even though a naive ``average gap'' statistic exists. This is an extreme form of jaggedness: rare, enormous holes dominate lived experience.
Most of the paper intentionally avoids this case because mean-based objects like $R$ cease to be informative, but the
qualitative message strengthens: tail behaviour of coverage becomes decisive.

\subsection{Beyond one dimension: size-biased cells}\label{app:inspection_multidim}

In higher-dimensional task spaces, ``gaps'' are not intervals but \emph{sparse regions}.
A natural geometric analogue is the Voronoi tessellation induced by knowledge points: each point owns the region of task space closer to it than to any other point.\footnote{A Voronoi tessellation is a partition of a metric space into regions such that every point in a given region is closer to that region's defining seed point than to any other seed in the set.}

Let $\{C_i\}$ be the Voronoi cells and let $V$ denote the random volume of a \emph{typical} cell.
A uniformly drawn task location falls in cell $i$ with probability proportional to $\mathrm{vol}(C_i)$, so the volume
of the \emph{experienced} cell is size-biased.

\begin{lemma}[Size bias in Voronoi tessellations]\label{lem:voronoi_size_bias}
Let $V$ be the volume of a typical cell and let $V^*$ be the volume of the cell containing a uniformly drawn location.
Then $V^*$ is size-biased:
\[
\Pr(V^*\in A)=\frac{\E\!\left[V\,\1\{V\in A\}\right]}{\E[V]}.
\]
If $\E[V^2]<\infty$,
\[
\E[V^*]=\frac{\E[V^2]}{\E[V]}=\E[V]\left(1+\CV(V)^2\right).
\]
\end{lemma}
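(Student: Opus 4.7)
The plan is to mirror the one-dimensional argument in Lemma~\ref{lem:gen_inspection}, replacing interval length with cell volume and Lebesgue measure on $\R$ with Lebesgue measure on $\R^d$. The central observation is that, conditional on a realisation of the tessellation, a uniformly drawn location $U$ in a bounded reference window $D$ falls in cell $C_i$ with conditional probability $\mathrm{vol}(C_i\cap D)/\mathrm{vol}(D)$. This is the direct multi-dimensional analogue of ``a uniform location lands in a given gap in proportion to its length,'' and is the source of size-bias in exactly the same way.

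First I would fix a measurable set $A\subset\R_+$ of cell volumes and write, using the joint law of the point process and the independent uniform $U$,
\[
\Pr(V^*\in A)
=\E\!\left[\sum_{i}\frac{\mathrm{vol}(C_i\cap D)}{\mathrm{vol}(D)}\,\1\{\mathrm{vol}(C_i)\in A\}\right].
\]
Next I would invoke stationarity (and ergodicity, if taking a large-window limit) of the underlying point process and apply the Campbell--Palm formula for marked stationary point processes: summing cell-level functionals weighted by cell volume and normalising by window volume recovers a volume-weighted expectation under the Palm distribution of a typical cell. As $D\uparrow\R^d$ this identifies the right-hand side with $\E[V\,\1\{V\in A\}]/\E[V]$, which is exactly the claimed size-biased law. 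On a bounded domain with a finite realisation, the same identity holds verbatim by conditioning on the realisation and taking expectations over $U$, so no limiting argument is strictly necessary there.

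Finally, the moment formula follows by integration, exactly as in the one-dimensional case. Writing $f_V$ for the density of $V$ under the Palm law (or treating $V$ more generally through the measure),
\[
\E[(V^*)^m]=\int v^m\,\frac{v f_V(v)}{\E[V]}\,dv=\frac{\E[V^{m+1}]}{\E[V]},
\]
and specialising to $m=1$ together with $\E[V^2]=\Var(V)+(\E[V])^2=(1+\CV(V)^2)(\E[V])^2$ delivers $\E[V^*]=\E[V](1+\CV(V)^2)$.

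The main obstacle, and the only substantive difference from the one-dimensional case, is the rigorous identification of the ``typical cell'' distribution with the Palm distribution of the stationary point process: in dimensions $d\ge 2$, there is no canonical ordering of cells and the Palm calculus must be invoked explicitly (whereas in one dimension the size-bias identity can be derived by renewal-theoretic arguments). This is standard in stochastic geometry under stationarity and finite intensity; the economically relevant content of the lemma (size bias plus the $1+\CV^2$ multiplier) is insensitive to these measure-theoretic details.
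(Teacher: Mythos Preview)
Your proposal is correct and takes essentially the same approach as the paper: the paper's proof is a single sentence stating that the argument of Definition~\ref{def:length_bias} and Lemma~\ref{lem:gen_inspection} carries over verbatim with cell volume in place of interval length. You supply considerably more detail (the Campbell--Palm identification of the typical cell, the large-window limit), which is welcome, but the underlying idea is identical.
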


\begin{proof}
This is the same argument as in Definition~\ref{def:length_bias} and Lemma~\ref{lem:gen_inspection}, replacing interval
length by cell volume.
\end{proof}

\noindent 
Lemma~\ref{lem:voronoi_size_bias} shows that the inspection-paradox logic survives in any dimension: users are overexposed to large empty regions, measured now by cell volume rather than interval length.
What becomes more complex is mapping cell geometry into the local interpolation risk $\sigma^2(x)$.
In one dimension, Brownian interpolation yields the closed-form bridge variance $\sigma^2=X^*t(1-t)$.
In higher dimensions, the analogue depends on the covariance structure and on the geometry of the nearest anchors.
Nevertheless, the economic primitive persists: size-biasing of ``coverage regions'' creates a systematic wedge between
gap-uniform evaluation and user experience. The exact mapping into a \emph{multiplicative} wedge in $\E[\sigma^2(x)]$
depends on how interpolation risk varies with cell geometry and the covariance structure, so the one-dimensional
$(1+\CV^2)$ factor need not carry over verbatim.

\subsection{Implications for the main text}\label{app:inspection_implications}

This appendix clarifies what Poisson coverage is doing in the main text and what it is not doing.

\begin{itemize}[leftmargin=1.25em]
\item \textbf{Poisson is a convenience, not a knife-edge.}
The inspection paradox is distribution-free: it is length bias (Definition~\ref{def:length_bias}).
Poisson simply makes $X^*$ Gamma$(2,\lambda)$ and yields closed forms for calibrated objects.

\item \textbf{Regularity is a second capability primitive.}
Equation~\eqref{eq:Esig_general_appendix} shows that experienced mean uncertainty decomposes into
scale ($\lambda$) and dispersion ($\CV$).
This is the formal reason that investments that ``close the worst gaps'' can have high welfare return even when they have
little effect on benchmark averages.

\item \textbf{Returns to scale are systematically mismeasured when $\CV>0$.}
Proposition~\ref{prop:bench_vs_exp_appendix} implies that benchmark improvements in $R^{\mathrm{bench}}$ translate into
experienced improvements in $R^{\mathrm{exp}}$ only after dividing by $(1+\CV^2)$.
Under Poisson gaps, this factor is $2$; under heavier tails, it can be much larger.

\item \textbf{Where Poisson matters.}
Objects that depend on the \emph{full distribution} of local variance (e.g.\ calibrated use shares, reasoning cutoffs,
tail losses) depend on the shape of $X^*$, not just its mean.
Poisson provides a clean, closed-form baseline; moving beyond it changes quantitative expressions but preserves the
qualitative logic that jaggedness generates option value and tail-risk wedges.

\item \textbf{Higher-dimensional task spaces retain the same core force.}
In $d>1$ dimensions, intervals are replaced by Voronoi cells, and length bias becomes size bias
(Lemma~\ref{lem:voronoi_size_bias}). The empirical question becomes: how dispersed are coverage regions, and how does that dispersion translate into interpolation risk for the relevant task representation?
\end{itemize}
In short, the inspection paradox is not a statistical curiosity tied to Poisson processes. It is the mechanism that converts irregular coverage into a systematic wedge between what is easy to measure and what is actually experienced, and it is this wedge that drives the paper's results on adoption, scaling incentives, and design.

\section{Online Appendix: Benchmarks in practice and AJI adjustments}\label{app:benchmarks_aji}

This appendix catalogues widely used LLM benchmarks and shows how their headline scores can be
\emph{systematically} reinterpreted and, where possible, \emph{adjusted} in light of Artificial Jagged Intelligence (AJI).
The main paper's mechanism is the inspection paradox: users do not encounter ``domains'' or ``gaps'' uniformly; they encounter them in proportion to how much of the task space those regions occupy. This creates a
measurement wedge between (i) a benchmark that macro-averages across a curated set of tasks/domains and
(ii) experienced reliability in deployment.

\subsection{A general correction: from macro-averages to experienced reliability}\label{app:benchmarks_general}

Many benchmarks report a \emph{single-number score} by (approximately) averaging performance across a set of
domains/tasks.
In the AJI model, the analogous procedure corresponds to \emph{gap-uniform} evaluation: each region of the task space
is represented once, regardless of how much of the space it occupies.

To make the adjustment transparent, consider a discrete task space of total length $L$ partitioned into $n$ regions
(``gaps'') with lengths $(X_i)_{i=1}^n$ and $\sum_i X_i=L$.  Within region $i$, the within-gap \emph{average} posterior
variance is $X_i/6$ (from $\sigma^2(x)=X_i t(1-t)$ and $\int_0^1 t(1-t)\,dt=1/6$).  Then:
\begin{align}
\E_{\text{bench}}[\sigma^2]
&=\frac{1}{n}\sum_{i=1}^n \frac{X_i}{6}
=\frac{L}{6n},
\\
\E_{\text{user}}[\sigma^2]
&=\sum_{i=1}^n \Pr(\text{user in }i)\cdot \E[\sigma^2\mid i]
=\sum_{i=1}^n \frac{X_i}{L}\cdot \frac{X_i}{6}
=\frac{1}{6L}\sum_{i=1}^n X_i^2.
\end{align}
The ratio is
\begin{equation}\label{eq:inspection_adjustment_discrete}
\frac{\E_{\text{user}}[\sigma^2]}{\E_{\text{bench}}[\sigma^2]}
=
\frac{n\sum_{i=1}^n X_i^2}{\left(\sum_{i=1}^n X_i\right)^2}
=
1+\CV_X^2,
\end{equation}
where $\CV_X\equiv \sqrt{\Var(X)}/\E[X]$ is the coefficient of variation of gap lengths.
In the Poisson baseline $\CV_X=1$, so a gap-uniform benchmark is \emph{twice as optimistic} about average reliability:
$\E_{\text{user}}[\sigma^2]=2\,\E_{\text{bench}}[\sigma^2]$.

\paragraph{What can be adjusted in practice?}
Equation~\eqref{eq:inspection_adjustment_discrete} suggests two empirically implementable benchmark adjustments.

\begin{enumerate}[label=(\roman*),leftmargin=1.25em]
\item \textbf{Exposure reweighting (preferred when feasible).}
Let $d$ index benchmark domains (subjects, tasks, scenarios).  If the benchmark reports a per-domain score
$s_d(m)$ for model $m$ and aggregates with weights $\omega_d^{\text{bench}}$ (often uniform), then a deployment-aligned score is
\begin{equation}\label{eq:deployment_weighted_score}
S^{\text{dep}}(m)=\sum_d \omega_d^{\text{dep}}\, s_d(m),
\qquad \sum_d \omega_d^{\text{dep}}=1,
\end{equation}
where $\omega_d^{\text{dep}}$ is estimated from usage logs, product telemetry, or other prompt-frequency data.

\item \textbf{Inspection-paradox ``jaggedness penalty'' (diagnostic when exposure weights are unavailable).}
If only an unweighted or uniform-weight benchmark is available, one can report a dispersion statistic across domains,
such as $\CV_d(e_d)$ for domain-level error rates $e_d\equiv 1-s_d$.
In settings where domain size and domain error are positively associated (the AJI mechanism),
$\CV_d(e_d)$ is informative about how far the macro-average can drift from experienced reliability.
This is not a structural estimator without additional assumptions, but it is a practical diagnostic: high dispersion signals a potentially large inspection wedge.
\end{enumerate}

\subsection{Benchmark-by-benchmark: design, use, and AJI adjustments}\label{app:benchmarks_catalogue}

Below, we summarise prominent benchmarks and indicate what an AJI-style adjustment would require.

\paragraph{GLUE and SuperGLUE (macro-averaged NLU suites).}
GLUE and SuperGLUE were early ``single-number'' NLU benchmarks designed to summarise progress across multiple tasks.
Both aggregate task metrics via an (approximately) \emph{unweighted average across tasks}, and when a task has multiple
metrics, the task score is an average of those metrics. \citep{wang2018glue,wang2019superglue}

\emph{AJI implication.}  Task-uniform aggregation corresponds to gap-uniform weighting (each block gets equal weight); \eqref{eq:inspection_adjustment_discrete} shows the resulting wedge relative to usage-weighted experience.  If an organisation's actual workload concentrates in a subset of tasks
(e.g.\ entailment-like tasks), GLUE-style macro-averages can misstate experienced performance.

\emph{Adjustment.}  (i) Replace task-uniform weights with deployment weights $\omega_d^{\text{dep}}$ in
\eqref{eq:deployment_weighted_score}. (ii) Report dispersion across tasks (e.g.\ $\CV_d(e_d)$ or the 10th percentile
task score) to surface jaggedness.

\paragraph{MMLU (broad subject test of knowledge/reasoning).}
MMLU evaluates multiple-choice performance over 57 subjects and reports an overall accuracy aggregated across tasks
and examples. \citep{hendrycks2021measuring}

\emph{AJI implication.}  MMLU's overall score is a convenient proxy for \emph{mean} capability, but it does not reveal
how uneven performance is across subjects (the AJI ``holes''), nor does it weight subjects by how frequently they are
encountered in deployment (or by stakes).

\emph{Adjustment.}  (i) When subject-level scores are available, compute deployment weights across subjects
(e.g.\ enterprise vs.\ consumer mixes) and form $S^{\text{dep}}$ in \eqref{eq:deployment_weighted_score}. (ii)
Complement the mean with tail summaries (worst-quintile subjects, dispersion of errors) to approximate the inspection wedge.

\paragraph{BIG-bench (long-tail capability suite).}
BIG-bench contains 204 tasks and defines aggregate performance as the \emph{average across tasks} of each task's
``normalised'' preferred metric. \citep{srivastava2023bigbench}

\emph{AJI implication.}  Because BIG-bench deliberately spans many rare/long-tail tasks, it is naturally interpreted as
probing the \emph{right tail} of difficulty rather than estimating a usage-weighted average.  Under AJI, this is
useful: the tail is precisely where surprising failures and adoption frictions come from.

\emph{Adjustment.}  (i) Treat BIG-bench as a tail-risk module and report the distribution of task scores (e.g.\ quantiles), not only the average. (ii) For a deployment-aligned level, combine it with a usage-weighted benchmark using a mixture:
\[
S^{\text{mix}}(m)
=(1-\pi)\,S^{\text{main}}(m)+\pi\,S^{\text{tail}}(m),
\]
where $\pi$ is the (estimated) share of ``tail'' queries in the deployment environment.

\paragraph{HELM (scenario $\times$ metric matrix).}
HELM evaluates language models across many scenarios and multiple desiderata (beyond accuracy), explicitly organising
evaluation as a scenario-by-metric matrix. \citep{liang2022holistic}

\emph{AJI implication.}  HELM is well-suited to AJI because it already emphasises \emph{coverage} (many scenarios) and
\emph{multi-metric tradeoffs}.  The remaining issue is aggregation: deployment rarely values scenarios uniformly.
Moreover, AJI stresses that the stakes vary by scenario, so a welfare-aligned aggregation should incorporate
both \emph{frequency} and \emph{stakes}.

\emph{Adjustment.}  (i) Construct a deployment-weighted scenario score using \eqref{eq:deployment_weighted_score}.  (ii)
If stakes differ substantially across scenarios, weight scenarios by an estimate of $q_d^{-1}$ (or by policy-relevant
loss weights) to reflect that errors in some scenarios are much more costly.

\paragraph{TruthfulQA (adversarial truthfulness/misconception traps).}
TruthfulQA is designed to test whether models produce truthful answers in settings where humans often repeat common
misconceptions; it comprises 817 questions across 38 categories. \citep{lin2022truthfulqa}

\emph{AJI implication.}  TruthfulQA is a targeted probe of an important failure mode rather than a representative
average-case workload. Under the AJI lens, it is best treated as measuring the probability mass and overshoot of a
high-cost tail (a ``catastrophic'' region), which maps closely to the cost-of-blindness object in the main paper.

\emph{Adjustment.}  Use TruthfulQA (and similar safety/truthfulness suites) as an estimate of tail failure risk and
report it alongside an average-case benchmark:
\[
\text{Report:}\quad (S^{\text{avg}}(m),\; \Pr(\text{tail failure}\mid m)).
\]
This mirrors the AJI prescription in structure: pair an average-case metric with a tail-risk (or coverage) metric.
These benchmark objects are not meant as direct estimates of $(U_C,s_C)$, which depend on an endogenous abstention policy
and the usage distribution.

\paragraph{MT-Bench and Chatbot Arena (interactive/chat evaluation).}
MT-Bench is a fixed multi-turn question set scored by an LLM-judge methodology. \citep{zheng2023judging}
Chatbot Arena collects pairwise human preferences over model outputs on crowdsourced prompts and fits a statistical
preference model to produce a ranking/score. \citep{chiang2024chatbotarena}

\emph{AJI implication.}  MT-Bench is closer to a conventional gap-uniform benchmark: it is a small, fixed set of prompts
and thus inherits the usual representativeness concern.
Chatbot Arena is conceptually closer to \emph{user-weighted} sampling because prompts are contributed by users, which
partially internalises the inspection-paradox logic: more frequent query types are more likely to appear.

\emph{Adjustment.}
\begin{itemize}[leftmargin=1.25em]
\item For MT-Bench: stratify the prompts by domain (or inferred topic) and apply deployment weights, or expand the prompt set using product-telemetry-derived sampling so that prompt frequency matches deployment.
\item For Chatbot Arena: reweight (or subsample) prompts to match the target user population and use-case mix (enterprise vs.\ consumer, language mix, safety-critical vs.\ casual).  Arena provides an \emph{estimable} starting point for $\omega_d^{\text{dep}}$ in \eqref{eq:deployment_weighted_score}, but the weights are environment-specific.
\end{itemize}

\paragraph{A note on ``bias corrections'' in benchmark practice.}
It is increasingly standard to correct benchmarks for known, mechanically induced biases (e.g.\ output-length bias in
automatic preference evaluators). Length-Controlled AlpacaEval provides an example of such a correction strategy in
LLM evaluation. \citep{dubois2024lengthcontrolled}
The AJI adjustments advocated here are analogous in spirit: they aim to correct \emph{sampling/weighting} biases that
systematically overstate experienced reliability by underweighting sparse, failure-prone regions.

\subsection{A practical reporting template (minimal AJI additions)}\label{app:benchmarks_template}

For an evaluator who wants to remain close to existing practice but incorporate AJI, a minimal template is:

\begin{enumerate}[label=(\roman*),leftmargin=1.25em]
\item \textbf{Mean score} (status quo): the existing benchmark aggregate (e.g.\ macro-average).
\item \textbf{Dispersion/tails}: report at least one dispersion statistic across benchmark domains/tasks
(e.g.\ $\CV_d(e_d)$, 10th percentile domain score, or worst-$k$ domains).
\item \textbf{Deployment reweighting (when possible)}: publish a deployment-weighted score
$S^{\text{dep}}$ as in \eqref{eq:deployment_weighted_score} for a small number of canonical environments
(e.g.\ ``consumer chat'', ``enterprise office workflow'', ``coding assistant'').
\item \textbf{Tail-risk module}: report a targeted safety/truthfulness benchmark (e.g.\ TruthfulQA) separately,
rather than folding it into the same macro-average.
\end{enumerate}
\newpage

\begin{table}[h]
\centering
\footnotesize
\caption{Summary: how common benchmark designs map into AJI and what an adjustment requires.}
\label{tab:aji_benchmark_summary}
\begin{tabular}{p{0.17\linewidth} p{0.18\linewidth} p{0.20\linewidth} p{0.25\linewidth} p{0.16\linewidth}}
\toprule
Benchmark & Default aggregation & AJI issue (why the headline can mislead) & AJI-style adjustment (what to do) & Data required \\
\midrule
GLUE / SuperGLUE \citep{wang2018glue,wang2019superglue}
& Mostly task-uniform averages
& Gap-uniform weighting ignores exposure and tails
& Reweight tasks by deployment mix; report dispersion/quantiles
& Task-level scores; usage weights \\
\addlinespace
MMLU \citep{hendrycks2021measuring}
& Overall accuracy over many subjects
& Mean hides uneven subject reliability; subject exposure differs by deployment
& Reweight subjects; publish tail metrics (worst decile subjects)
& Subject scores; domain usage shares \\
\addlinespace
BIG-bench \citep{srivastava2023bigbench}
& Average across 204 tasks (normalised)
& Suite is long-tail by design; average is not ``typical usage''
& Treat as tail module; report score distribution; mix with main benchmark using $\pi$
& Task scores; estimate of tail query share $\pi$ \\
\addlinespace
HELM \citep{liang2022holistic}
& Scenario $\times$ metric matrix; aggregation varies
& Uniform scenario aggregation rarely matches deployment; stakes differ by scenario
& Deployment- and stakes-weight scenario aggregation
& Scenario metrics; scenario usage and stakes weights \\
\addlinespace
TruthfulQA \citep{lin2022truthfulqa}
& Targeted truthfulness failure rate
& Not average-case; measures misconception ``trap'' tail
& Report separately as tail risk; combine via mixture weights when needed
& Tail benchmark score; tail prevalence estimate \\
\addlinespace
MT-Bench \citep{zheng2023judging}
& Fixed prompt set, LLM-judge scoring
& Small fixed set is gap-uniform; may miss worst holes
& Expand/stratify prompts; reweight by deployment prompt distribution
& Prompt metadata; deployment prompt mix \\
\addlinespace
Chatbot Arena \citep{chiang2024chatbotarena}
& Crowdsourced prompts + human pairwise preferences
& Closer to user-weighted sampling, but reflects the Arena user population
& Reweight prompts/users to target deployment; stratify by domain/safety class
& Prompt dataset; target-population weights \\
\bottomrule
\end{tabular}
\end{table}

\end{document}